\theoremstyle{definition}
\newtheorem{theorem}{Theorem}
\newtheorem{lemma}[theorem]{Lemma}
\newtheorem{definition}[theorem]{Definition}
\theoremstyle{remark}
\newtheorem{claim}[theorem]{Claim}
\newtheorem{remark}[theorem]{Remark}
\newcommand{\opt}{\mathsf{opt}}
\newcommand{\pls}{N}						 
\newcommand{\plu}{\cal N}				 	
\newcommand{\gr}{G}							  
\newcommand{\gru}{\cal G}				    
\newcommand{\vrts}{V}						  
\newcommand{\edgs}{E}						 
\newcommand{\s}{\mathbf{S}}	
\newcommand{\str}{S}			   
\newcommand{\games}{\mathbf{\Gamma}}		  
\newcommand{\game}{\Gamma}			
\newcommand{\prot}{\Xi}						   
\def \load {\ell}										  
\def \total {d}										  
\newcommand{\cal}{\mathcal}
\def \opt {\mathrm{OPT}}
\def \o {\mathbf{\o}}
\def \NWA {Never-Walk-Alone }
\newcounter{note}[section]
\renewcommand{\thenote}{\thesection.\arabic{note}}
\newcommand{\vgnote}[1]{\refstepcounter{note}$\ll${\bf Vasilis~\thenote:} 
	{\sf #1}$\gg$\marginpar{\tiny\bf VG~\thenote}}
	\providecommand\BibTeX{{%
			\normalfont B\kern-0.5em{\scshape i\kern-0.25em b}\kern-0.8em\TeX}}}
\begin{document}
	
	\title{Resource-Aware Protocols for Network Cost-Sharing Games}
	
	\author{George Christodoulou}
	\affiliation{	\institution{University of Liverpool}	\country{UK}}
	\email{gchristo@liverpool.ac.uk}
	
	\author{Vasilis Gkatzelis}
	\affiliation{	\institution{Drexel University}	\country{USA}}
	\email{gkatz@drexel.edu}
	
	\author{Mohamad Latifian}
	\affiliation{	\institution{Sharif University of Technology}	\country{Iran}}
	\email{latifian@ce.sharif.edu}
	
	\author{Alkmini Sgouritsa}
	\affiliation{	\institution{University of Liverpool}	\country{UK}}
	\email{alkmini@liverpool.ac.uk}
	
	
	\begin{abstract}
  We study the extent to which decentralized cost-sharing protocols
  can achieve good price of anarchy (PoA) bounds in network
  cost-sharing games with $n$ agents. We focus on the model of
  resource-aware protocols, where the designer has prior access to the
  network structure and can also increase the total cost of an edge
  (overcharging), and we study classes of games with concave or convex
  cost functions. We first consider concave cost functions and our
  main result is a cost-sharing protocol for symmetric games on
  directed acyclic graphs that achieves a PoA of $2+\varepsilon$ for some arbitrary 
	small positive $\varepsilon$, which improves to $1+\varepsilon$ for games with at 
	least two players. We also achieve a PoA of 1 for series-parallel 
	graphs and show that no protocol can achieve a PoA better than $\Omega(\sqrt{n})$ for multicast
  games. We then also consider convex cost functions and prove
  analogous results for series-parallel networks and multicast games,
  as well as a lower bound of $\Omega(n)$ for the PoA on directed
  acyclic graphs without the use of overcharging.
\end{abstract}



	\begin{CCSXML}
		<ccs2012>
		<concept>
		<concept_id>10003752.10010070.10010099.10010104</concept_id>
		<concept_desc>Theory of computation~Quality of equilibria</concept_desc>
		<concept_significance>500</concept_significance>
		</concept>
		<concept>
		<concept_id>10003752.10010070.10010099.10010109</concept_id>
		<concept_desc>Theory of computation~Network games</concept_desc>
		<concept_significance>500</concept_significance>
		</concept>
		<concept>
		<concept_id>10003752.10010070.10010099.10010100</concept_id>
		<concept_desc>Theory of computation~Algorithmic game theory</concept_desc>
		<concept_significance>300</concept_significance>
		</concept>
		</ccs2012>
	\end{CCSXML}
	
	\ccsdesc[500]{Theory of computation~Quality of equilibria}
	\ccsdesc[500]{Theory of computation~Network games}
	\ccsdesc[300]{Theory of computation~Algorithmic game theory}
	

	\keywords{Price of anarchy,		Cost-sharing protocols,		Coordination mechanisms}
	\fancyhead{}
	
	\maketitle
	
	\section{Introduction}
In this paper we study the classic model of network cost-sharing games introduced by 
\citet{ADKTWR08} and \citet{CRV10}, where a set of $n$ agents need to use a path that connects 
a designated source to a sink in a given graph $G$. Each edge of the graph
is characterized by a cost function, and the cost of each edge needs to be distributed
among the agents that use it in their chosen paths. \citet{ADKTWR08} considered
cost functions that are constant, i.e., independent of the number of agents that use 
the edge, and assumed that the cost of each edge is shared \emph{equally}
among its users	(which is known as the \emph{Shapley cost-sharing protocol}). 
This gave rise to a, now, very well-studied class of games where each agent strategically 
chooses a path that minimizes her share of the total cost, anticipating the strategic 
choices of the other agents. It is well-known that the Nash equilibria of this game can 
have a high social cost, and a lot of subsequent work has focused on evaluating exactly 
how much greater this cost can be compared to the optimal social cost.

Deviating from this line of work which assumed that the cost-sharing
protocol for each edge is predetermined to be the Shapley protocol,
\citet{CRV10} instead approached the problem from the perspective of a
designer and asked a compelling question: \emph{can we design a better
  cost-sharing protocol?} In particular, if the designer of a network
wishes to minimize the social cost in equilibrium, what cost-sharing
protocol should she choose? Focusing on graphs with constant cost
functions, they proposed a list of desirable properties that the
cost-sharing protocols should satisfy, they provided characterizations
of these protocols, and proved tight bounds regarding their
performance. Following the same research agenda, 
\citet{vFH13} provided tight bounds for more general cost
functions, but for more restricted network structures, namely for
scheduling games.  Our work is motivated by the same question and the
main differences are two-fold: \emph{i)}~we consider more general
classes of cost functions than \cite{CRV10} and more general network
structures than \cite{vFH13}, 
and \emph{ii)} we
focus on \emph{resource-aware} protocols which were recently
introduced by \citet{CS16} and studied by \citet{CGS17}.

The class of resource-aware protocols provide a refinement regarding the amount of 
\emph{information} available to the protocol and they provide a 
 middle-ground between the two extreme information models that were studied in the past. Note that the
amount of information that is available to the protocol plays a central role. For instance, 
can the decisions of the protocol regarding how to distribute the cost of 
an edge depend on the structure of the graph $G$ beyond that edge? Can they depend
on the set of agents that are using the network at any given time? The more
information the protocol has access to regarding the state of the system, the 
greater the ability of the designer to reach efficient outcomes. 
Most of the prior work focused on either \emph{omniscient} protocols that have 
{\em full knowledge} of the instance (the structure of the graph and its cost functions
as well the set of agents that are using the system), or \emph{oblivious} protocols that
have {\em no knowledge} of the instance, except the set of users who use the corresponding
edge. Previous work has also referred to these protocols as ``non-uniform'' and ``uniform'',
respectively.


%
%

These assumptions lie at two extremes of the information spectrum.
The former applies to a very static, or centralized, system where
each edge always has access to up-to-date information regarding 
the users and the graph $G$. The latter, on the other hand, is very
pessimistic, assuming that the cost-sharing decisions of each edge
need to be oblivious to the state of the system. 
The resource-aware model strikes a balance between these two extremes
by assuming that the protocol of each edge knows graph $G$ and the edge 
cost functions, but is aware only of the set of users who chose to utilize 
that particular edge.
This model is more realistic than the oblivious one when the 
network does not change often, which is the case in 
many real systems. In that case, the structure of the graph can be used 
to inform the design of the cost-sharing protocol of each edge and this 
protocol needs only to be updated when the structure of the network is
changed.
Of course, the fact that a resource-aware protocol has additional
information compared to an oblivious one is only interesting if this
can lead to improved outcomes. Understanding the ways in which this
information can be useful is one of the central goals of this paper.

In designing cost-sharing protocols, the goal is to ensure that the worst-case 
equilibria of the induced game approximate the optimal social cost within some 
small factor. We measure the performance of these protocols using the worst-case price of anarchy 
(PoA) measure, i.e., the ratio of the social cost in the worst equilibrium 
over that in the optimal solution. Essentially, once the designer selects the protocol 
on each edge, then an adversary chooses the requested subset of agents 
so that the PoA of the induced game is maximized.

\paragraph{Overcharging.}
In accordance with some of the recent work on resource-aware protocols~\cite{CGS17},
we also consider protocols that use overcharging. This is in contrast to some of the 
prior work (e.g., \cite{CRV10},\cite{vFH13} and \cite{CS16}) which was restricted to
{\em budget-balanced} protocols, i.e., protocols such that the costs
of the users using some edge add up to {\em exactly} the cost of
the edge. Our protocols can instead choose to charge
the users additional costs in order to dissuade them from using some of the edges.
It is important to note that, once we introduce increased costs in our
protocols, we compare the performance of the equilibria in the induced
game with the {\em increased} costs to the original optimal solution
with the {\em initial} cost functions. Therefore, using this technique
requires a careful trade-off between the benefit of improved incentives
and the drawback of penalties suffered.

\subsection{Our Results}
In this paper we provide several upper and lower bounds for the price of anarchy (PoA)
of interesting classes of network cost-sharing games. We consider \emph{symmetric} 
games, where the source and sink of every agent is the same, as well as \emph{multicast} 
games, where the sink is the same for every agent, but the source may be different.
In terms of the graphs that we consider, we go beyond parallel link networks and
consider \emph{directed acyclic graphs} as well as \emph{series-parallel graphs}.
Finally, we consider different types of functions that describe the costs of the 
network edges, and we group our results based on whether these functions
are \emph{concave} or \emph{convex}.

It is for \textbf{concave cost functions} that we get the main result of the paper
(see Section~\ref{sec:concaveDAG}). That is, a resource-aware cost-sharing protocol, 
the \NWA protocol, which guarantees a PoA of $2+\varepsilon$ for symmetric games on 
arbitrary directed acyclic graphs and some arbitrarily small constant $\varepsilon>0$. 
In fact, for instances that involve more than one agent the PoA of this 
protocol drops to $1+\varepsilon$ for an arbitrarily small constant $\varepsilon$. We show that every 
game that this protocol gives rise to is guaranteed to possess a pure Nash equilibrium, 
and  \emph{every} such equilibrium corresponds to an optimal assignment of agents to paths. 
The success of this protocol in such a large class of instances crucially depends on 
the fact that the cost-sharing decisions depend on the structure of the graph, and we 
leverage this power in a novel way, using careful overcharging in order to avoid suboptimal 
equilibria despite the 
decentralized nature of the protocol. Apart from this result, we also provide an alternative 
protocol that achieves a PoA of 1 for all symmetric instances with series-parallel graphs 
and strictly concave costs, without using any overcharging; with a minor amount of overcharging, 
this protocol can also be adapted to get a PoA of $1+\varepsilon$ even for weakly concave costs (Section~\ref{sec:concave-SP}). 
We complement these positive results with two impossibility results for the class of (non-symmetric) 
multicast games, for which we show that no resource-aware protocol can achieve a PoA better than
$\Omega(n)$ without overcharging and $\Omega(\sqrt{n})$ even with overcharging (Section~\ref{sec:multicast}).

We then transition to \textbf{convex cost functions} in Section~\ref{sec:convex}. 
We show that for symmetric games a simple protocol can achieve a PoA of 1 for 
series-parallel graphs, but for directed acyclic graphs no budget-balanced protocol 
can achieve a PoA better than $\Omega(n)$. Also, for multicast games, we show that
no protocol can achieve a PoA better than $\Omega(\sqrt{n})$, even if it uses overcharging.


	\section{Related Work}

The papers that are most closely related to ours are that of \citet{CS16} 
and \citet{CGS17}, which were the first ones to study the design of
resource-aware protocols for network cost-sharing games. \citet{CS16}
considered constant cost functions, as in \citet{CRV10}, and they
showed that for outerplanar graphs there exists a resource-aware
protocol that performs better than any oblivious one. On the negative
side, they showed that for general graphs the best protocol performs
asymptotically the same with the best oblivious one. \citet{CGS17}
considered more general classes of cost functions beyond constant
ones, as in \citet{vFH13}, but were restricted to parallel link networks. 
In this paper we
combine the strengths of these two papers by studying general classes 
of cost functions for larger classes of networks beyond parallel links.

Apart from the aforementioned papers, there are are several others that
also focus on the design and analysis of cost-sharing protocols. \citet{HF14}
applied cost-sharing protocols to capacitated facility location games and,
among other results, showed that omniscient protocols can achieve efficiency
for convex cost functions and general strategy spaces.
\citet{MW13} studied multiple cost-sharing protocols in a model of utility 
maximization instead of cost-minimization. \citet{GMW14} provide a complete 
characterization of the space of oblivious cost-sharing protocols that always 
guarantee the existence of a pure Nash equilibrium and showed that it corresponds 
to the set of generalized weighted Shapley value protocols. Leveraging this
characterization, \citet{GKR16} analyzed this family of cost-sharing protocols 
and showed that the unweighted Shapley value achieves the optimal price of anarchy 
guarantees for network cost-sharing games. Also, \citet{HM11} 
studied the performance of several cost-sharing protocols in a
slightly modified setting, where each player declares a different
demand for each resource. 
\citet{CLS16} studied network design games with constant cost
functions under the Bayesian setting, where the position of the
players' sources on the graph is drawn from a distribution over all
vertices. They considered overcharging, where they could use any
non-budget-balanced policy under the restriction of preserving
budget-balance in all equilibria. 
 More recently~\citet{HHHS18} studied cost-sharing
in a model that imposes some constraints over the portions of the cost that
can be shared among the agents.

There is also a very close connection between network cost-sharing games and the 
very well-studied class of congestion games  (e.g., \cite{R73b}, \cite{M96}, \cite{MS96b}, 
\cite{AAE05}, \cite{GS07}, \cite{BGR10}, \cite{HK10}, \cite{HKM11}). In particular, a congestion game
can be interpreted as the game that arises from a network cost-sharing setting
if the costs of the resources are divided equally among the agents. For instance, 
a congestion game with cost functions that are polynomials of degree $d$ can
be cast as a network cost-sharing game with costs that are polynomials of degree
$d+1$ combined with the equal sharing protocol. 

The impact of cost-sharing methods on the quality of equilibria has
also been studied in other models: \citet{MS01} focused on
participation games, while \citet{M08} and \citet{MR09} studied queueing
games. Also, very closely related in spirit is previous work on
coordination mechanisms, beginning with the work of \citet{CKN09} and subsequently
in the papers of \cite{I+09, AJM08, C09, AH12, K13, C+13, CMP14, BIKM14}. Most work on
coordination mechanisms concerns scheduling games and how the price of
anarchy varies with the choice of local machine policies (i.e., the
order in which to process jobs assigned to the same machine). Some connections
between cost-sharing policies and coordination mechanisms are also provided in~\cite{CGV17}.


	\section{Preliminaries}
We study the performance of cost-sharing protocols on classes of network cost-sharing games.
A \emph{class} of network cost-sharing games, $\games$, is defined by a tuple $(\plu, \gru, \mathcal C, \prot)$, 
which comprises a universe of players $\plu$, a universe of graphs $\gru$, a universe of cost functions $\cal C$, 
and a cost-sharing protocol $\prot$ (formally defined later on). An \emph{instance} of a cost-sharing game $\game \in \games$ then consists of a set of players $\pls \subseteq \plu$, a graph $\gr = (\vrts, \edgs) \in \gru$ where each edge $e \in \edgs$ is assigned a cost function $c_e$ drawn from $\cal C$, and a cost-sharing protocol $\prot$. Each player $i \in \pls$ is associated with a source $s_i$ and a sink $t_i$ in the graph, and she needs to use a path in $\gr$ that connects them. For each edge $e$, the cost function $c_e(\load)$ is a non-decreasing function that satisfies $c_e(0)=0$, and indicates the cost of this edge when the \emph{load} on this edge, i.e., the number of players using it, is $\load$. 

The outcome of a game is a strategy profile $\s = (\str_1, \str_2, ... , \str_{n})$, where 
$\str_i$ is the path that player $i$ chose to connect her source and sink. For each edge $e \in \edgs$, let $\load_e(\s)$ be the number of players using this edge in their path, i.e., $\load_e(\s) = |\{i\in \pls: e \in \str_i\}|$. The cost of $e$ in this strategy profile is $c_e(\load_e(\s))$, and this cost needs to be covered by the players using this edge.
In this paper we design \emph{cost-sharing protocols}, i.e., 
protocols that decide how the cost of each edge will be distributed among 
its users. A cost-sharing protocol~$\prot$ defines at each strategy profile $\s$ a 
cost share $\xi_{ie}(\s)$ for each edge $e\in E$ and every agent $i\in \pls$ that uses this edge in $\s$, i.e., such that
$e\in \str_i$. For agents $i$ with $e \notin \str_i$ we have $\xi_{ie}(\s)=0$, so only the agents using an edge are responsible
for its costs. We denote the total cost-share of player $i$ in $\s$ as
$$\xi_i(\s) = \sum_{e \in E} \xi_{ie}(\s).$$

{\bf Pure Nash Equilibrium (PNE).} The goal of every user is to minimize her
cost-share, so different cost-sharing protocols lead to different classes 
of games and, hence, to possibly very different outcomes. The efficiency 
of a game, thus, crucially depends on the choice of the protocol. In evaluating
the performance of a cost-sharing protocol, we measure the quality of the
pure Nash equilibria that arise in the games that it induces. A 
strategy vector~$\s$ is a \emph{pure Nash equilibrium} (PNE) of a game $\game$ 
if for every player $i \in \pls$, and every other path $\str'_i$ with the same source and sink as $\str_i$, we have
\begin{equation*}
\xi_i(\s)~=~\xi_i(\str_i, \s_{-i}) ~~\le~~ \xi_i(\str'_i, \s_{-i}),
\end{equation*}
where $\s_{-i}$ denotes the vector of strategies for all players other
than $i$. In other words, in a PNE no player can decrease her cost share by 
unilaterally deviating from path $\str_i$ to $\str'_i$ if all the other players' 
choices remain fixed. In accordance with prior work, we restrict our attention 
to protocols that possess at least one PNE for every $\game \in \games$, which are 
called \emph{stable} protocols.

{\bf Price of Anarchy (PoA).} 
To evaluate the efficiency of a strategy profile $\s$, we use the total cost
$C(\s)=\sum_{e\in \edgs}c_e(\load_e(\s))$, and we quantify the performance
of the cost-sharing protocol using the price of anarchy measure. 
Given a cost-sharing protocol $\Xi$, the \emph{price of anarchy} (PoA)
of the induced class of games $\games=(\plu, \gru, \cal C, \Xi)$  
is the worst-case ratio of equilibrium cost to optimal cost over all 
games in $\games$. That is, if $Eq(\game)$ is the set of pure Nash equilibria 
of the game $\game$, and $F(\game)$ the set of all strategy profiles of $\game$, then
\[\text{PoA}(\games)=\sup_{\game \in \games} \frac{\max_{\s\in Eq(\game)}C(\s)}{\min_{\s^*\in F(\game)}C(\s^*)}.\]

{\bf Budget-balance and Overcharging.} We say that a cost-sharing protocol is \emph{budget-balanced} if for every
edge $e$ and profile $\s$ we have $\sum_{i\in \pls}\xi_{ie}(\s)=c_e(\load_e(\s))$, i.e., the cost shares that the 
protocol distributes to the players using an edge adds up to exactly the cost of that edge.
Apart from budget-balanced mechanisms, we also consider mechanisms that may use overcharging.
These mechanisms define for each edge $e$ a cost function $\hat{c}$ such that $\hat{c}_e(\load)> c_e(\load)$ for 
some values of $\load$. As a result, the social cost of a given strategy profile $\s$ may be increased from $C(\s)$ to 
$\hat{C}(\s)=\sum_{e\in \edgs}\hat{c}_e(\load_e(\s))$. In these mechanisms, we measure the quality of the equilibria
using the new costs, but we compare their performance to the optimal solutions
based on the original cost functions:
\[\text{PoA}(\games)=\sup_{\game \in \games} \frac{\max_{\s\in Eq(\game)}\hat{C}(\s)}{\min_{\s^*\in F(\game)}C(\s^*)}.\]

{\bf Classes of Games.} We evaluate the price of anarchy for several classes of games that may differ
in the strategies of the players, the structure of the graph, or the cost functions used for the edges. 
We call a cost-sharing game \emph{symmetric} if all the players have the same source $s$ and sink $t$, 
and hence the same set of strategies (all the paths from $s$ to $t$). Most of this paper focuses 
on symmetric cost-sharing games, but we also consider games where players have different sources and the 
same sink, known as \emph{multicast} cost-sharing games. In terms of classes of graphs, the two main ones
that we consider are \emph{directed acyclic graphs} (DAGs) and \emph{series-parallel graphs}. The former
include all directed graphs that do not have a cycle, while the latter are defined recursively using
two simple composition operations (see Section~\ref{sec:series-parallel-prelims} for a formal definition).
When it comes to classes of cost functions, the two main ones that we consider are convex functions (exhibiting 
non-decreasing marginal costs), or concave cost functions (exhibiting non-increasing marginal costs). 

{\bf Informational Assumptions.} Throughout this paper we focus on the design of 
\emph{resource-aware} cost-sharing protocols. The decisions of these protocols regarding 
how to share the cost of an edge $e$ can depend on the set of players using the edge, the 
structure of the network, and the cost functions of its machines, but it cannot depend 
on the set of agents that are participating in the game, but are not using edge $e$. This
captures the intuition that these cost-sharing mechanisms are decentralized and are not
aware of the set of players that is using the network at any given time. On the other hand,
since the structure of the network and the characteristics of its edges are not expected to
change radically, these protocols can be designed in a way that leverages this information.
This is in contrast to the class of \emph{oblivious} protocols that are unaware of anything
other than the set of players that use the edge or the class of \emph{omniscient} protocols
that are assumed to have access to all the information regarding the instance. In this paper 
we design resource-aware protocols that take advantage of the additional
information that they have access to (relative to oblivious ones) in order to achieve 
good price of anarchy bounds in general classes of instances.

{\bf Global Ordering.} Some of our mechanisms, as well as many protocols in the related work 
(e.g., \cite{Mou99,CGS17}), decide how to distribute the cost among the agents for a class
of games $\games =(\plu, \gru, \cal C, \Xi)$ by using a global ordering over the universe 
$\plu$ of players. Then, for any instance $\game \in \games$ from this class with a set of
players $\pls \subseteq \plu$, this global ordering also implies an ordering of the agents
in $\pls$ that are actually participating. Based on a global ordering, we use $h_e(\s)$ to
denote the highest priority agent using edge $e$ in strategy profile $\s$, and we refer to
her as the \emph{leader} on this edge.

	\section{Directed Acyclic Graphs with Concave Cost Functions}\label{sec:concaveDAG}
In this section we consider a directed acyclic graph $G$ with two designated 
vertices $s$ and $t$, and concave cost functions on the edges. We
study symmetric $n$-player games where each player strives to establish 
a connection from $s$ to $t$. In Section~\ref{sec:leader-based} we define 
a family of protocols, which we call {\em leader-based protocols}, and in 
Section \ref{sec:NWA-protocol} we present our main result which is a
leader-based protocol that uses overcharging and achieves a PoA of
almost $2$.

We also provide some evidence that overcharging is needed in order to
achieve constant PoA, if one focuses on a subset of leader-based
protocols that we call {\em static-share leader-based protocols}; we
show (in Section~\ref{sec:StaticShare}) that, even for strictly concave
cost functions, no such budget-balanced protocol can achieve a PoA better 
than ${\tilde{\Omega}}(\sqrt{n})$. This is in contrast to the case of 
parallel links where prior work has shown that such a protocol can achieve 
a PoA of $1$ for strictly concave cost functions~\cite{CGS17}. We leave as 
an open question the existence of a budget-balanced leader-based (or any 
resource-aware) protocol with constant PoA.

\subsection{Structure of the Optimal Network}
We first observe a property of the optimal solution, i.e., the assignment of users to paths 
that minimizes the total cost, for symmetric games. Specifically, we show that when the cost
functions are concave there always exists an optimal solution where all the users are assigned
to the same path connecting the source to the sink.

\begin{lemma}
	\label{lem:optPath}
	Let $G$ be a graph with a designated source $s$ and sink $t$, and with concave cost functions on the edges.
	In any symmetric instance where all the users need to connect from $s$ to $t$, there exists an optimal solution 
	which uses a single path from $s$ to $t$ for all the users.
\end{lemma}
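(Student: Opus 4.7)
The plan is to exploit the subadditivity-like inequality $c_e(\ell) \ge (\ell/n)\, c_e(n)$ for $\ell \le n$, which follows from concavity and $c_e(0)=0$, and then apply a simple averaging argument over the paths used by an optimum to extract a single path that is at least as good.

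First, I would fix an arbitrary optimal solution $\s^*$ in which player $i$ is assigned a path $P_i$ from $s$ to $t$ (with repetitions across players allowed), and write the edge loads as $\ell_e^* = |\{i \in N : e \in P_i\}|$. The key analytic observation is that, because $c_e$ is concave with $c_e(0)=0$, the ratio $c_e(\ell)/\ell$ is non-increasing, which gives the pointwise bound $c_e(\ell_e^*) \ge (\ell_e^*/n)\, c_e(n)$ for every edge $e$ used by at most $n$ players. Summing over edges and swapping the order of summation then yields
\[
\mathrm{OPT} \;=\; \sum_{e \in E} c_e(\ell_e^*) \;\ge\; \frac{1}{n} \sum_{e \in E} \ell_e^*\, c_e(n) \;=\; \frac{1}{n} \sum_{i \in N} \sum_{e \in P_i} c_e(n).
\]

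Next, I would interpret the rightmost expression as the arithmetic mean, over the $n$ players, of the quantities $\sum_{e \in P_i} c_e(n)$. By the pigeonhole/averaging principle, there exists some player $i^*$ whose path $P^{*} := P_{i^*}$ satisfies $\sum_{e \in P^{*}} c_e(n) \le \mathrm{OPT}$. Routing all $n$ players along $P^{*}$ is a feasible symmetric profile whose total cost is exactly $\sum_{e \in P^{*}} c_e(n)$, which by the above is at most $\mathrm{OPT}$; since $\mathrm{OPT}$ is the minimum achievable cost, equality holds, and so the all-players-on-$P^{*}$ profile is itself optimal.

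Since each player is already assigned to a single $s$--$t$ path in $\s^{*}$, no flow decomposition is needed, and the DAG structure is only used implicitly through the existence of these $n$ paths; the argument would in fact go through on any graph with concave edge costs. The only step that requires care is the concavity-based inequality $c_e(\ell) \ge (\ell/n)\, c_e(n)$, which is the main (and essentially the only) obstacle: once this normalization is justified from concavity together with $c_e(0)=0$, the remainder is an averaging computation.
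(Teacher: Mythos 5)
Your proof is correct, but it takes a genuinely different route from the paper's. The paper argues by local exchange: it takes an optimal profile using at least two paths $p_1,p_2$, compares the sums of marginal costs $\sum_{e}(c_e(\load_e)-c_e(\load_e-1))$ along the two paths, shows that shifting one player from the path with the larger marginal sum to the other does not increase the total cost, verifies that the comparison of marginal sums is preserved after the shift (again using concavity), and iterates until all players are merged onto one path. You instead give a one-shot global averaging argument: the normalization $c_e(\load)\ge(\load/n)\,c_e(n)$, which indeed follows from non-increasing marginals together with $c_e(0)=0$ (the ratio $c_e(\load)/\load$ is the average of the first $\load$ marginals, which dominates the average of the first $n$), yields $\mathrm{OPT}\ge\frac1n\sum_i\sum_{e\in P_i}c_e(n)$, and picking the player whose path is cheapest under the weights $c_e(n)$ gives a single path of cost at most $\mathrm{OPT}$. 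Your argument is shorter, avoids the iteration and the invariant-preservation step, and has the added benefit of directly establishing the observation the paper makes immediately after the lemma, namely that the optimal single path is exactly a shortest $s$--$t$ path under edge weights $c_e(n)$; like the paper's proof, it uses nothing about the graph beyond the existence of the $n$ paths. The only step you leave slightly implicit is the justification of $c_e(\load)\ge(\load/n)\,c_e(n)$ from the paper's discrete notion of concavity, but that is routine and you correctly flag it as the one point needing care.
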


Lemma~\ref{lem:optPath} implies that for any graph $G$ with source $s$ and sink $t$, and any number of agents $n$, there exists at least one path
from $s$ to $t$ such that assigning all $n$ players to that path minimizes the total cost. The following definition provides us with a
way to consistently choose one of these paths, which will be useful in defining our protocols.
\begin{definition}($\opt(\load)$).
	\label{def:concaveOptPath}
	For a given symmetric instance, let $\opt(\load)$ be an optimal path from $s$ to $t$ when the total load, i.e., the
	total number of agents in the system, is $\load$. When there are multiple such paths, $\opt(\load)$ breaks ties arbitrarily 
	but consistently.  
\end{definition}

\subsection{Leader-based Protocols}
\label{sec:leader-based}
In this work we consider a class of protocols that we call {\em leader-based protocols}. The idea behind these protocols is that for
each edge we (consistently) identify a single player to be the {\em leader} and charge that player some load-dependent amount $\psi_e(\load)$,
while the rest of the players share the remaining cost equally. We study resource-aware protocols, so the choice of the $\psi_e(\load)$
values may depend on the structure of the graph and the cost functions of all the edges.

\begin{definition}\label{def:leader-based}(Leader-based Protocol).
Given a class of games $\games$, a {\em leader-based protocol} uses a predetermined priority ordering $\pi$ over the universe of players $\plu$. For any strategy profile $\s$, the protocol identifies a {\em leader} on each edge as the highest priority player (according to $\pi$) using edge $e$, denoted as $h_e(\s)$. The protocol further defines a value $\psi_e(\load)\leq c_e(\load)$ for each edge $e$ and load $\load$. Then, for each strategy profile $\s$, the cost-share of player $i$ for using edge $e$ is 
	
	\begin{equation*}
	\xi_{ie}(\s) =
	\begin{cases}
	\psi_e(\load_e(\s)) & \text{if } i=h_e(\s)\\
	\frac{c_e(\load_e(\s))-\psi_e(\load_e(\s))}{\load_e(\s)-1} & \text{otherwise}.
	\end{cases} 
	\end{equation*}
\end{definition}

We remark that the protocol used in \cite{CGS17} to achieve a PoA of 1 for parallel-link graphs lies in the class of leader-based protocols, with $\psi_e(\load_e(\s))$ being a fixed value when $e \in \opt(\load_e(\s))$ and $\psi_e(\load_e(\s))=c_e(\load_e(\s))$ when $e \notin \opt(\load_e(\s))$. We call these protocols {\em static-share leader-based protocols} and show in Section~\ref{sec:StaticShare} that their PoA is $\Omega\left(\frac{\sqrt{n}}{\log^2n}\right)$ for DAGs. Note that $\opt(\load_e(\s))$ corresponds to the path that the optimal solution would use if the total load in the system was $\load_e(\s)$, and a resource-aware protocol has access to this information since it depends only on the graph and the edge costs; not on the \emph{actual} load in the system, which may be different than $\load_e(\s)$.

\subsection{An Almost Efficient Leader-based Protocol with Overcharging}\label{sec:NWA-protocol}
In this section we design a protocol that uses overcharging and has PoA $=2+\varepsilon$ for an arbitrarily small constant $\varepsilon>0$; in fact, for instances with $n>1$ users the PoA becomes $1+\varepsilon$. This protocol lies in the family of leader-based protocols but it is not a static-share leader-based protocol. The main idea behind this protocol is that the cost share that the leader pays for each edge decreases as the cost shares of the other players increase. The cost share of the leader is infinitesimal when he is not the only one using an edge and therefore he has an incentive to never be alone at any edge; for this reason, we call this protocol the {\em \NWA} protocol. 
  
Our protocol uses a property of DAGs, according to which we can assign weights on the edges such that for \emph{any} two paths with the same endpoints the sum of their edges' weights are equal. Those weights are used in order to guarantee that the charges of the leader are positive, and the equality between the weights of alternative paths is used in our main theorem in order to compare the charges on those paths.

\begin{lemma}\label{lem:weights}
	Consider a directed acyclic graph $G=(V, E)$. We can assign to each edge $e\in E$ an integer weight $w_e > 0$, and to each vertex $v\in V$ a value $\omega_v \geq 0$, such that for any two vertices $u$ and $v$ that are connected through a path $p$ in $G$, we have 
	$\sum_{e \in p} w_e  = \omega_v-\omega_u.$
\end{lemma}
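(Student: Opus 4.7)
The plan is to exploit the DAG structure by assigning a monotone integer-valued ``potential'' $\omega_v$ to each vertex, and then defining the edge weights as potential differences, so that the path-sum identity becomes an immediate telescoping statement. Since $G$ is acyclic, there exists a topological ordering $v_1, v_2, \ldots, v_{|V|}$ of $V$ in which every edge $(u,v)\in E$ has $u$ preceding $v$. I will use this ordering to construct the potentials and hence the weights.

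Concretely, I would set $\omega_{v_i} := i-1$, which ensures $\omega_v \geq 0$ for all $v$ and $\omega_v > \omega_u$ whenever $(u,v)\in E$. Then for each edge $e=(u,v)$ I define $w_e := \omega_v - \omega_u$. Since the topological positions of $u$ and $v$ differ by at least one and both are integers, $w_e$ is a positive integer, as required. Now for any two vertices $u,v$ connected by a path $p = u_0 \to u_1 \to \cdots \to u_k$ with $u_0=u$ and $u_k=v$, summing the weights along $p$ gives
$$\sum_{e\in p} w_e \;=\; \sum_{i=1}^{k} \bigl(\omega_{u_i} - \omega_{u_{i-1}}\bigr) \;=\; \omega_{u_k} - \omega_{u_0} \;=\; \omega_v - \omega_u,$$
by telescoping, which is exactly the claimed identity.

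There is not really a hard step here: the structural content of the lemma is simply that a DAG admits a topological order, and once vertex potentials $\omega_v$ are chosen consistently with that order, defining edge weights as potential differences makes path independence automatic. The only minor care is to pick potentials that simultaneously satisfy $\omega_v \geq 0$ and $w_e = \omega_v - \omega_u$ positive integer; the topological-rank assignment above discharges both requirements in one stroke, and note in particular that the identity is independent of the chosen path $p$, so the lemma implicitly also yields that any two parallel paths between the same endpoints have equal total weight, which is the property that will be invoked in the subsequent analysis.
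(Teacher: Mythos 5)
Your proof is correct and follows essentially the same route as the paper: topologically sort the DAG, set $\omega_v$ to the vertex's rank in that ordering, define $w_e = \omega_v - \omega_u$ as the (positive integer) potential difference, and obtain the path-sum identity by telescoping. The only cosmetic difference is the choice of $i-1$ versus $i$ for the rank, which is immaterial.
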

\begin{proof}
Since $G$ is a DAG, we can topologically sort the vertices in $V$ so that all the edges are directed from left to right. We assign to each vertex $v$ a value $\omega_v$ equal to its position in the sorted ordering, and to each edge $e=(u,v)$ we assign the integer weight $w_e=\omega_v-\omega_u$. Since the vertices are topologically sorted, for every edge $(u,v)$ we have $\omega_u < \omega_v$, which implies that $w_e>0$ for every edge $e$. In other words, the weight of each edge corresponds to the number of vertices ahead in the topological sorting it points to. It is then easy to verify that for any two vertices, the sum of the weights of the edges for any path that connects them will be exactly their ``distance'' in this topological ordering.
\end{proof}

\textbf{\NWA Protocol.} Let $\opt(\total)$ be an optimal path from $s$ to $t$ when the total load is $\total$ (according to Definition~\ref{def:concaveOptPath}), let $C>2\sum_e c_e(|\plu|)$ be an arbitrarily large number greater than the total cost that may ever appear, and let $\epsilon>0$ be an arbitrarily small value satisfying $\epsilon < \min_{e'}\{c_{e'}(1)\}/\sum_e w_e$, where $w_e$ is the weight of each edge $e$ implied by Lemma~\ref{lem:weights}. The \NWA protocol defines a new cost function $\hat{c}_e$ for every edge $e$, satisfying $\hat{c}_e(\load)\geq c_e(\load)$ for any load $\load$, as follows:
\[
\hat{c}_e(\load) =
\begin{cases}
2c_e(1) & \text{if } \load = 1\\
2 (\load-1)c_e(\load) +\epsilon_e(2c_e(\load)) & \text{if } e \notin \opt(\load) \text{ and } \load \neq 1\\
c_e(\load) + \epsilon_e\left(\frac{c_e(\load)}{\load -1}\right) & \text{otherwise.}
\end{cases}\,,
\qquad\quad \text{where}\quad \epsilon_e(x) = \frac{w_eC - x}{C}\epsilon   \,,
\]

Then, in order to decide how to share these new (overcharged) costs, the protocol considers a global ordering $\pi$ over the universe of players, and for any strategy profile $\s$ it identifies a {\em leader} on each edge, as the highest priority player $h_e(\s)$ using edge $e$. If there are at least two players using an edge, then the cost of the leader is the second term of the costs defined in cases 2 and 3 above, and the rest of the players equally share the remaining cost, i.e., the first term. In other words, for each strategy profile $\s$, each player $i$ using edge $e$ is charged:
\[
\xi_{ie}(\s) =
\begin{cases}
\zeta_{e}(\load_e(\s)) & \text{if } i \neq h_e(\s) \text{ or } \load_e(\s) = 1\\
\epsilon_e(\zeta_{e}(\load_e(\s))) & \text{otherwise,}
\end{cases}\,,
\]

\[
\text{where}\quad\zeta_{e}(\load) =
\begin{cases}
2 c_e(\load) & \text{if } e \notin \opt(\load) \text{ or } \load = 1\\
\frac{c_e(\load)}{\load-1} & \text{otherwise}
\end{cases}\,.
\]

Note that $C$ is used to guarantee that the leader does not pay a negative cost and $\epsilon$ is used to keep that overcharging arbitrarily small\footnote{We remark that for any load $\load>1$ and any $e\in \opt(\load)$ we can avoid even this small overcharging of $\epsilon\left(\frac{c_e(\load)}{\load -1}\right)$ by charging all players but the leader $\zeta_e(\load)=\frac{c_e(\load)-w_e\epsilon}{\load -1 -\epsilon/C}$ instead of $\frac{c_e(\load)}{\load-1}$ and the leader is still charged $\epsilon_e(\zeta_{e}(\load_e(\s)))$. For the sake of simplicity we keep this small overcharging.} and to guarantee that the highest priority player has an incentive not to be alone in any edge, unless the total number of players in the system is $1$, as stated in the following lemma.

\begin{lemma}
	\label{alone}
	For every game with concave cost functions and at least two players, when we use the \NWA protocol there is no Nash equilibrium $\s$ where the player $h$ with the highest priority according to $\pi$, among all active agents, is the only user of some edge.  
\end{lemma}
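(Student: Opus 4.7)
The plan is a direct contradiction argument. Suppose $\s$ is a PNE in which the globally highest-priority active agent $h$ is alone on some edge $e_0\in \str_h$; I will exhibit a strictly improving unilateral deviation for $h$. Because the game is symmetric on $G$ and has at least two players, there exists another active agent $j\neq h$ with some $s$-$t$ path $\str_j$, and necessarily $e_0\notin \str_j$ (otherwise $j$ would also lie on $e_0$, contradicting that $h$ is alone there). The candidate deviation is for $h$ to abandon $\str_h$ and copy $\str_j$. Under this deviation, every edge $e'\in \str_j$ carries load at least two (containing both $h$ and $j$), and since $h$ has top priority among \emph{all} active agents, $h$ is automatically the leader on each such edge and is charged only the discounted leader share $\epsilon_{e'}(\zeta_{e'}(\load_{e'}))$.

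Next I would upper-bound $h$'s post-deviation total cost. Expanding $\epsilon_{e'}(x)=\frac{w_{e'}C-x}{C}\epsilon$, and using (i) the positive integrality of the weights $w_e$ from Lemma~\ref{lem:weights} (so $w_{e'}C\geq C$) together with (ii) the bound $\zeta_e(\load)\leq 2c_e(\load)\leq 2c_e(|\plu|)<C$ guaranteed by the protocol's choice $C>2\sum_e c_e(|\plu|)$, each leader share is nonnegative and at most $w_{e'}\epsilon$. Summing over $\str_j$, $h$'s new total cost is at most $\epsilon\sum_{e\in E}w_e$, which by the choice $\epsilon<\min_{e'}c_{e'}(1)/\sum_e w_e$ is strictly less than $\min_{e'}c_{e'}(1)$.

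Finally I would lower-bound $h$'s cost in $\s$. On $e_0$, where $h$ is alone, the $\load_{e_0}(\s)=1$ branch of the protocol applies and gives $\xi_{h e_0}(\s)=\zeta_{e_0}(1)=2c_{e_0}(1)\geq 2\min_{e'}c_{e'}(1)$. So $h$'s charge on the single edge $e_0$ already exceeds his \emph{total} cost after deviating to $\str_j$, contradicting the Nash property. The only delicate point is confirming that the leader shares $\epsilon_e(\zeta_e(\cdot))$ remain nonnegative throughout the range of loads that can arise, which is exactly what the generous choice of the constant $C$ is designed to secure; this is the one place where the bound $C>2\sum_e c_e(|\plu|)$ (rather than any smaller constant) enters in a nontrivial way.
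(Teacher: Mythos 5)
Your proof is correct and follows essentially the same route as the paper's: compare the cost of being alone on an edge (at least the overcharged $2c_{e_0}(1)\geq \min_{e'}c_{e'}(1)$) against the cost of copying another active player's path, where $h$ becomes the leader everywhere and pays at most $\sum_e \epsilon_e(0)=\epsilon\sum_e w_e<\min_{e'}c_{e'}(1)$ by the choice of $\epsilon$. You simply spell out the deviation and the nonnegativity of the leader shares in more detail than the paper does.
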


\begin{proof}
If player $h$ chooses a path where he is alone in some edge, then he pays at least $\min_{e'}\{c_{e'}(1)\}$. On the other hand, if this player chooses a path where he shares every edge with some agent, he is charged at most $\sum_e \epsilon_e(0) \geq \sum_e \epsilon_e(x)$, for any $x\geq 0$. Given the way we restricted $\epsilon$ to satisfy $\epsilon < \min_{e'}\{c_{e'}(1)\}/\sum_e w_e$, we have $\min_{e'}\{c_{e'}(1)\}>m \epsilon_e(0)$, so $h$ always prefers to share. Therefore, since he always has the option of sharing, e.g., by just following the path of some other player, a profile $\s$ where he is not sharing will never be a Nash equilibrium.
\end{proof}

A crucial subtlety in order to prove our main theorem is to guarantee
that no ties appear in the charges of alternative paths. We assume
that the mechanism first applies some arbitrarily small overcharging
using the following lemma (whose proof can be found in the appendix)
and then uses these cost functions in the protocol definition.

\begin{lemma}
	\label{lem:BreakTies}
	We can always increase the cost functions by adding arbitrarily small constants in a way that for any two vertices $u,v$, any two paths $p_1$ and $p_2$ from $u$ to $v$, and any strategy profiles $\s_1$, $\s_2$, we have 
	\[\sum_{e\in p_1}\zeta_{e}(\load_e(\s_1)) \neq \sum_{e\in p_2}\zeta_{e}(\load_e(\s_2)).\]
\end{lemma}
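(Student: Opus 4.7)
The plan is to establish the lemma by a standard genericity (hyperplane-avoidance) argument: the perturbations of the cost functions that cause an unwanted tie form a measure-zero subset of the perturbation space, so an arbitrarily small perturbation avoiding all of them can be chosen. Since the graph $G=(V,E)$ and the player universe $\plu$ are both finite, there are only finitely many quadruples $(p_1,p_2,\s_1,\s_2)$ to check, and for each one the identity in question is a \emph{linear} equation in the values $\{c_e(\load)\}_{e\in E,\,1\le\load\le|\plu|}$ with rational coefficients drawn from $\{2,\,1/(\load-1)\}$, depending on whether the edge lies in $\opt(\load)$ or not.

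Before doing the tie-breaking perturbation itself, I would first apply a preliminary (and also arbitrarily small) perturbation that makes $\opt(\load)$ uniquely optimal with a strictly positive margin for every load $\load\in\{1,\dots,|\plu|\}$. After this step, the map $\load\mapsto\opt(\load)$ is stable under any sufficiently small further perturbation, which pins down the coefficient $\alpha_{e,\load}\in\{2,\,1/(\load-1)\}$ in the expression $\zeta_e(\load)=\alpha_{e,\load}c_e(\load)$. With these coefficients locked in, each of the finitely many equalities
\[
\sum_{e\in p_1}\zeta_e(\load_e(\s_1))\;=\;\sum_{e\in p_2}\zeta_e(\load_e(\s_2))
\]
either reduces to a formal identity --- which happens precisely when $p_1=p_2$ and $\load_e(\s_1)=\load_e(\s_2)$ for every $e\in p_1$, so that the two sides are literally the same sum and fall outside the intended scope of the lemma --- or it cuts out a proper hyperplane in the parameter space of $\{c_e(\load)\}$.

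I would then choose the secondary perturbation $(\delta_{e,\load})$ to lie outside the finite union of these proper hyperplanes; since that union has Lebesgue measure zero in every neighbourhood of the origin, such a $\delta$ exists with $\max_{e,\load}|\delta_{e,\load}|$ as small as desired. A convenient concrete choice is to pick $\delta$ so that the perturbed values $\{c_e(\load)+\delta_{e,\load}\}$ are linearly independent over $\mathbb{Q}$: every nontrivial rational linear combination of them is then nonzero, and the desired inequality follows immediately. The main obstacle is compatibility with the rest of the construction. I would need to ensure the perturbation remains small enough to (i) preserve non-negativity, monotonicity, and concavity of each $c_e$ with $c_e(0)=0$, (ii) leave $\opt(\load)$ unchanged for every $\load$ (guaranteed by the preliminary margin step), and (iii) keep the parameter inequalities underlying the \NWA protocol intact --- notably $\epsilon<\min_{e'}c_{e'}(1)/\sum_e w_e$, which is actually strengthened by a cost increase. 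Once the magnitude of $\delta$ is fixed below the minimum of these slack thresholds, all prior arguments about the protocol apply unchanged, and the tie-breaking conclusion of the lemma holds.
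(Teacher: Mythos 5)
Your argument is correct, but it takes a genuinely different route from the paper. The paper gives an \emph{explicit} perturbation based on decimal expansions: it first rounds every $c_e(\load)$ up to $r$ decimal places and then adds to each edge $e_i$ an increment $\plu!\,10^{-(r+\plu! i)}$ whose digits occupy a dedicated block of positions, so that $\sum_{e\in p}\zeta_e(\load_e(\s))$ carries a non-zero ``signature'' exactly in the blocks of the edges of $p$; since an edge appears at most once per path, two sums can coincide only if the paths have the same edge set. You instead run a non-constructive genericity argument: finitely many potential ties, each a linear equation in the parameters $\{c_e(\load)\}$ that is either a formal identity or a proper hyperplane, so a perturbation avoiding the measure-zero union exists (e.g.\ forcing $\mathbb{Q}$-linear independence). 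Each approach buys something. Yours is strictly stronger relative to the literal statement --- it also separates $p_1=p_2$ under different load vectors, which the paper's edge-set argument does not (though only the distinct-path case is used, in Lemma~\ref{lem:overNoTwoPaths}) --- and your preliminary step of fixing $\opt(\load)$ with a positive margin is a genuinely useful observation, since otherwise the coefficients $\alpha_{e,\load}$ depend on the perturbation and the equations are not linear in it. The paper's construction is explicit and makes the magnitude of the perturbation transparent. One caveat on your side conditions: preserving concavity is not merely a matter of making the perturbation ``small enough'' --- if some $c_e$ is only weakly concave (has a linear segment), a generic small perturbation in an arbitrary direction can break concavity. This is repairable within your own framing: the concave, monotone, non-negative sequences form a convex cone with non-empty interior containing the original point, so its intersection with any small ball has positive Lebesgue measure while the union of bad hyperplanes has measure zero; alternatively, restricting to perturbations that add a single constant $\delta_e$ to $c_e(\load)$ for all $\load\ge 1$ (as the paper effectively does) preserves concavity automatically and still kills every tie between paths with distinct edge sets.
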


\begin{lemma}
	\label{lem:overStability}
	The \NWA protocol is stable, i.e., it always induces games that possess a pure Nash equilibrium.
\end{lemma}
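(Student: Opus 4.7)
My plan is to exhibit an explicit pure Nash equilibrium rather than appeal to a potential function, splitting on the number of players. The single-player case is trivial: if $n=1$, the unique player is alone on every edge she uses and pays $\zeta_e(1)=2c_e(1)$ regardless of leader status, so choosing any cheapest path under these weights is a best response, hence a PNE.

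For $n\ge 2$, I claim that the profile $\s^*$ in which every player routes along $p^* = \opt(n)$ is a PNE. Under $\s^*$, the globally highest-priority agent $h$ becomes the leader on every edge of $p^*$, paying the infinitesimal $\epsilon_e(\zeta_e(n))$ there, while each of the remaining $n-1$ agents pays $\zeta_e(n)=c_e(n)/(n-1)$ per edge. I then check the equilibrium condition against two kinds of unilateral deviations.

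For the leader $h$, any alternative path $p'\neq p^*$ forces her to be alone on the non-empty set $p'\setminus p^*$ (since the other players remain on $p^*$), costing her at least $2\min_{e'}c_{e'}(1)$ on those edges, whereas her potential savings on $p^*\setminus p'$ are bounded above by $\epsilon\sum_e w_e < \min_{e'}c_{e'}(1)$ by the same restriction on $\epsilon$ exploited in the proof of Lemma~\ref{alone}. Hence $h$ strictly prefers $p^*$.

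For a non-leader $i$ deviating to some $p'$, the leader $h$ remains in place and still leads every edge of $p^*\cap p'$ at load $n$, so $i$ continues to pay $c_e(n)/(n-1)$ there; on each $e \in p'\setminus p^*$ she is alone and pays $2c_e(1)$. Her net change in cost is therefore
\[
\sum_{e\in p'\setminus p^*} 2c_e(1) \;-\; \sum_{e\in p^*\setminus p'} \frac{c_e(n)}{n-1},
\]
which I must show is non-negative. The plan is to combine (i) optimality of $p^*=\opt(n)$, which gives $\sum_{e\in p^*\setminus p'} c_e(n)\leq \sum_{e\in p'\setminus p^*} c_e(n)$, with (ii) the chord inequality for concave $c_e$ with $c_e(0)=0$, which yields the per-edge bound $c_e(n)/(n-1)\leq 2c_e(1)$ for $n\ge 2$. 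The main subtlety is chaining these two estimates across the asymmetric sets $p^*\setminus p'$ and $p'\setminus p^*$: optimality lives at load $n$ on both sides, while the chord bound must convert the remaining load-$n$ sum into a load-$1$ sum on the opposite side; this is precisely where both the concavity assumption and the specific choice $p^*=\opt(n)$ are essential. Once this step is handled, stability follows.
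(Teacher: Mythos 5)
Your proposal is correct and follows essentially the same route as the paper: both exhibit the profile where all $n$ players use $\opt(n)$ as a pure Nash equilibrium and bound any unilateral deviation by chaining the optimality of $\opt(n)$ at load $n$ with the concavity chord bound $c_e(n)/(n-1)\le \frac{n}{n-1}c_e(1)\le 2c_e(1)$ (the paper phrases this over matched edge-disjoint sub-paths with common endpoints, you over the symmetric difference $p'\setminus p^*$ versus $p^*\setminus p'$, which is equivalent after cancelling common edges). Your separate treatment of the leader---comparing her total $\epsilon$-charge of at most $\epsilon\sum_e w_e<\min_{e'}c_{e'}(1)$ against the $2\min_{e'}c_{e'}(1)$ she would incur by walking alone on $p'\setminus p^*$---is a minor refinement that is in fact slightly more careful than the paper's claim that the leader's per-edge share is dominated by the non-leaders' share.
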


\begin{proof}
	We show that for any number of agents, $n$, the optimal strategy profile $\s^*$ in which all of the load $\load=n$ is assigned to the optimal path $\opt(\load)$ is a pure Nash equilibrium. Consider any other path $p$ and let $p'$ and $\opt'$ be any two edge-disjoint sub-paths of $p$ and $\opt(\load)$, respectively, with the same endpoints $u$ and $v$, as in Figure~\ref{fig:2paths}. 
		
	\begin{figure}[h]
		\centering
		\tikzset{middlearrow/.style={
				decoration={markings,
					mark= at position 0.5 with {\arrow{#1}} ,
				},
				postaction={decorate}
			}
		}
		\begin{tikzpicture}[xscale=0.7,yscale=0.7]

		\node[draw,circle,thick,fill=black,inner sep=1.4pt,radius=0.25pt,label={[label distance=-1pt]below:{$s$}}] at (1,0) (s) {} ;
		\node[draw,circle,thick,fill=black,inner sep=1.4pt,radius=0.25pt,label={[label distance=-1pt]below:{$u$}}] at (4,0) (u) {} ;
		\node[draw,circle,thick,fill=black,inner sep=1.4pt,radius=0.25pt,label={[label distance=-1pt]below:{$v$}}] at (10,0) (v) {} ;
		\node[draw,circle,thick,fill=black,inner sep=1.4pt,radius=0.25pt,label={[label distance=-1pt]below:{$t$}}] at (13,0) (t) {} ;
		
		\draw[very thick, middlearrow={stealth}] (s) to (u);
		\draw[very thick, middlearrow={stealth}] (v) to (t);
		
		\draw[thick, middlearrow={stealth}] (u) to [bend left] node [above] {{$\opt'$}} (v);
		\draw[thick, middlearrow={stealth}] (u) to [bend right] node [above] {{$p'$}} (v);
		
		\end{tikzpicture}
		\caption{Network structure if players use one path.}
		\label{fig:2paths}
	\end{figure}
	
	Suppose that $\load >1$; the other case is trivial. If all the players use the $\opt(\load)$ path, then the cost share of each of them (including the leader) for the $\opt'(\load)$ portion of this path is at most
	$$\sum_{e\in \opt'} \frac{c_e(\load)}{\load-1} \leq \sum_{e\in p'} \frac{c_e(\load)}{\load-1} \leq \sum_{e\in p'} \frac{c_e(1)\load}{\load-1}\leq \sum_{e\in p'} 2c_e(1),$$
	where the last term is the cost share that any player should pay if they unilaterally deviate to $p'$. The first inequality is due to the optimality of $\opt(\load)$ and the second due to the concavity of the cost functions. Since this is true for any two edge-disjoint sub-paths of $p$ and $\opt(\load)$, we conclude that no player has an incentive to deviate to $p$.
\end{proof}

\begin{lemma}
	\label{lem:overOnlyOptPath}
	For any number of players $n$ and any non-optimal strategy profile $\s$ in which all the players use the same path $p$ from $s$ to $t$, $\s$ is not a Nash equilibrium 
	of the \NWA protocol.
\end{lemma}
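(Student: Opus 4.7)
The plan is to exhibit a profitable unilateral deviation. Since Lemma~\ref{lem:optPath} guarantees that an optimum assigns all load to a single path, the hypothesis that $\s$ is non-optimal is equivalent to $\sum_{e\in p}c_e(n)>\sum_{e\in \opt(n)}c_e(n)$, where $n$ is the number of players. First I would decompose the two paths $p$ and $\opt(n)$ into their shared edges together with the maximal edge-disjoint sub-paths between consecutive common vertices, so that the strict inequality on totals forces at least one disjoint pair $(p',\opt')$ between some vertices $u,v$ to satisfy $\sum_{e\in p'}c_e(n)>\sum_{e\in \opt'}c_e(n)$.

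Before constructing the deviation I would record a structural observation: no edge of $p'$ lies in $\opt(n)$. Every edge of $p'$ lies strictly between $u$ and $v$ in any topological ordering of the DAG, so if such an edge also belonged to $\opt(n)$ it would have to appear in the $u$-to-$v$ sub-path $\opt'$, contradicting $p'\cap \opt'=\emptyset$. This is precisely what lets me conclude that the protocol charges each non-leader $\zeta_e(n)=2c_e(n)$ on every $e\in p'$ (the ``penalty'' branch of $\zeta_e$), rather than the discounted $c_e(n)/(n-1)$.

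The deviation, for $n\geq 2$, is to pick any non-leader on $p$ and reroute them through $(p\setminus p')\cup \opt'$; the deviator is then alone on every edge of $\opt'$ and is charged $\zeta_e(1)=2c_e(1)$ there. Since the cost on the unchanged portion of $p$ is unaffected, the comparison reduces to
\[
\sum_{e\in \opt'}2c_e(1)\;\leq\;\sum_{e\in \opt'}2c_e(n)\;<\;\sum_{e\in p'}2c_e(n),
\]
where the first step uses monotonicity of $c_e$ and the second is the strict inequality coming from the decomposition; hence $\s$ is not a Nash equilibrium. The case $n=1$ is simpler: the unique player can deviate from $p$ to $\opt(1)$ in its entirety, since non-optimality of $p$ at load $1$ gives $\sum_{e\in p}c_e(1)>\sum_{e\in \opt(1)}c_e(1)$ and the player is charged $2c_e(1)$ on every edge under either route. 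I expect the subtlest step to be the DAG-based argument establishing $p'\cap \opt(n)=\emptyset$, since the entire cost comparison rests on the non-leader being charged the penalty value $2c_e(n)$ on $p'$.
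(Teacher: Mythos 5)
Your proof is correct and follows essentially the same route as the paper's: the same decomposition of $p$ and $\opt(n)$ into edge-disjoint sub-path pairs, the same selection of a pair with $\sum_{e\in p'}c_e(n)>\sum_{e\in \opt'}c_e(n)$, and the same deviation of a non-leader from $p'$ to $\opt'$ at cost $\sum_{e\in \opt'}2c_e(1)$. You additionally make explicit two points the paper leaves implicit --- that no edge of $p'$ can belong to $\opt(n)$ (so the non-leader really is charged the penalty value $2c_e(n)$ on $p'$) and the $n=1$ case --- and your inequality chain is written in the correct direction, whereas the paper's displayed chain appears to have its inequalities typeset in reverse.
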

\begin{proof}
	Similar to the previous proof, let $p'$ and $\opt'$ be two edge-disjoint sub-paths of $p$ and $\opt(\load)$, respectively, with the same endpoints (Figure~\ref{fig:2paths}); if there are many such paths, pick the one with total cost for load $\load=n$ strictly greater than that of $\opt'$ (there should be at least one, otherwise $p$ would be an optimal path). Based on the protocol, there is at least one player paying $2c_e(\load)$ for every edge $e$ in $p'$. Note that  
	$$ \sum_{e\in p'}2c_e(\load) < \sum_{e\in \opt'}2c_e(\load) \leq \sum_{e\in \opt'}2c_e(1)\, , $$
	which is the cost share if a player deviates to $\opt'$. Hence, that player has an incentive to deviate from $p'$ to $\opt'$.		
\end{proof}

\begin{lemma}
	\label{lem:overNoTwoPaths}
	For any strategy profile $\s$ in which players use at least two different paths, $\s$ is not a Nash equilibrium
	of the \NWA protocol.
\end{lemma}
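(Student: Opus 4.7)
The plan is to argue by contradiction: assume $\s$ is a pure Nash equilibrium in which at least two distinct $s$-$t$ paths are used, and exhibit a player who would profit from deviating. Using the DAG structure, I would first fix two used paths $P_1, P_2$ along with vertices $u,v$ at which $P_1$ and $P_2$ diverge and later reconverge, producing edge-disjoint intermediate sub-paths $p_1' \subseteq P_1$ and $p_2' \subseteq P_2$. Let $h$ denote the highest-priority active player; Lemma~\ref{alone} ensures the load on every edge she uses is at least $2$, and without loss of generality $h$ follows $P_1$. The key technical observation is that by Lemma~\ref{lem:weights}, $\sum_{e\in p_1'}w_e = \sum_{e\in p_2'}w_e = \omega_v - \omega_u$, so the leader's total payment along either sub-path can be written as $\epsilon(\omega_v-\omega_u) - (\epsilon/C)\sum \zeta_e(\cdot)$, a strictly decreasing linear function of the aggregate $\sum \zeta_e$.

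Applying this observation to $h$'s potential deviation from $p_1'$ to $p_2'$---in which $h$ remains leader by virtue of top priority---the NE condition that $h$ does not benefit reduces, after cancelling the common $\epsilon(\omega_v-\omega_u)$ term, to $\sum_{e \in p_1'}\zeta_e(\ell_e) \geq \sum_{e \in p_2'}\zeta_e(\ell_e+1)$, and by Lemma~\ref{lem:BreakTies} this inequality is strict; call this $(\star)$. Since the load on $p_1'$ is at least $2$, I would then pick any non-leader $j$ on $p_1'$. If $j$'s priority lies below that of the current leader $h_2$ on $p_2'$, then deviating to $p_2'$ keeps $j$ a non-leader: her cost changes from $\sum_{p_1'}\zeta_e(\ell_e)$ to $\sum_{p_2'}\zeta_e(\ell_e+1)$, and by $(\star)$ this strictly decreases her cost---a profitable deviation contradicting the NE assumption.

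The remaining case is when every non-leader on $p_1'$ has priority above $h_2$'s, so any such $j$ would become leader on $p_2'$ upon deviating, paying only about $\epsilon(\omega_v-\omega_u) < \min_{e'}c_{e'}(1)$. The NE requirement then forces $\sum_{p_1'}\zeta_e(\ell_e) \leq \epsilon(\omega_v-\omega_u)$, which in turn forces $p_1' \subseteq \opt(\ell_e)$ (since otherwise some edge would contribute $\zeta_e = 2c_e(\ell_e) \geq 2c_e(1)$, already exceeding $\epsilon(\omega_v-\omega_u)$ by the choice of $\epsilon$). I would then apply the symmetric analysis to a non-leader (or, if $\ell^{(2)}=1$, the solitary player) on $p_2'$ deviating to $p_1'$, where she remains non-leader because of $h$, yielding the companion inequality $\sum_{p_2'}\zeta_e(\ell_e) \leq \sum_{p_1'}\zeta_e(\ell_e+1)$. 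Combining this with the concavity-based monotonicity $\zeta_e(\ell+1)<\zeta_e(\ell)$ on edges shared by successive optimal sub-paths, together with $(\star)$, forces the $\sum\zeta_e$ totals on $p_1'$ and $p_2'$ to collapse to an equality, contradicting Lemma~\ref{lem:BreakTies}.

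The main obstacle is this last sub-case: bridging the leader-deviation inequality $(\star)$ with a reverse inequality coming from the non-leader side requires careful handling of which edges of $p_1'$ and $p_2'$ lie in $\opt(\ell_e)$ and $\opt(\ell_e+1)$, as well as the boundary case $\ell^{(2)}=1$ in which the solitary player pays $\sum 2c_e(1)$ and the deviation bound depends on comparing $\sum_{p_1'} c_e(1)$ against $\sum_{p_2'} c_e(1)$ using the optimality of $p_1'$. The tight interplay between the leader-based overcharging, the forced optimality of $p_1'$, and the strict tie-breaking from Lemma~\ref{lem:BreakTies} is what ultimately rules out all residual equilibrium configurations.
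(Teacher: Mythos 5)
Your central mechanism is exactly the one the paper exploits: by Lemma~\ref{lem:weights} the leader's total charge on a sub-path equals $\epsilon(\omega_v-\omega_u)-(\epsilon/C)\sum_e\zeta_e(\cdot)$, so the leader's and a non-leader's equilibrium conditions on two parallel sub-paths pull the quantity $\sum_e\zeta_e$ in opposite directions, and Lemma~\ref{lem:BreakTies} rules out the residual tie. However, two steps in your version do not go through as written. First, after fixing arbitrary divergence/reconvergence vertices $u,v$ of two used paths, you ``pick any non-leader $j$ on $p_1'$'' and treat $\sum_{e\in p_1'}\zeta_e(\ell_e)$ and $\sum_{e\in p_2'}\zeta_e(\ell_e+1)$ as her before/after costs. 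This presupposes a non-leader who traverses \emph{all} of $p_1'$; knowing that every edge of $p_1'$ has load at least $2$ (Lemma~\ref{alone}) does not provide one, since different players may cover different edges of $p_1'$, entering and leaving at interior vertices. The paper closes exactly this hole by choosing $v^+$ to be the \emph{first} vertex at which some player merges into the top-priority player's path without having used all of its preceding edges: before $v^+$ nobody enters midway, so any companion of $a_1$ on $p_1'$ necessarily uses all of $p_1'$ starting from $s$.

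Second, and more seriously, you run the argument in the direction ``leader's equilibrium condition $(\star)$ implies some non-leader profits by switching,'' which forces the case split on whether the switching non-leader becomes the leader of $p_2'$. That sub-case is precisely the one you leave open, and it does not close with the tools you invoke: $\zeta_e$ is not monotone in the load (an edge can leave $\opt(\ell)$ as $\ell$ grows, making $\zeta_e$ jump from $c_e(\ell)/(\ell-1)$ to $2c_e(\ell+1)$), so your ``companion inequality'' and $(\star)$ do not collapse to the desired equality. The paper avoids the case split entirely by reversing the roles: it uses the \emph{non-leader's} equilibrium condition to obtain $\sum_{e\in p_1'}\zeta_e(\ell_e)<\sum_{e\in p^+}\zeta_e(\ell_e+1)$, and then lets the globally highest-priority player deviate. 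Since that player remains the leader on every edge no matter which sub-path she uses, her charge is the same decreasing function of $\sum_e\zeta_e$ on both sides, and the contradiction is immediate with no status-change case to handle. Restructuring your argument in this direction, together with the correct choice of the cut vertex, recovers the paper's proof; as it stands, the proposal has a genuine gap.
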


\begin{proof}
Aiming for a contradiction, assume that there exists some instance where such a strategy profile $\s$ is a Nash equilibrium, and let $a_1$ denote the highest priority player in this instance. Let $p_1$ be the path that the highest priority player is using in $\s$, and let $(v^+, u)$ be the first edge along $p_1$ such that some player, denoted as $a^+$, uses edge $(v^+, u)$ without having previously used \emph{all} the previous edge in the $p_1$ path; in other words, $a_1$ and $a^+$ use paths that differ by at least one edge in getting from $s$ to $v^+$ but they both use edge $(v^+, u)$. If no such edge exists, then let $v^+=t$. Let $p'_1$ be the sub-path of $p_1$ from $s$ to $v^+$ used by $a_1$, and $p^+$ be the sub-path from $s$ to $v^+$ used by $a^+$. Since no player enters $p'_1$ before $v^+$ and due to Lemma \ref{alone}, which states that $a_1$ is not alone in any edge under $s$, there must exist some other player $a_2 \neq a_1$ that uses all of the edges in $p'_1$ (Figure \ref{twopaths}).   
	
	\begin{figure}[h]
		\centerline{\includegraphics[width=0.4\textwidth]{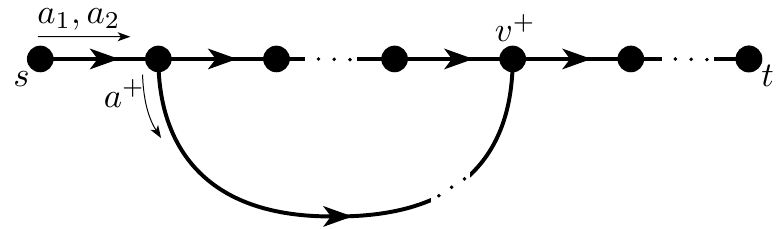}}
		\caption{Network structure if players use at least two paths.}
		\label{twopaths}
	\end{figure}
	
	Since $a_2$ is not the leader in any of the edges of $p'_1$ (because $a_1$ is), he is charged $\sum_{e \in p'_1} \zeta_{e}(\load_e(\s))$ for $p'_1$.  If he deviated to follow $p^+$ instead of $p'_1$ he would be charged at most $\sum_{e \in p^+} \zeta_{e}(\load_e(\s)+1)$. Since $\s$ is a Nash equilibrium, $a_2$ cannot improve his cost by deviating this way, so
	\begin{equation}
	\label{eq1}
	\sum_{e \in p'_1} \zeta_{e}(\load_e(\s))  < \sum_{e \in p^+} \zeta_{e}(\load_e(\s)+1)\,,
	\end{equation}
	where we have a strict inequality due to Lemma~\ref{lem:BreakTies}. However, since $a_1$ is the highest priority player among all players, his cost for $p'_1$ under $\s$ is
	$$\frac{(\omega_{v^+}-\omega_s)C-\sum_{e \in p'_1} \zeta_{e}(\load_e(\s))}{C} \epsilon >
	\frac{(\omega_{v^+}-\omega_s)C-\sum_{e \in p^+} \zeta_{e}(\load_e(\s)+1)}{C} \epsilon\,,$$
	where the inequality is due to \eqref{eq1} and the fact that $\sum_{e\in p'_1}w_e = \sum_{e\in p^+}w_e  = \omega_{v^+}-\omega_s$. But, the right hand side of this inequality would correspond to the cost of $a_1$ if he deviates from $p'_1$ to $p^+$.  Therefore $a_1$ has a reason to deviate from $p'_1$ to $p^+$ which contradicts the assumption that $\s$ is a Nash equilibrium.
\end{proof}

\begin{theorem}
	The PoA of the \NWA protocol for directed acyclic graphs with concave functions is at most $2+\varepsilon$, for an arbitrarily small value $\varepsilon>0$. If we assume $n>1$, then PoA $=1+\varepsilon$. 
\end{theorem}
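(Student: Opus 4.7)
The plan is to combine the three preceding lemmas to pin down the structure of any pure Nash equilibrium $\s$ under the \NWA protocol, and then simply compute the overcharged cost at that unique equilibrium structure. By Lemma~\ref{lem:overNoTwoPaths}, in any PNE all players must use the same $s$--$t$ path $p$. By Lemma~\ref{lem:overOnlyOptPath}, that common path $p$ must be $\opt(n)$, the optimal path for load $n$. By Lemma~\ref{lem:overStability} such a PNE exists. Hence every PNE assigns all $n$ players to the single path $\opt(n)$, and $\min_{\s^*\in F(\game)}C(\s^*) = \sum_{e\in \opt(n)} c_e(n)$ by Lemma~\ref{lem:optPath}.

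The next step is to evaluate $\hat{C}(\s)$ at this equilibrium. For the case $n>1$, every edge $e\in\opt(n)$ carries load $n$ and lies in $\opt(n)$, so by the definition of $\hat{c}_e$ its overcharged cost is
\[
\hat{c}_e(n) \;=\; c_e(n)\;+\;\epsilon_e\!\left(\tfrac{c_e(n)}{n-1}\right).
\]
Summing over $e\in \opt(n)$ gives $\hat{C}(\s) = \opt + \sum_{e\in\opt(n)} \epsilon_e(c_e(n)/(n-1))$. Since $\epsilon_e(x)\le w_e\epsilon$ for every nonnegative $x$, the additive overcharge is at most $\epsilon\sum_e w_e$, and by the choice $\epsilon < \min_{e'} c_{e'}(1)/\sum_e w_e$ this is strictly less than $\min_{e'} c_{e'}(1)\le \opt$. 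Dividing by $\opt$ therefore yields $\hat{C}(\s)/\opt \le 1+\varepsilon$ for an arbitrarily small $\varepsilon>0$ (by further shrinking $\epsilon$).

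For the case $n=1$, there is a single player who uses $\opt(1)$ (the shortest path under $c_e(1)$). Each edge on this path carries load $1$, so $\hat{c}_e(1)=2c_e(1)$, giving $\hat{C}(\s) = 2\cdot \opt$, to which we again add the arbitrarily small tie-breaking overcharge from Lemma~\ref{lem:BreakTies}, yielding $2+\varepsilon$.

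The only subtlety I expect is making sure the single-path PNE exists precisely at $\opt(n)$ and that no other path supports a PNE; this is handled entirely by the earlier lemmas, so once they are invoked the bound reduces to an arithmetic inspection of $\hat{c}_e$ on $\opt(n)$. The main care needed is in controlling the $\epsilon_e$ overcharges by the global bound $\epsilon \sum_e w_e$ so that the additive error can be absorbed into an arbitrarily small multiplicative $\varepsilon$ in the PoA; this is precisely why the parameters $C$ and $\epsilon$ were chosen as they were in the definition of the protocol.
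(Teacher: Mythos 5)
Your proof is correct and follows essentially the same route as the paper: invoke Lemmas~\ref{lem:overStability}, \ref{lem:overOnlyOptPath} and \ref{lem:overNoTwoPaths} to conclude that the only equilibria place all players on an optimal path, and then bound the overcharging there (the paper phrases this via player charges, noting the non-leaders' shares sum to the cost and the leader pays at most $\sum_e \epsilon_e(0)$, which is the same arithmetic as your edge-by-edge computation). The only cosmetic difference is that for $n>1$ you should also fold in the arbitrarily small tie-breaking increments of Lemma~\ref{lem:BreakTies}, as you do for $n=1$; this is harmless since both contributions can be made arbitrarily small.
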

\begin{proof}
	By Lemmas~\ref{lem:overStability},~\ref{lem:overOnlyOptPath} and \ref{lem:overNoTwoPaths} the only Nash equilibria are optimal paths so the PoA increases only due to overcharging.
	
		Recall that in order to avoid ties on players' charges for alternative paths we increase the cost functions by arbitrarily small amounts; let $\varepsilon_1$ be the total such increment. Moreover, let $\varepsilon_2=\sum_e \epsilon_e(0)$ be an upper bound on the share that the highest priority player may ever pay in any Nash equilibrium. Note that $\varepsilon_2$ depends on the choice of $\epsilon$ and therefore can be arbitrarily small. 

For $n=1$ the overcharging trivially results in PoA $\leq 2+\varepsilon_1$. For $n>1$, the total charges of all but the highest priority player sum up to the total cost occurred (after the increments for breaking ties on the cost shares) and therefore, PoA $\leq 1+ \varepsilon_1+\varepsilon_2$. 

\end{proof}

	\subsection{Static-share Leader-based Protocols} 
\label{sec:StaticShare}
In this section we argue that the approach of \cite{CGS17}, which resulted in a
budget-balanced protocol for parallel links with PoA $=1$, cannot be successfully
applied here. The protocol used in \cite{CGS17} is a leader-based protocol where,
each edge essentially assumes that the set of agents currently using it are the 
only active users in the system. If, given that assumption, the particular edge
should be used in the optimal solution, then the leader is charged some carefully
chosen fixed cost, otherwise, the leader is charged the whole cost of the edge,
incentivizing that agent to deviate to another edge. This is an instance of a class
of protocols which we refer to as {\em static-share leader-based protocols}.

\begin{definition}(Static-share Leader-based Protocol).
	A {\em static-share leader-based protocol} is a leader-based protocol with the following choice of $\psi_e\leq c_e(\load)$: 
		\begin{equation*}
	\psi_e(\load) =
	\begin{cases}
	\psi_e & \text{if } e \in \opt(\load) \\
	c_e(\load) & \text{otherwise}
	\end{cases} 
	\end{equation*}
\end{definition}

We next show that {\em no} static-share leader-based protocol can guarantee a constant PoA. In particular, the price of anarchy
grows with the number of agents $n=|\pls|$ of the instance at hand. Our proof considers concave cost functions for simplicity, but we remark that the lower bound also holds for strictly concave cost functions as we describe in the proof. Due to limited space the proof of this theorem can be found in the appendix.

\begin{theorem}
	\label{thm:LB_SSLBprotocols}
	Any static-share leader-based protocol has PoA $= \Omega\left(\frac{\sqrt{n}}{\log^2n}\right)$ for directed acyclic graphs, even for the class of strictly concave cost functions.
\end{theorem}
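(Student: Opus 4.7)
The plan is to construct, for each $n$, a specific DAG with strictly concave cost functions on which every choice of a static-share leader-based protocol's constants $\{\psi_e\}$ admits a Nash equilibrium of cost $\Omega(\sqrt{n}/\log^2 n)$ times the optimum. The key structural property I want to exploit is that $\psi_e$ is a single number which must simultaneously serve every load $\ell$ for which $e$ lies on the optimal path, even though strictly concave costs make the relative importance of different load scales change substantially with $\ell$. The adversarial instance will be engineered so that no single value of $\psi_e$ can address more than one load scale at a time.

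I would build the instance as a cascade of $\Theta(\log n)$ gadgets in series, each gadget offering $\Theta(\sqrt{n}/\log n)$ parallel \emph{strands} with strictly concave cost functions tuned so that different strands minimize the gadget cost for different sub-load scales $\tau_1, \tau_2, \ldots$ geometrically spaced between $1$ and $n$. Consequently each strand's edges lie in $\opt(\tau_i)$ for a distinct $\tau_i$, so $\psi_e$ is pinned both from above (it must not exceed $c_e(\tau_i)$) and from below (else the optimum at load $\tau_i$ fails to be a Nash equilibrium, because non-leaders would want to leave). For any static choice of $\psi_e$ there is then really only one scale at which it is calibrated, leaving the others exploitable. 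The PoA lower bound would come from a \emph{dispersed} Nash equilibrium in which the $n$ players are partitioned across the strands of each gadget so that each strand carries load near its own target $\tau_i$; by design every edge in use lies in $\opt$ for its current load, so leaders pay $\psi_e$ and non-leaders pay $(c_e(\ell)-\psi_e)/(\ell-1)$. Strict concavity and the $c_e(\load)$ penalty on the leader of any off-target strand together block single-agent deviations, and the social cost of such a profile exceeds $\opt(n)$ by a factor of $\Omega(\sqrt{n}/\log n)$ per gadget; one $\log n$ is absorbed into the geometric target spacing and another into the gadget count, yielding the claimed bound.

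The hard part will be verifying that the dispersed profile is actually a Nash equilibrium for \emph{every} $\{\psi_e\}$, because different regimes behave differently: for large $\psi_e$, the leader of an on-target strand might prefer to leave, collapsing the bad equilibrium; for small $\psi_e$, a non-leader might free-ride onto a strand where the penalty clause $\psi_e(\ell)=c_e(\ell)$ does not currently bite. I would address this with a two-regime case split, showing that in each regime the instance admits either a bad dispersed Nash equilibrium or \emph{the optimum is not a Nash equilibrium at all}, which is equally good for a PoA lower bound. A further subtlety is the composition across gadgets under a single global priority ordering: a geometric scaling of gadget costs, so that each gadget dominates the total cost of all subsequent gadgets, should let me isolate one ``critical'' gadget per analysis step and prevent cancellations. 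Maintaining strict concavity in the presence of the rigidly spaced targets will add some technical bookkeeping but should not change the asymptotics.
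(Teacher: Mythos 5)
Your high-level plan (many parallel strands calibrated to different load scales, so that a single static $\psi_e$ cannot serve all of them) points in a reasonable direction, but the central object of your argument --- a ``dispersed'' Nash equilibrium in which every strand carries load near its own target $\tau_i$ and hence lies in $\opt(\tau_i)$ --- cannot deliver the bound. First, if every occupied edge is in $\opt$ of its current load, a non-leader on strand $i$ pays roughly $c_{e_i}(\tau_i)/(\tau_i-1)$, and for the profile to survive her deviations these per-player rates must be approximately equalized across strands, say to $\lambda$; the total cost is then about $\lambda\sum_i\tau_i=\lambda n$, while concavity with $c(0)=0$ gives $c(n)\le (n/\tau)\,c(\tau)$, so the single-path optimum is at most $(n/\tau_{i})\,c_{e_{i}}(\tau_{i})\approx\lambda n$ as well, and the ratio collapses to $O(1)$. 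Second, the free-riding issue you flag is fatal rather than technical: by the definition of a static-share leader-based protocol, on an edge $e\notin\opt(\load)$ the leader pays the entire $c_e(\load)$ and every non-leader pays $0$, so any non-leader with positive cost will deviate onto a strand that falls out of $\opt$ upon her arrival and ride for free; your dispersed profile is destabilized unless every strand stays on target after one extra player, which is exactly the regime where the first objection applies. Finally, your fallback that ``the optimum is not a Nash equilibrium at all, which is equally good'' is not sound: a PoA lower bound requires that every (or at least one exhibited) equilibrium be expensive, not merely that the optimal profile fail to be one.

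The paper's proof exploits essentially the opposite phenomenon. It uses a single Braess-type gadget with $r=2^k$ parallel $s\to u$ edges whose costs $c_j(\load)=\load/(jk^2)+H_{j-1}/k+\varepsilon_j$ make $e_j$ optimal only for loads in $[jk+1,(j+1)k]$, together with an $s\to v\to t$ route that is optimal at load $1$. Budget balance at load $1$ pins $\psi_{sv}=\psi_{vt}=1$ exactly, so the highest-priority player pays at least $1$ on the intended path; since every $e_j$ has unit cost below $1$, in any Nash equilibrium all $r$ parallel edges must be occupied --- each at load $1$, i.e., \emph{off} target, with its lone user paying the full $c_j(1)$ --- and $\sum_j c_j(1)=\Theta(r)$ against an optimum of $3k$, with $n=r^2k^2$. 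The lower bound thus comes from forcing every equilibrium to scatter players onto cheap off-target edges, not from exhibiting an on-target dispersed equilibrium. If you want to salvage your approach, the load-one pinning $\psi_e=c_e(1)$ for $e\in\opt(1)$ and the zero charge to non-leaders on off-target edges are the two levers you would need to build around.
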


	\section{Series-Parallel Graphs with Concave Cost Functions}\label{sec:concave-SP}


In this section we study instances with series-parallel graphs and strictly concave cost 
functions and we are able to design a budget-balanced static-share leader-based protocol 
with PoA $=1$, which generalizes the protocol used in~\cite{CGS17} for parallel-link graphs. 
In fact, the same protocol extends to general concave cost functions results with PoA 
$=1+\epsilon$, for an arbitrarily small value $\epsilon>0$, after applying minor overcharging 
on the cost functions in order to transform them to strictly concave functions, similar to 
\cite{CGS17}.

\subsection{Preliminaries in Series-Parallel Graphs}\label{sec:series-parallel-prelims}
Here, we provide the basic definitions and properties of
directed series-parallel graphs (SPGs), based on~\cite{Epp92}.

A series-parallel graph (SPG) can be constructed by performing {\em
  series} and {\em parallel} compositions of smaller SPGs, starting
with copies of the basic SPG which is an edge. Each SPG has two
designated vertices/terminals called source and sink; regarding the
edge which is the basic SPG, one vertex serves as the source and the
other as the sink. One can construct any SPG through a (not unique)
sequence of the following operations:

\begin{itemize}
	\item Create a new graph, consisting of a single edge directed from $s$ to $t$. 
	\item Given two SPGs $C_1$ and $C_2$ with terminals $s_1,t_1,s_2,t_2$, form a new SPG $C$
				by merging $s_1$ and $s_2$ into one new terminal, $s$, and merging $t_1$ and $t_2$ into
				a new terminal, $t$. This is known as the {\em parallel composition} of $C_1$ and $C_2$.
	\item Given two SPGs $C_1$ and $C_2$, with terminals $s_1,t_1,s_2,t_2$, form a new SPG $C$, with
          terminals $s$, $t$, by merging $t_1$ and $s_2$ and identifying $s = s_1$ and $t = t_2$ as
					the new terminals. This is known as the {\em series composition} of $C_1$ and $C_2$.
\end{itemize} 

For the rest of the section we will consider a SPG $G=(V,M)$, with
source $s$ and sink $t$, and $C_1, \ldots , C_r$ is the sequence of
SPGs that are constructed in the process of generating $G$ (with
$C_r=G$). Let $\mathcal{C} = \{C_1,\ldots , C_r\}$ and $s_C, t_C$ be
the terminals of any component $C\in \mathcal{C}$.


\subsection{An Efficient Leader-based Protocol}
Here, we study symmetric games where each edge of $G$ has a {\em strictly} concave cost function, and
we present a budget-balanced cost-sharing protocol with PoA $=1$. In fact, as the following remark argues,
this also implies that we can achieve a PoA of $1+\epsilon$ for {\em weakly} concave cost functions for
an arbitrarily small constant $\epsilon>0$ if we use a very small amount of overcharging.

\begin{remark}
	Using similar overcharging arguments like the one used in~\cite{CGS17}, one can transform any 
	concave cost function into a \emph{strictly} concave one by replacing any linear portion of
	the function with a concave portion that is never more than $\epsilon$ times greater. This will 
	achieve a PoA equal to $1+\epsilon$, where $\epsilon>0$ is an arbitrarily small constant.
\end{remark}

\textbf{Protocol Definition.} Like all leader-based protocols, this protocol uses a global order $\pi$ to identify the 
leader $h_e(\s)$ of each edge for each profile $\s$. After carefully assigning values $\psi_e\leq c_e(\load_e(\s))$ on the edges of $G$, the 
cost share of player $i$ for using edge $e$ is defined as
\[
\xi_{ie}(\s) = \left\{
\begin{array}{l l}
  c_e(\load_e(\s)) & \quad \text{if}~~ e \notin \opt(\load_e(\s))~~\text{and}~~ i=h_e(\s)\\
  0 & \quad   \text{if}~~ e\notin \opt(\load_e(\s))~~\text{and}~~ i\neq h_e(\s)\\
  \psi_e & \quad  \text{if}~~ e\in \opt(\load_e(\s))~~\text{and}~~ i=h_e(\s)\\
  \frac{c_e(\load_e(\s)) - \psi_e}{\load_e(\s) - 1} & \quad  \text{if}~~e\in \opt(\load_e(\s))~~\text{and}~~ i\neq h_e(\s)
\end{array} \right.
\]

In order to complete the definition of the cost-sharing protocol we
need to define the $\psi_e$'s. In fact we define a value $\psi_C$ for
each component $C\in\mathcal{C}= \{C_1,\ldots , C_r\}$. This will be
done iteratively starting from $G$ and following its decomposition in
reverse order.

To capture essential information of the network structure, we 
first define the minimum load $\load$ for which some edge of component 
$C \in \mathcal{C}$ is used in the optimal path $\opt(\load)$.  

\begin{definition} ($\load^{*}_C$).  Given a component $C \in \mathcal{C}$ of a SPG,
we define $\load^{*}_C$ to be the minimum total load such that $C$ is used in the 
optimal path $\opt(\load^{*}_C)$, i.e., if $E_C$ is the set of the edges of $C$, 
$\load^{*}_C = \min\{\load|E_C\cap \opt(\load)\neq \emptyset\}$.
\end{definition}


Next we define the minimum cost required for connecting $\load$ through $C$.

\begin{definition} ($\phi_C$).  We define $\phi_C(\load)$ to be the
  minimum cost for establishing a path that connects a load of $\load$
	from the source, $s_C$, of $C$ to its sink, $t_C$. Note that $\phi_C$ is a strictly concave function as
  the minimum of strictly concave functions\footnote{The cost of each
    path from $s$ to $t$ is a concave function as it is the summation
    of concave functions. Due to Lemma~\ref{lem:optPath}, the optimal
    total cost for each $\load$ equals the minimum cost among those
    paths.}. Trivially, for each edge $e$, $\phi_e=c_e$.
\end{definition}

\begin{remark}
\label{rem:minLoadAndWeightInCs}
If $C$ is constructed by the composition of $C_1$ and
$C_2$, then if this is a
\begin{itemize}
\item parallel composition, $\load^{*}_C \leq \load^{*}_{C_1}$, $\load^{*}_C \leq \load^{*}_{C_2}$, $\phi_C(\load) \leq \phi_{C_1}(\load)$, and $\phi_C(\load) \leq \phi_{C_2}(\load)$ for any $\load$.   
\item series composition, $\load^{*}_C = \load^{*}_{C_1} =\load^{*}_{C_2}$ and $\phi_C(\load)=\phi_{C_1}(\load)+\phi_{C_2}(\load)$ for any $\load$.
\end{itemize}
\end{remark}

We now define $\psi_C$ for each component $C\in\mathcal{C}$ so as to guarantee the following properties: 
\begin{enumerate}
	\item any two components that participate in a parallel composition have the same $\psi$ value. 
	\item the $\psi$ values of any two components that participate in a 
          series composition sum up to the $\psi$ value of the composed
          component. 
	\item $\psi_C > \frac{\phi_C(\load)}{\load}$ for any total load $\load$
          such that $C$ is used in the optimal outcome $\opt(\load)$. 
	\item $\psi_C \leq \phi_C(1)$. 
\end{enumerate}

The first two properties ensure that for each component $C$, we have $\psi_C = \sum_{e\in p}\psi_e$ for any
(acyclic) path $p$ from $s_C$ to $t_C$. The third property guarantees that the highest priority player always 
pays more than the rest of the players. Combined with the fact that all alternative paths have the same $\psi$,
implied by the first two properties, this allows us to show that no two paths can be used in any Nash equilibrium 
(Theorem~\ref{th:concPoA=1}). Finally, the fourth property is crucial in ensuring that the optimal path is a Nash 
equilibrium; if this inequality did not hold, the highest priority player could have an incentive to deviate to the 
path with unit cost $\phi_C(1)$.

\begin{definition}($\psi_C$).
We define the $\psi$ values iteratively, starting from $C_r$ (recall $C_r=G$) and following its decomposition. 
We set $\psi_{C_r}$ to be equal to the minimum total cost when a unit load appears in $G$, i.e. $\psi_{C_r}=\phi_{C_r}(1)$.
Then, after defining a value $\psi_{C}$ for a component $C$, we move on to define the $\psi_{C_1}$ and $\psi_{C_2}$ values 
of the two components $C_1, C_2$ whose composition led to $C$, as follows:
\begin{itemize}
\item If $C$ is constructed by a parallel composition of $C_1$ and $C_2$, $\psi_{C_1} = \psi_{C_2} =\psi_{C}$.
\item If $C$ is constructed by a series composition  of $C_1$ and $C_2$, then we consider two sub-cases:
\begin{itemize}
\item if $\phi_{C_1}(1) < \psi_C\frac{\phi_{C_1}(\load^{*}_C)}{\phi_{C}(\load^{*}_C)}$, then we let $\psi_{C_1}=\phi_{C_1}(1)$ and $\psi_{C_2} = \psi_C - \psi_{C_1}$.
\item if $\phi_{C_2}(1) < \psi_C\frac{\phi_{C_2}(\load^{*}_C)}{\phi_{C}(\load^{*}_C)}$, then we let $\psi_{C_2}=\phi_{C_2}(1)$ and $\psi_{C_1} = \psi_C - \psi_{C_2}$.
\item if $\phi_{C_1}(1) \geq \psi_C\frac{\phi_{C_1}(\load^{*}_C)}{\phi_{C}(\load^{*}_C)}$ and $\phi_{C_2}(1) \geq \psi_C\frac{\phi_{C_2}(\load^{*}_C)}{\phi_{C}(\load^{*}_C)}$, then $\psi_{C_1} = \psi_C\frac{\phi_{C_1}(\load^{*}_C)}{\phi_{C}(\load^{*}_C)}$ and $\psi_{C_2} = \psi_C\frac{\phi_{C_2}(\load^{*}_C)}{\phi_{C}(\load^{*}_C)}$. 
\end{itemize}
Note that the $\psi_{C_1}$ and $\psi_{C_2}$ values as defined above always guarantee that $\psi_{C_1}+\psi_{C_2}=\psi_{C}$. 
To see this note that $\phi_{C_1}(\load)+\phi_{C_2}(\load)=\phi_{C}(\load)$ in a series composition (Remark~\ref{rem:minLoadAndWeightInCs}).
\end{itemize}
\end{definition}

In the next lemma (whose proof can be found in the appendix) we show that $\psi_C \leq \phi_C(1)$ which, as
we mentioned above, is crucial for the optimal path to be a Nash
equilibrium.

\begin{lemma}
\label{lem:psi<=c1}
For any $C\in \mathcal{C}$,  $\psi_C\leq \phi_C(1)$.
\end{lemma}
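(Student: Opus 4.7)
The plan is to prove the inequality by structural induction on the decomposition tree of $G$, working from the root $C_r = G$ downward toward the basic edge components. This is the natural direction, since the definition of $\psi_C$ is given iteratively in exactly this top-down manner. The inductive hypothesis will be: $\psi_C \leq \phi_C(1)$ for the component $C$ currently under consideration, and we want to deduce the same inequality for the two sub-components $C_1, C_2$ whose composition produced $C$.

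The base case is immediate from the definition: $\psi_{C_r} = \phi_{C_r}(1)$. For the inductive step, I would split into the two types of composition. If $C$ is a \emph{parallel} composition of $C_1$ and $C_2$, then $\psi_{C_1} = \psi_{C_2} = \psi_C$, and Remark~\ref{rem:minLoadAndWeightInCs} tells us $\phi_C(1) \leq \phi_{C_j}(1)$ for $j=1,2$; chaining these with the inductive hypothesis $\psi_C \leq \phi_C(1)$ immediately gives $\psi_{C_j} \leq \phi_{C_j}(1)$.

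If $C$ is a \emph{series} composition of $C_1$ and $C_2$, then by Remark~\ref{rem:minLoadAndWeightInCs} we have the key identity $\phi_C(1) = \phi_{C_1}(1) + \phi_{C_2}(1)$, which will drive the argument. I would then inspect the three sub-cases in the definition of $\psi_{C_1}, \psi_{C_2}$. In the first sub-case, $\psi_{C_1} = \phi_{C_1}(1)$ is trivially within bound, while $\psi_{C_2} = \psi_C - \phi_{C_1}(1) \leq \phi_C(1) - \phi_{C_1}(1) = \phi_{C_2}(1)$ using the inductive hypothesis together with the identity above. The second sub-case is symmetric. In the third sub-case, the defining inequalities $\phi_{C_j}(1) \geq \psi_C \, \phi_{C_j}(\load^*_C)/\phi_C(\load^*_C)$ for $j=1,2$ directly state that $\psi_{C_j} \leq \phi_{C_j}(1)$, so nothing more is needed.

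I do not expect a real obstacle here; the proof is essentially a case check that confirms the definition was set up precisely to preserve the invariant $\psi_C \leq \phi_C(1)$. The only point that requires a small bit of care is the series case sub-case where we fix $\psi_{C_1}=\phi_{C_1}(1)$: one must use both the inductive hypothesis on $C$ and the additive identity $\phi_C(1) = \phi_{C_1}(1)+\phi_{C_2}(1)$ to conclude that the residual share $\psi_{C_2} = \psi_C - \psi_{C_1}$ remains bounded by $\phi_{C_2}(1)$. Everything else is immediate from the definition or from Remark~\ref{rem:minLoadAndWeightInCs}.
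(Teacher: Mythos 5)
Your proof is correct and follows essentially the same route as the paper: a top-down induction on the decomposition with base case $\psi_{C_r}=\phi_{C_r}(1)$, using $\phi_C(1)\leq\phi_{C_i}(1)$ for parallel compositions and the additivity $\phi_C(1)=\phi_{C_1}(1)+\phi_{C_2}(1)$ from Remark~\ref{rem:minLoadAndWeightInCs} for series compositions. The case analysis matches the paper's, including the observation that the third series sub-case holds directly from its defining inequalities.
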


%
%

In the next lemma (whose proof can be found in the appendix) we show that $\psi_C > \frac{\phi_C(\load)}{\load}$ for any $\load$ that $C$ is used in the optimal path.  

\begin{lemma}
\label{lem:psi}
For any $C\in \mathcal{C}$ and any $\load \geq \load^{*}_C$ with $\load >1$, $\psi_C > \frac{\phi_C(\load)}{\load}$; if $\load^{*}_C=1$ then $\psi_C = \phi_C(1)$.
\end{lemma}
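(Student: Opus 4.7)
The plan is to prove the lemma by structural induction on the decomposition sequence $C_1,\ldots,C_r$ that defines the $\psi_C$ values, starting from the root $C_r=G$ and working downward along the composition. A uniform tool I will reuse at every step is that strict concavity of $\phi_C$ together with $\phi_C(0)=0$ makes the average cost $\phi_C(\load)/\load$ strictly decreasing in $\load$, so verifying the strict inequality at $\load=\load^{*}_C$ automatically gives it for all larger $\load$; the equality clause will cover the boundary case $\load=1$ whenever $\load^{*}_C=1$.

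For the base case $C_r=G$, every nonempty load must use some edge of $G$, hence $\load^{*}_{C_r}=1$; combined with $\psi_{C_r}=\phi_{C_r}(1)$ this gives the equality clause at once, while for $\load>1$ the required inequality $\phi_{C_r}(1)>\phi_{C_r}(\load)/\load$ is immediate from strict concavity. In the parallel composition step, $\psi_{C_1}=\psi_{C_2}=\psi_C$ and $\phi_C=\min(\phi_{C_1},\phi_{C_2})$; at $\load=\load^{*}_{C_i}$ the sub-component $C_i$ is chosen by $\opt$, so $\phi_{C_i}(\load^{*}_{C_i})=\phi_C(\load^{*}_{C_i})$, and since $\load^{*}_{C_i}\ge \load^{*}_C$ by Remark~\ref{rem:minLoadAndWeightInCs}, the inductive hypothesis applied to $C$ at this load transfers directly to $C_i$; the boundary case $\load^{*}_{C_i}=1$ forces $\load^{*}_C=1$ and collapses via $\psi_C=\phi_C(1)=\phi_{C_i}(1)$.

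The main obstacle is the series composition, where $\phi_C=\phi_{C_1}+\phi_{C_2}$, $\load^{*}_C=\load^{*}_{C_1}=\load^{*}_{C_2}$, and the definition of $\psi_{C_i}$ branches into three sub-cases. In the symmetric sub-case (iii), where $\psi_{C_i}=\psi_C\cdot\phi_{C_i}(\load^{*}_C)/\phi_C(\load^{*}_C)$, I multiply the inductive hypothesis $\psi_C>\phi_C(\load^{*}_C)/\load^{*}_C$ by the positive factor $\phi_{C_i}(\load^{*}_C)/\phi_C(\load^{*}_C)$ to obtain $\psi_{C_i}>\phi_{C_i}(\load^{*}_C)/\load^{*}_C$, and strict concavity of $\phi_{C_i}$ extends the conclusion to all $\load\ge\load^{*}_C$; if $\load^{*}_C=1$ instead, the IH gives $\psi_C=\phi_C(1)$ and the formula collapses to $\psi_{C_i}=\phi_{C_i}(1)$, reproducing the equality clause. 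For the asymmetric sub-cases (i) and (ii), I first observe that substituting $\psi_C=\phi_C(1)$ into the defining sub-case inequality yields the contradiction $\phi_{C_1}(1)<\phi_{C_1}(1)$, so these sub-cases cannot occur when $\load^{*}_C=1$, and the IH's strict form is always available. The ``corner'' sub-component then has $\psi_{C_1}=\phi_{C_1}(1)$ and the claim is immediate from strict concavity, while for the complementary sub-component I exploit the defining inequality to get $\psi_{C_2}=\psi_C-\phi_{C_1}(1)>\psi_C\cdot\phi_{C_2}(\load^{*}_C)/\phi_C(\load^{*}_C)$, reducing the analysis to the expression already handled in sub-case (iii).

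The most delicate bookkeeping throughout is keeping straight when the inductive hypothesis supplies the equality clause ($\load^{*}_C=1$) versus the strict inequality ($\load^{*}_C>1$), and checking that the boundary $\load^{*}_{C_i}=1$ propagates correctly through each composition type. As noted, the asymmetric series sub-cases automatically avoid the boundary, and in the parallel and symmetric-series cases the expressions for $\psi_{C_i}$ collapse cleanly via $\psi_C=\phi_C(1)$ together with the identities $\phi_C(1)=\phi_{C_i}(1)$ (parallel, when $C_i$ is optimal at load $1$) and $\phi_C(1)=\phi_{C_1}(1)+\phi_{C_2}(1)$ (series) from Remark~\ref{rem:minLoadAndWeightInCs}.
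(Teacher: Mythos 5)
Your proposal is correct and follows essentially the same route as the paper's proof: structural induction along the decomposition starting from $G$, using the strict decrease of $\phi_C(\load)/\load$ to reduce everything to the load $\load^{*}_C$, the identity $\phi_{C_i}(\load^{*}_{C_i})=\phi_C(\load^{*}_{C_i})$ in the parallel case, and the reduction of the asymmetric series sub-cases to the symmetric one via $\psi_{C_2}=\psi_C-\phi_{C_1}(1)>\psi_C\,\phi_{C_2}(\load^{*}_C)/\phi_C(\load^{*}_C)$. Your explicit observation that sub-cases (i) and (ii) cannot occur when $\load^{*}_C=1$ is a small point the paper leaves implicit, but the argument is otherwise the same.
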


In the next lemma (whose proof can be found in the appendix) we formally show that for each edge $e$, the highest
priority player pays at least $\psi_e$ and the rest of the players pay
{\em strictly} less than $\psi_e$. This lemma will be used in a key
argument in our main theorem (Theorem~\ref{th:concPoA=1}) in order to
show that no two paths can be used in any Nash equilibrium.
\begin{lemma}
\label{lem:costShares}
For any strategy profile $\s$ and any edge $e$, $\xi_{ie}(\s) \geq \psi_e$ if $i=h_e(\s)$ and $\xi_{ie}(\s) < \psi_e$ otherwise.
\end{lemma}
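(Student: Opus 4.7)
The plan is to reduce the lemma to Lemmas~\ref{lem:psi<=c1} and~\ref{lem:psi} by specializing them to the single-edge component $C=e$, for which $\phi_e=c_e$ and $\load^*_e$ is just the smallest $\load$ with $e\in \opt(\load)$. I would split into the two cases dictated by the protocol definition, and within each case handle the leader and non-leaders separately.

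First, suppose $e \in \opt(\load_e(\s))$. The leader's share is defined to be exactly $\psi_e$, so the ``$\geq \psi_e$'' direction is immediate. If $\load_e(\s)=1$, there are no non-leaders and there is nothing more to show. Otherwise each non-leader pays $\frac{c_e(\load_e(\s))-\psi_e}{\load_e(\s)-1}$, and a direct algebraic manipulation shows this is strictly less than $\psi_e$ iff $\psi_e > \frac{c_e(\load_e(\s))}{\load_e(\s)}$. Since $e\in \opt(\load_e(\s))$ gives $\load_e(\s)\geq \load^*_e$, applying Lemma~\ref{lem:psi} to the single-edge component $C=e$ yields exactly this strict inequality (noting that $\load_e(\s) > 1$ places us in the relevant case of the lemma).

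Next, suppose $e\notin \opt(\load_e(\s))$. The leader pays $c_e(\load_e(\s))$; since the load is at least $1$ and $c_e$ is non-decreasing, this is at least $c_e(1)=\phi_e(1)\geq \psi_e$ by Lemma~\ref{lem:psi<=c1}. Each non-leader pays $0$, so the only thing left to verify is that $\psi_e>0$. This is the one piece not already packaged in the cited lemmas, and it is the step I expect to require the most care.

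To establish $\psi_C>0$ for every $C\in\mathcal{C}$, I would do a straightforward induction following the decomposition of $G$. The base case is $\psi_{C_r}=\phi_{C_r}(1)>0$, which holds because each edge cost is positive on positive loads. A parallel composition trivially preserves positivity since $\psi_{C_1}=\psi_{C_2}=\psi_C$. For a series composition, the third subcase of the definition assigns both children a strictly positive multiple of $\psi_C$. In the first two subcases, one child receives $\phi_{C_i}(1)>0$ while the other receives $\psi_C-\phi_{C_i}(1)$; the triggering condition $\phi_{C_i}(1) < \psi_C\cdot \phi_{C_i}(\load^*_C)/\phi_C(\load^*_C) \leq \psi_C$ guarantees this residual is positive. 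Once $\psi_e>0$ is available, the non-optimal case is complete, and the lemma follows. The conceptual content is carried entirely by Lemmas~\ref{lem:psi<=c1} and~\ref{lem:psi}; the rest is bookkeeping along the decomposition.
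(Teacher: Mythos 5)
Your proposal is correct and follows essentially the same route as the paper's proof: both cases are handled by specializing Lemmas~\ref{lem:psi<=c1} and~\ref{lem:psi} to the single-edge component, and your algebraic reformulation of the non-leader bound as $\psi_e > c_e(\load_e(\s))/\load_e(\s)$ is exactly the chain of inequalities the paper writes out. The one place you go beyond the paper is in explicitly verifying $\psi_e>0$ (needed so that non-leaders paying $0$ on non-optimal edges satisfy the strict inequality); the paper leaves this implicit, and your induction along the decomposition is a valid way to close that small gap.
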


%

In the next two lemmas we show that the protocol belongs in the class of budget-balanced and stable protocols.

\begin{lemma}
	\label{lem:budgetBalance}
	This static-share leader-based cost-sharing protocol is budget-balanced.
\end{lemma}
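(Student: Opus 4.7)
The plan is to fix an arbitrary profile $\s$ and an arbitrary edge $e$, and verify that $\sum_{i\in\pls}\xi_{ie}(\s)=c_e(\load_e(\s))$ by a straightforward case analysis over the four cases appearing in the definition of $\xi_{ie}$. If $\load_e(\s)=0$ the statement is trivial since $c_e(0)=0$ and no player contributes a share, so I would immediately reduce to the case $\load_e(\s)\ge 1$.

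The first main case I would handle is $e\notin \opt(\load_e(\s))$. Here the leader $h_e(\s)$ is charged exactly $c_e(\load_e(\s))$ and every other user is charged $0$, so the shares sum to $c_e(\load_e(\s))$ regardless of how many users are on $e$. The second main case is $e\in \opt(\load_e(\s))$ together with $\load_e(\s)\ge 2$. In this regime the leader pays $\psi_e$ and each of the remaining $\load_e(\s)-1$ users pays $\frac{c_e(\load_e(\s))-\psi_e}{\load_e(\s)-1}$, so the total telescopes to $\psi_e + (c_e(\load_e(\s))-\psi_e)=c_e(\load_e(\s))$; importantly, the denominator $\load_e(\s)-1$ is strictly positive here so the formula is well-defined.

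The delicate subcase is $e\in \opt(\load_e(\s))$ with $\load_e(\s)=1$, where the leader is the only user and pays $\psi_e$, so budget balance requires $\psi_e=c_e(1)$. I expect this to be the main obstacle in a technical sense, but it is resolved directly by appealing to the earlier lemmas. Since $e\in\opt(1)$, the definition of $\load^{*}_e$ gives $\load^{*}_e\le 1$, and because loads are positive integers we must have $\load^{*}_e=1$. Lemma~\ref{lem:psi} then yields $\psi_e=\phi_e(1)$, and by definition $\phi_e=c_e$, so $\psi_e=c_e(1)$ as required.

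Combining the three cases covers every possibility for the pair $(e,\s)$, and in each case the cost shares allocated by the protocol add up to exactly $c_e(\load_e(\s))$. Since this holds for every edge and every profile, the protocol is budget-balanced. The argument is essentially a bookkeeping exercise; its only non-trivial ingredient is the identity $\psi_e=c_e(1)$ used to cover the single-user-on-optimal-path case, which is precisely the content imported from Lemma~\ref{lem:psi}.
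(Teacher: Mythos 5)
Your proof is correct and follows essentially the same case analysis as the paper's: the non-optimal-edge case and the telescoping sum for $\load_e(\s)\ge 2$ are immediate, and the single-user-on-an-optimal-edge case is handled, exactly as in the paper, by deducing $\load^{*}_e=1$ and invoking Lemma~\ref{lem:psi} to get $\psi_e=\phi_e(1)=c_e(1)$.
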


\begin{proof}
	Consider any strategy profile $\s$ and any edge $e$. If $e\notin \opt(\load_e(\s))$ then the leader,  $h_e(\s)$, covers the whole cost and the rest of the shares are $0$; therefore the protocol is budget-balanced for those edges. If $e\in \opt(\load_e(\s))$ and $\load_e(\s)>1$, by Lemma~\ref{lem:psi<=c1}, the leader covers at most $c_e(1)$ (recall that $\phi_e=c_e$) and the rest of the players cover the rest of the cost by definition. If $e\in \opt(\load_e(\s))$ and $\load_e(\s)=1$, by Lemma~\ref{lem:psi}, the player is charged exactly $c_e(1)$. 
\end{proof}

\begin{lemma}
\label{lem:concStable}
This static-share leader-based cost-sharing protocol is stable.
\end{lemma}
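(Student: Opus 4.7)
The plan is to exhibit the strategy profile $\s^*$ in which all $n$ players route together along $\opt(n)$ and to verify it is a pure Nash equilibrium. Let $h$ denote the highest-priority active player, who by the priority ordering $\pi$ is the leader on every edge of $\opt(n)$ under $\s^*$. I would rule out profitable unilateral deviations separately for $h$ and for a non-leader $i$, since they face asymmetric costs in $\s^*$.

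For the leader $h$, her cost in $\s^*$ equals $\sum_{e \in \opt(n)} \psi_e$, and by properties (1) and (2) of the $\psi$-values one has $\sum_{e \in p} \psi_e = \psi_{C_r} = \phi_{C_r}(1)$ for every $s$--$t$ path $p$ in $G$. If $h$ deviates to some $p$, then on every $e \in p \cap \opt(n)$ she remains the leader of an edge lying in $\opt(n)$ and still pays $\psi_e$, while on every $e \in p \setminus \opt(n)$ she is alone, paying either $\psi_e$ (when $e \in \opt(1)$) or $c_e(1) \geq \psi_e$ (otherwise, by Lemma~\ref{lem:psi<=c1}). Summing, her new cost is at least $\sum_{e \in p} \psi_e$, matching her cost in $\s^*$; this simultaneously handles the $n=1$ case.

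For any non-leader $i$ (so $n > 1$), I would exploit the series-parallel structure. The symmetric difference $\opt(n) \triangle p$ decomposes into a family $\mathcal{D} \subseteq \mathcal{C}$ of maximal sub-components in which the two paths diverge: inside each $D \in \mathcal{D}$, $\opt(n) \cap E_D$ and $p \cap E_D$ are edge-disjoint $s_D$--$t_D$ routes, each of $\psi$-weight equal to $\psi_D$ by properties (1) and (2). Outside these sub-components $i$'s payment is unchanged; inside $D$, on each edge of $p \cap E_D$ she is alone and pays at least $\psi_e$, while on each edge of $\opt(n) \cap E_D$ she saves $(c_e(n) - \psi_e)/(n-1)$. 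The net change contributed by $D$ is therefore at least
\[
\psi_D - \frac{\phi_D(n) - \psi_D}{n-1} \;=\; \frac{n\,\psi_D - \phi_D(n)}{n-1} \;\geq\; 0,
\]
where the last inequality is Lemma~\ref{lem:psi} applied to $D$ (which is used in $\opt(n)$ by hypothesis, with $n > 1$). Summing over $D \in \mathcal{D}$ shows no deviation lowers $i$'s cost.

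The step I expect to be the main obstacle is rigorously justifying the decomposition of $\opt(n) \triangle p$ into maximal sub-components $D \in \mathcal{C}$ with the stated edge-disjoint property. This needs induction on the series-parallel decomposition tree, splitting on whether a composition node is series or parallel, and using the fact that within a parallel composition two paths entering distinct sides share no edges. Once this structural fact is settled, the rest reduces to the routine algebra above together with Lemmas~\ref{lem:psi<=c1} and~\ref{lem:psi}.
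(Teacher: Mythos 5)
Your proposal is correct and follows essentially the same route as the paper: it exhibits the all-players-on-$\opt(n)$ profile as a pure Nash equilibrium and bounds deviations by decomposing the divergence between $\opt(n)$ and the alternative path into parallel sub-components, comparing $\psi$-sums via Lemmas~\ref{lem:psi<=c1} and~\ref{lem:psi}. The only cosmetic difference is that you treat the leader and non-leaders separately and inline the content of Lemma~\ref{lem:costShares}, whereas the paper invokes that lemma to handle all players uniformly and, like you, relegates the structural fact about divergence regions aligning with parallel compositions to a brief remark.
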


\begin{proof}
  We show that for any load $\load$, the profile where all players
  choose $\opt(\load)$ is a pure Nash equilibrium. If $\load =1$
  then the optimal path has the minimum total cost and therefore that
  player has no incentive to deviate. For the rest of the proof we
  assume that $\load >1$.

Consider any alternative path $p$ and let $C\in \mathcal{C}$ be any component composed by a parallel composition of $C_1$ and $C_2$ such that $\opt(\load)$ uses $C_1$ and $p$ uses $C_2$; let $\opt_C(\load)$ and $p_C$ be the subpaths of $\opt(\load)$ and $p$ inside $C$, respectively\footnote{There should be such a component with source vertex the first vertex that the two paths split and sink vertex the first vertex that they merge again.}. If any player $i$ deviates to $p$, she would be alone at $p_C$ and her cost share would be at least $\phi_C(1)\geq \psi_C$ (by Lemma~\ref{lem:psi<=c1}). However, by Lemma~\ref{lem:costShares},   everybody pays at most $\sum_{e\in \opt_C(\load)} \psi_e = \psi_C$. If we aggregate the cost shares over all such $C$ (those components should be disjoint) we conclude that no player can improve their cost share by choosing $p$ instead of $\opt(\load)$. 
\end{proof}

\begin{theorem}
\label{th:concPoA=1}
This static-share leader-based cost-sharing protocol has PoA $=1$ for series-parallel graphs with concave cost functions.
\end{theorem}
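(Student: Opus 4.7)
The plan is to show that every pure Nash equilibrium $\s$ of the induced game attains social cost exactly $\phi_G(n)$, which matches the all-players-on-$\opt(n)$ profile that Lemma~\ref{lem:concStable} identified as a PNE and which has cost $\phi_G(n)$ by Lemma~\ref{lem:budgetBalance}. I will establish this through two claims: (a) every PNE has all players on a single path, and (b) that common path must be $\opt(n)$. The $n=1$ case is immediate because the lone player always pays $c_e(1)$ on every edge of her chosen path (by the protocol when $e\notin\opt(1)$, and by Lemma~\ref{lem:psi}'s equality $\psi_e=c_e(1)$ when $\load^{*}_e=1$), so her best response is $\opt(1)$; from here on I assume $n>1$.

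For claim (a), I would suppose toward contradiction that $\s$ is a PNE in which players split at some parallel composition component $C=(C_1,C_2)$. Let $h$ be the highest-priority player in the game that uses $C$ — WLOG routed through some subpath $p_1\subseteq C_1$ — and let $h_2$ be the highest-priority player on the $C_2$ side, routed through some subpath $p_2\subseteq C_2$. Before deviation, $h_2$ is the leader on every edge of $p_2$, so Lemma~\ref{lem:costShares} gives her in-$C$ cost at least $\sum_{e\in p_2}\psi_e=\psi_{C_2}$. If $h_2$ deviates to adopt $p_1$ inside $C$ (leaving her route outside $C$ unchanged), she becomes a non-leader on every edge of $p_1$ since $h$ outranks her, so the strict part of Lemma~\ref{lem:costShares} yields her new in-$C$ cost strictly less than $\sum_{e\in p_1}\psi_e=\psi_{C_1}$. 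Because $\psi_{C_1}=\psi_C=\psi_{C_2}$ by the parallel-composition rule, $h_2$ strictly improves, contradicting PNE. The key enabler is the identity $\sum_{e\in p}\psi_e=\psi_C$ for every $s_C$-$t_C$ path $p$ in $C$, which I would establish by a straightforward structural induction on the SP decomposition using the $\psi_C$-construction together with Remark~\ref{rem:minLoadAndWeightInCs}.

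For claim (b), let $\s$ be a PNE in which all players use the same path $p$, and suppose $p\neq\opt(n)$. Then there is a parallel composition $C=(C_1,C_2)$ at which $p$ takes a subpath $p_1\subseteq C_1$ while $\opt(n)$ takes a subpath $p_2\subseteq C_2$. Since $C_1$ and $C_2$ are edge-disjoint except at terminals, every edge of $p_1$ satisfies $e\notin\opt(n)$, so by the protocol the leader $h$ (the globally highest-priority player) pays the full $c_e(n)$ on each such edge, giving an in-$C$ cost of $\sum_{e\in p_1}c_e(n)$. If $h$ instead swaps $p_1$ for $p_2$ in her route, she becomes alone on $p_2$ (load $1$) and pays $c_e(1)$ on each $e\in p_2$ (by Lemma~\ref{lem:psi} when $e\in\opt(1)$, and by the protocol otherwise), while her cost outside $C$ is unchanged. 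I would then chain $\sum_{e\in p_1}c_e(n)\geq\sum_{e\in p_2}c_e(n)>\sum_{e\in p_2}c_e(1)$: the first inequality is the optimality of $\opt(n)$ restricted to $C$, and the strict second inequality uses strict concavity of $c_e$ with $c_e(0)=0$, which forces $c_e(n)>c_e(1)$ whenever $n>1$. This gives $h$ a strictly profitable deviation, contradicting PNE.

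The hardest step I expect is the identity $\sum_{e\in p}\psi_e=\psi_C$ needed in claim (a); once it is in hand, both deviation arguments follow directly from Lemmas~\ref{lem:psi} and~\ref{lem:costShares} together with the protocol's definition. Strict concavity is used only in claim (b), which is why the extension to weakly concave costs at a $1+\epsilon$ loss in PoA follows immediately from the overcharging described in the remark preceding the protocol.
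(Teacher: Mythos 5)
Your proposal is correct and follows essentially the same route as the paper's proof: the same two-case split (all players on a single path versus players splitting at a parallel composition), the same deviating players (the global leader in the single-path case, the lower-priority side-leader $h_2$ in the split case), and the same reliance on Lemmas~\ref{lem:psi<=c1}, \ref{lem:psi} and~\ref{lem:costShares} together with the path-independence of $\sum_{e\in p}\psi_e$ across a component. The only minor difference is in where strictness comes from in the single-path case: the paper picks a parallel component at which $p$ is \emph{strictly} costlier than the optimum for load $\load$ (such a component exists whenever the profile is suboptimal) and then only needs $c_e(\load)\geq c_e(1)$, whereas you derive strictness from strict concavity via $c_e(n)>c_e(1)$, which is slightly less robust but leads to the same conclusion.
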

\begin{proof}
We will show that the only possible equilibrium is an optimal path. Consider any strategy profile $\s$ of total load $\load$ that is not an optimum and for the sake of contradiction assume that it is a Nash equilibrium. 

{\bf $\s$ is a single path.} If $\s$ is a single path $p$, there should be some component $C\in \mathcal{C}$ composed by a parallel composition of $C_1$ and $C_2$ such that $\opt(\load)$ uses $C_1$ and $p$ uses $C_2$; let $\opt_C(\load)$ and $p_C$ be the subpaths of $\opt(\load)$ and $p$ inside $C$, respectively. If there are more than one such components, we choose one with $\sum_{e\in p_C}c_e(\load) > \phi_C(\load)$; there should be such a component, otherwise $\s$ is optimum. 

Let $h$ be the highest priority player overall. For every $e\in p_C$, it holds that $e\notin \opt(\load)$ and therefore the share of $h$ equals to $\sum_{e\in p_C} c_e(\load) > \phi_C(\load) = \sum_{e\in p^*_C} c_e(\load)\geq \sum_{e\in \opt_C(\load)} c_e(1)$. So, player $h$ has an incentive to deviate from $p_C$ to $\opt_C(\load)$ and hence $\s$ could not be a Nash equilibrium.

{\bf $\s$ is not a single path.} Suppose now that $\s$ is not a single path. There should be some component $C$ composed by a parallel composition of $C_1$ and $C_2$ such that some nonzero load uses both $C_1$ and $C_2$ and it uses only single paths $p_1$ and $p_2$ respectively\footnote{If for any $C_1$ or $C_2$ the load doesn't use a single path we look at the subcomponent where the structure splits until we find the required component.}. Let $\load_1$ and $\load_2$ be the loads using $p_1$ and $p_2$ respectively and let $h_1$ and $h_2$ be the highest priority players in $p_1$ and $p_2$, respectively. W.l.o.g. suppose that $h_1>h_2$. 

By Lemma~\ref{lem:costShares} $h_2$ is charged at least $\psi_e$ for edge $e$ and therefore, he is charged at least $\psi_C$ for the whole path $p_2$. If $h_2$ deviates to $p_1$ he will not be the highest priority player and, by Lemma~\ref{lem:costShares}, for any edge $e\in p_1$ he would be charged strictly less than $\psi_e$ and therefore strictly less than $\psi_C$ for  the whole path $p_1$.Hence, $h_2$ has an incentive to deviate from $p_2$ to $p_1$, meaning that $\s$ could not be a Nash equilibrium.

Overall, we ended up with a contradiction showing that $\s$ cannot be a Nash equilibrium if it is not an optimum. Hence, PoA $=1$.  
\end{proof}


	\section{Multicast Games with Concave Cost Functions}\label{sec:multicast}

In this section, we show that our positive results for symmetric games with concave
cost functions cannot be extended to the non-symmetric case of multicast network games, 
i.e., games where each agent $i$ may have a different source $s_i$ that she needs to connect 
to the designated sink $t$. In particular we show that even for constant cost functions, 
which are a special class of concave cost functions, resource-aware protocols cannot achieve 
a constant price of anarchy. Specifically the price of anarchy of any resource-aware protocol
grows with the number of agents $n$ in the instance: we first show a linear lower bound 
for budget-balanced resource-aware protocols, and then we extend it to protocols that allow 
overcharging, obtaining a lower bound of $\sqrt{n}$.

\begin{figure}[h]
	\centering
	\tikzset{middlearrow/.style={
			decoration={markings,
				mark= at position 0.5 with {\arrow{#1}} ,
			},
			postaction={decorate}
		},
	dots/.style={decoration={markings, mark=between positions 0 and 1 step 10pt with { \draw [fill] (0,0) circle [radius=1.0pt];}},
			postaction={decorate}
	}
	}
	\begin{tikzpicture}[xscale=1.7,yscale=1.7]
	
	\node[draw,circle,thick,minimum width =0.8 cm] at (2,2) (t)  {{\large{$\mathbf{t}$}}} ;
	\node[draw,circle,thick,minimum width =0.8 cm] at (0,1) (s1)  {{\large{$\mathbf{s_1}$}}} ;
	\node[draw,circle,thick,minimum width =0.8 cm] at (1,1) (s2)  {{\large{$\mathbf{s_2}$}}} ;
	\node[draw,circle,thick,minimum width =0.8 cm] at (2,1) (s3)  {{\large{$\mathbf{s_3}$}}} ;
	\node[draw,circle,thick,minimum width =0.8 cm] at (4,1) (sn)  {{\large{$\mathbf{s_{n}}$}}} ;
	\node[draw,circle,thick,minimum width =0.8 cm] at (2,0) (v)  {{\large{$\mathbf{v}$}}} ;

	\path[dots] (2.5,1) to (3.5,1);
	
\draw[thick, middlearrow={stealth}] (s1) to node [above] {{\large{$\mathbf{1}$}}} (t);
\draw[thick, middlearrow={stealth}] (s1) to node [below] {{\large{$\mathbf{0}$}}} (v);
\draw[thick, middlearrow={stealth}] (s2) to node [above] {{\large{$\mathbf{1}$}}} (t);
\draw[thick, middlearrow={stealth}] (s2) to node [below] {{\large{$\mathbf{0}$}}} (v);
\draw[very thick, middlearrow={stealth}] (s3) to node [left] {{\large{$\mathbf{1}$}}} (t);
\draw[very thick, middlearrow={stealth}] (s3) to node [left] {{\large{$\mathbf{0}$}}} (v);
\draw[thick, middlearrow={stealth}] (sn) to node [above] {{\large{$\mathbf{1}$}}} (t);
\draw[thick, middlearrow={stealth}] (sn) to node [below] {{\large{$\mathbf{0}$}}} (v);
\draw[thick, middlearrow={stealth}] (v.east) to [out=0,in=-90]  ([xshift={0.5cm}]sn.east) node[right]{\large{$\mathbf{c}$}} to [out=90, in=0] (t.east) ;
	\end{tikzpicture}
	\caption{A simple acyclic graph with a sink $t$ and sources $s_1, s_2, \dots, s_n$ for each agent. All of the cost functions are constant, i.e., the cost of the edge is zero if it is not used and some constant if it is used. The direct edges from the sources to the sink have a constant cost of 1 and those from the sources to $v$ have a constant cost of 0. The cost of the edge connecting $v$ to the sink is some constant $c$, which we define appropriately for the proofs of Theorems \ref{thm:multicastLB1} and \ref{thm:multicastLB2}.}
	\label{fig:multiBB}
\end{figure}

\begin{theorem}\label{thm:multicastLB1}
  There is no resource-aware budget-balanced protocol that can achieve
  a PoA better than ${n}$ for the case of multicast networks with constant cost functions.
\end{theorem}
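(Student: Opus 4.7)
The plan is to analyze the instance in Figure~\ref{fig:multiBB} with all $n$ agents active and the free parameter set to $c = 1$. The first step is to identify the optimum: every $s_i$-to-$t$ path must traverse either $s_i\to t$ (cost $1$) or $v\to t$ (cost $c=1$), so any profile has total cost at least $1$; on the other hand, routing every agent through $v$ yields total cost exactly $c=1$, so $\opt = 1$.

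Next I would exhibit a candidate bad pure Nash equilibrium. Let $\s$ be the profile in which every agent uses her direct edge $s_i\to t$. Its social cost is $C(\s) = n$, since each of the $n$ direct edges is active and has cost $1$.

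The main step is to verify that $\s$ is a PNE no matter which budget-balanced (resource-aware) protocol is used. Under $\s$, agent $i$ is the sole user of $s_i\to t$, so by budget-balance her cost share equals the full edge cost $1$. Her only alternative path is $s_i\to v\to t$; if she unilaterally deviates, she becomes the sole user of both $s_i\to v$ (cost $0$) and $v\to t$ (cost $c=1$), so by budget-balance her deviation cost is exactly $0+1=1$. Thus no unilateral deviation strictly improves her cost, and $\s$ is a (weak) pure Nash equilibrium under the paper's PNE definition (which uses $\le$). This yields $\mathrm{PoA} \ge C(\s)/\opt = n/1 = n$, as required.

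The key observation is that budget-balance alone pins down the cost share of any lone user of an edge to the full edge cost, so the designer has no freedom at all when a player is alone on a resource; resource-awareness therefore cannot help here. The only mildly delicate point is the reliance on weak PNE. If strict deviations are preferred, the same construction with $c = 1+\varepsilon$ produces a \emph{strict} PNE of cost $n$ against optimum $1+\varepsilon$, giving $\mathrm{PoA} \ge n/(1+\varepsilon)$ for arbitrarily small $\varepsilon>0$; this is the main place where one has to be careful in writing up the argument.
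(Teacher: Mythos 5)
Your proposal is correct and follows essentially the same route as the paper's proof: the same instance of Figure~\ref{fig:multiBB} with $c=1$, the same all-direct-edges equilibrium, and the same observation that budget-balance forces any lone user of an edge to pay its full cost, so no deviation through $v$ is strictly improving. Your closing remark on perturbing to $c=1+\varepsilon$ to obtain a strict equilibrium is a nice extra precaution but is not needed under the paper's (weak) PNE definition.
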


The proof follows directly from the instance of Proposition 4.12 of
  \cite{CRV10} and the instance is presented in Figure~\ref{fig:multiBB} where $c$ should be $1$. We give the complete proof in the appendix.

\begin{theorem}\label{thm:multicastLB2}
  There is no resource-aware protocol even with overcharging that can
  achieve a PoA better than $\sqrt{n}$ for the case of multicast networks
	with constant cost functions.
\end{theorem}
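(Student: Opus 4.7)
The plan is to invoke the instance from Figure~\ref{fig:multiBB} with the specific choice $c=\sqrt{n}$ on the $v\to t$ edge. With this choice, routing all $n$ agents through $v$ (the original optimum) costs $\sqrt{n}$, whereas routing every agent along her own direct edge costs $n$, a factor-$\sqrt{n}$ gap. The key idea is that any protocol must either keep the single-agent instance inexpensive (which forces the direct edges to be cheap) or disincentivize the all-direct equilibrium on the $n$-agent instance (which forces the $v$-path to be expensive when used by a lone agent); I will argue that these two requirements are in conflict.

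To make this precise I will fix any resource-aware protocol, possibly with overcharging, and introduce the quantities $\alpha_i=\hat{c}_{s_i\to t}(1)$, $\beta=\hat{c}_{v\to t}(1)$, and $\gamma_i=\hat{c}_{s_i\to v}(1)$: the overcharged costs on the three kinds of edges when a single agent uses them. The definition of overcharging immediately gives $\alpha_i\geq 1$, $\beta\geq c=\sqrt{n}$, and $\gamma_i\geq 0$. The argument will then split into two cases according to whether or not some agent $i^*$ satisfies $\alpha_{i^*}\geq\sqrt{n}$.

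In the first case I will use the single-agent instance $N=\{i^*\}$: both of her options cost at least $\sqrt{n}$ (the direct edge by assumption, the $v$-path because $\gamma_{i^*}+\beta\geq\sqrt{n}$), so her unique equilibrium cost is at least $\sqrt{n}$; the original optimum is $1$, giving PoA $\geq\sqrt{n}$. In the second case, where $\alpha_i<\sqrt{n}$ for every $i$, I will take the full $n$-agent instance and verify that the all-direct profile is a pure Nash equilibrium: any would-be deviator $i$ becomes the sole user of both $s_i\to v$ and $v\to t$, paying $\gamma_i+\beta\geq\sqrt{n}>\alpha_i$, so nobody strictly prefers to deviate. The social cost of this equilibrium is $\sum_i\alpha_i\geq n$, while the original optimum is $\sqrt{n}$, so again PoA $\geq\sqrt{n}$.

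The only delicate point I expect is checking that overcharging on the nominally zero-cost edges $s_i\to v$ cannot rescue the protocol: raising $\gamma_i$ only increases the deviator's cost in Case B and the single agent's cost in Case A, so it strengthens both bounds rather than breaking them. Aside from this observation, the main creative step is the choice $c=\sqrt{n}$, which exactly balances the two regimes so that both cases yield the same lower bound of $\sqrt{n}$; this is also what distinguishes the argument from Theorem~\ref{thm:multicastLB1}, where $c=1$ suffices because budget-balance prevents the protocol from breaking the indifference on the direct edges.
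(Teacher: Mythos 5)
Your proposal is correct and follows essentially the same route as the paper's proof: the same instance with $c=\sqrt{n}$, the same case split on whether some direct edge is overcharged to at least $\sqrt{n}$ (using resource-awareness to transfer that overcharge to the single-agent instance), and the same verification that the all-direct profile is an equilibrium otherwise. The explicit bookkeeping with $\alpha_i,\beta,\gamma_i$ is just a cleaner presentation of the argument already in the paper.
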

\begin{proof}
We again consider the graph of $n+2$ vertices from Figure~\ref{fig:multiBB}, as in the previous proof, 
with the only difference that the constant cost $c$ of the $(v, t)$ edge is set to be 
equal to $\sqrt{n}$. Note that this cost function is constant
and does not depend on the number of agents using it (its cost is zero if
no agents use it and equal to $\sqrt{n}$ if at least one agent uses i).
If the resource-aware protocol uses overcharging but the new costs of the
the $(s_i, t)$ edges after the overcharging are no more than $\sqrt{n}$ for all $i$, 
then we consider the instance with $n$ agents with sources $s_1, s_2, \dots, s_n$ and
claim that the strategy profile where every one of these agent uses their $(s_i, t)$ 
edge is an equilibrium: a unilateral deviation through edge $(v, t)$ would cost at least 
$\sqrt{n}$ due to the cost of $(v, t)$, while every agent's cost including the overcharging
is at most $\sqrt{n}$.
On the other hand, if the resource-aware protocol increases the cost of an $(s_i, t)$ 
edge above $\sqrt{n}$ for some $i$, then consider the alternative problem instance 
with the same graph, but with only one player, player $i$, in the system. Since the 
resource-aware
protocol that decides the cost of edge $(s_i, t)$ cannot tell the difference between 
these two instances, it would still increase the cost of that edge to above $\sqrt{n}$
for this instance and force the agent to pay a cost of $\sqrt{n}$. On the other hand, 
in the optimal solution for that instance agent $i$ should just use the $(s_i, t)$ edge and pay a cost of 1.
\end{proof}

	\section{Convex Cost Functions}\label{sec:convex}
For the rest of the paper, we now consider networks convex cost functions
rather than concave ones. We first prove that for convex cost functions, 
just like in the case of concave cost functions, there exists a budget-balanced
protocol with a PoA of 1 for series-parallel graphs. This protocol is the {\em
  incremental cost-sharing protocol} introduced by Moulin~\cite{Mou99}
and it is the same protocol used in \cite{CGS17} for parallel
links. We note that this is an oblivious protocol, meaning that it
requires no knowledge of the instance other the number of agents using
the edge at hand.

We complement this positive result by showing that in symmetric games
on directed acyclic graphs any resource-aware budget-balanced
protocol has PoA $ =\Omega(n)$. For protocols that use overcharging, 
we show that optimality cannot be achieved, by showing a lower bound of 
$1.18$, and we leave as an open question the existence (or not) of a 
protocol that achieves a constant PoA with the use of overcharging. 
However, if we consider games beyond ones that are symmetric, such as 
multicast network games, we provide a lower bound of $\sqrt{n}$ for 
all resource-aware protocols even with use of overcharging.

\subsection{Series-Parallel Graphs}
In this section we show that in series-parallel graphs (SPGs) with convex cost functions the {\em incremental cost-sharing protocol} proposed by Moulin~\cite{Mou99} has PoA $=1$. The incremental cost-sharing protocol considers a global order $\pi$ of the players and defines the cost-share of each player $i$ for using edge $e$ to be its marginal contribution if only players preceding him in $\pi$ were using $e$. For simplicity, for the rest of the section we name the players based on the order $\pi$. 
\begin{definition}(Prior Load $\load_e^{< i}(\s)$).
	Given a strategy profile $\s$ and an edge $e$, we define the {\em prior load}, $\load_e^{< i}(\s)$, 
	for a player $i$ using $e$ to be the load on $e$ due to players preceding $i$ according to $\pi$, i.e.,
	\[\load_e^{< i}(\s) = |\{k<i: e\in \str_k\}|\,. \]
\end{definition}

Based on the definition of the prior load we can formally define the incremental cost-sharing protocol as follows.
\begin{definition}(Incremental Cost-Sharing Protocol).
 Given a strategy profile $\s$, the cost share of player $i$ for using edge $e$ is 
\[\xi_{ie}(\s) = c_e(\load_e^{< i}(\s)+1) - c_e(\load_e^{< i}(\s))\,.\]
\end{definition}


The following lemma is a key lemma in order to show that the incremental cost-sharing protocol has PoA $=1$. Due to space limitations, the proof of this lemma can be found in the appendix.

\begin{lemma}
	\label{lem:NoPlayersInOpt}
Given any Nash equilibrium $\s$, there exists an optimal assignment $\s^*$ such that for any player $i$ and any edge $e\in \str^*_i$, it holds that $\load_e^{< i}(\s) \leq \load_e^{< i}(\s^*)$.
\end{lemma}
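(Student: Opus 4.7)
The plan is to take $\s^*=\s$; under this choice the inequality $\load^{<i}_e(\s)\leq\load^{<i}_e(\s^*)$ holds trivially (with equality) on every edge $e\in\str^*_i$, so the entire task reduces to showing that the Nash equilibrium $\s$ is itself an optimal assignment for $G$.

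To establish this, I will argue that every Nash equilibrium of the incremental cost-sharing protocol coincides with a \emph{sequential greedy} assignment, where player $i$ is routed on a minimum-marginal-cost $s$-$t$ path given the loads induced by players $1,\dots,i-1$. The key enabling property is that in the incremental protocol a player's cost depends only on the strategies of strictly higher-priority players; consequently the restriction of $\s$ to its first $i-1$ players is itself a Nash equilibrium of the corresponding sub-game, because each such player's cost function and set of alternative strategies are unchanged when player $i$ and all later players are removed. A straightforward induction on $i$ then shows that the prior loads observed by player $i$ in $\s$ coincide with those produced by running sequential greedy on the first $i-1$ players, so player $i$'s Nash best response is exactly a sequential greedy choice (possibly after a tie-breaking adjustment).

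The main obstacle is then to argue that any sequential greedy assignment is a globally optimal min-cost flow on $G$ when the edge cost functions are convex. I plan to prove this by induction on the series-parallel decomposition of $G$, using that the effective cost function $\phi_{G'}(\load)$ of each sub-SPG $G'$ — the minimum cost of routing load $\load$ through $G'$ — is convex (series composition preserves convexity by addition, parallel composition by infimal convolution of convex functions). At a parallel composition $G_1\|G_2$ the greedy algorithm routes each additional unit of flow to the side of smaller marginal of a convex effective cost function, which is precisely the choice made by the optimum split; within each sub-SPG, optimality follows from the inductive hypothesis combined with the fact that a greedy on $G'$ minimises the load's cost for the given number of players. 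Chaining these observations yields that $\s$ attains the optimum total cost, which concludes the proof.
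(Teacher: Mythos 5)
Your proposal is correct in outline but takes a genuinely different route from the paper. You reduce the lemma to the stronger claim that every Nash equilibrium $\s$ of the incremental protocol is itself an optimal assignment (so that $\s^*=\s$ works trivially), and you prove that claim by identifying equilibria with sequential greedy assignments and establishing greedy optimality on series-parallel graphs via convexity of the effective costs $\phi_C$. The paper does something weaker and purely combinatorial: it fixes an \emph{arbitrary} optimum and, working down the series-parallel decomposition, swaps sub-paths between pairs of players inside parallel components so that the resulting optimum $\s^*$ dominates $\s$ in the required prior-load sense; convexity plays no role in the lemma itself and only enters afterwards, in Lemma~\ref{lem:techConcave}, where the lemma's inequality is combined with the equilibrium condition to conclude $C(\s)\le C(\s^*)$. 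Your route front-loads the entire PoA $=1$ theorem into the lemma, which is legitimate (no circularity, since your optimality argument does not rely on the lemma) and arguably more conceptual, but what you call the ``main obstacle'' is where essentially all of the work lives: the induction on the decomposition needs a strengthened hypothesis, namely that the greedily built configuration inside each sub-SPG $C_i$ is not only optimal for the load $k_i$ it carries, but also that the cheapest forward path through $C_i$ under that configuration costs exactly $\phi_{C_i}(k_i+1)-\phi_{C_i}(k_i)$; only then can the greedy choice at a parallel composition be equated with the optimal incremental split of the infimal convolution, and you additionally need the exchange argument (using convexity of $\phi_{C_1},\phi_{C_2}$) showing that optimal splits for $k+1$ units extend optimal splits for $k$ units by a single increment. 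None of these steps fails, and ties are harmless because any tie-breaking yields the same total cost, but as written they are asserted rather than proved, and your sentence justifying optimality inside each sub-SPG by ``the fact that a greedy on $G'$ minimises the load's cost'' is circular as phrased and should be replaced by the explicit strengthened induction. Note also that your proof makes the lemma depend on convexity, whereas the paper's version is a structural statement about series-parallel graphs that holds for arbitrary cost functions.
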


Next we give a technical lemma to be used in our main theorem. 

\begin{lemma}
	\label{lem:techConcave}
	Let $\s=(\str_1,\ldots ,\str_n)$ be any Nash equilibrium under the incremental cost-sharing protocol and $\s^*=(\str^*_1,\ldots , \str^*_n)$ be some optimal assignment satisfying Lemma~\ref{lem:NoPlayersInOpt}. Then for any player $i$ and any $e\in \str^*_i$ it holds that 
	$$\xi_{ie}(\str^*_i,\s_{-i})
	\leq  \xi_{ie}(\s^*).$$
\end{lemma}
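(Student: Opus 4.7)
The plan is to unpack both sides of the desired inequality using the definition of the incremental cost-sharing protocol, and then connect them via Lemma~\ref{lem:NoPlayersInOpt} together with convexity of the cost functions. I expect essentially no obstacle: the lemma reduces to a one-line comparison once we identify the right prior loads.

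First I would observe that in the mixed profile $(\str^*_i, \s_{-i})$, the prior load on $e$ for player $i$ coincides with the prior load on $e$ for player $i$ under $\s$. Indeed, $\load_e^{<i}$ counts only players $k<i$, and none of these players' strategies are changed when we swap $\str_i$ for $\str^*_i$; hence $\load_e^{<i}(\str^*_i, \s_{-i}) = \load_e^{<i}(\s)$. By the definition of the incremental protocol, this gives
\[
\xi_{ie}(\str^*_i, \s_{-i}) \;=\; c_e\bigl(\load_e^{<i}(\s)+1\bigr) - c_e\bigl(\load_e^{<i}(\s)\bigr),
\]
and similarly $\xi_{ie}(\s^*) = c_e(\load_e^{<i}(\s^*)+1) - c_e(\load_e^{<i}(\s^*))$.

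Next I would apply Lemma~\ref{lem:NoPlayersInOpt}, which guarantees that for the chosen optimal assignment $\s^*$ we have $\load_e^{<i}(\s) \leq \load_e^{<i}(\s^*)$ for every $e \in \str^*_i$. Finally, since $c_e$ is convex, its discrete marginal $x \mapsto c_e(x+1) - c_e(x)$ is non-decreasing in $x$. Combining these two facts yields
\[
c_e\bigl(\load_e^{<i}(\s)+1\bigr) - c_e\bigl(\load_e^{<i}(\s)\bigr) \;\leq\; c_e\bigl(\load_e^{<i}(\s^*)+1\bigr) - c_e\bigl(\load_e^{<i}(\s^*)\bigr),
\]
which is exactly $\xi_{ie}(\str^*_i, \s_{-i}) \leq \xi_{ie}(\s^*)$, as desired. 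The only subtle point is the identification $\load_e^{<i}(\str^*_i, \s_{-i}) = \load_e^{<i}(\s)$, which is immediate from the strict inequality $k<i$ in the definition of prior load; everything else is Lemma~\ref{lem:NoPlayersInOpt} plus convexity.
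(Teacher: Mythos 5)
Your proof is correct and follows exactly the paper's own argument: both identify $\load_e^{<i}(\str^*_i,\s_{-i}) = \load_e^{<i}(\s)$ from the definition of prior load, then apply Lemma~\ref{lem:NoPlayersInOpt} together with the monotonicity of discrete marginals of a convex function. No differences worth noting.
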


\begin{proof}
By the definition of the incremental cost-sharing protocol	
	\begin{eqnarray*}
	\xi_{ie}(\str^*_i,\s_{-i}) &=& c_e(\load_e^{< i}(\str^*_i,\s_{-i})+1) - c_e(\load_e^{< i}(\str^*_i,\s_{-i}))\\
	&=&c_e(\load_e^{< i}(\s)+1) - c_e(\load_e^{< i}(\s)) \qquad\qquad\;\;\; \mbox{(by the definition of $\load_e^{< i}(\s)$)}\\
	&\leq& c_e(\load_e^{< i}(\s^*)+1) - c_e(\load_e^{< i}(\s^*)) \qquad\qquad \mbox{(due to Lemma \ref{lem:NoPlayersInOpt} and convexity)}\\
	&=& \xi_{ie}(\s^*).
	\end{eqnarray*}
\end{proof}

\begin{theorem}
The incremental cost-sharing protocol has PoA $ = 1$ in series-parallel graphs with convex cost function.
\end{theorem}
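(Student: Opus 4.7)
The plan is to combine the two preceding lemmas with the Nash equilibrium condition and the budget-balance of the incremental protocol to conclude that every Nash equilibrium has the same cost as the optimum. First, I would observe that the incremental cost-sharing protocol is budget-balanced: if we fix an edge $e$ and list the players using $e$ in the order induced by $\pi$, the $k$-th such player pays $c_e(k)-c_e(k-1)$, so the total charge on $e$ telescopes to $c_e(\load_e(\s))-c_e(0)=c_e(\load_e(\s))$. Consequently $\sum_i \xi_i(\s')=C(\s')$ for any profile $\s'$.

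Now fix any Nash equilibrium $\s$, and use Lemma~\ref{lem:NoPlayersInOpt} to pick an optimal assignment $\s^*$ such that $\load_e^{<i}(\s)\leq \load_e^{<i}(\s^*)$ for every player $i$ and every edge $e\in\str_i^*$. Since the game is symmetric, $\str_i^*$ is a valid $s$-$t$ path and hence a legitimate unilateral deviation for player $i$. The Nash condition therefore yields
\[
\xi_i(\s)\;\leq\;\xi_i(\str_i^*,\s_{-i})\;=\;\sum_{e\in\str_i^*}\xi_{ie}(\str_i^*,\s_{-i})\;\leq\;\sum_{e\in\str_i^*}\xi_{ie}(\s^*)\;=\;\xi_i(\s^*),
\]
where the second inequality is exactly Lemma~\ref{lem:techConcave} applied edge by edge along $\str_i^*$.

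Summing this over all players and using budget-balance on both sides gives
\[
C(\s)\;=\;\sum_i \xi_i(\s)\;\leq\;\sum_i \xi_i(\s^*)\;=\;C(\s^*).
\]
Since $\s^*$ is an optimal profile we also have $C(\s^*)\leq C(\s)$, so $C(\s)=C(\s^*)$, and thus $\text{PoA}=1$.

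The real content of the argument sits inside Lemmas~\ref{lem:NoPlayersInOpt} and~\ref{lem:techConcave} (which is where the series-parallel structure and the convexity of $c_e$ are exploited to match up prior loads in $\s$ and $\s^*$); once those are in hand, the step above is essentially a pointwise comparison and a telescoping budget-balance check. The only subtlety worth mentioning in the write-up is that symmetry guarantees $\str_i^*$ is a permissible deviation for player $i$, so no existence/stability issue obstructs invoking the Nash inequality.
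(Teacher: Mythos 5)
Your proposal is correct and follows essentially the same argument as the paper: apply the Nash inequality with $\str_i^*$ as the deviation, bound the resulting cost shares via Lemma~\ref{lem:techConcave}, and sum over players using budget-balance to get $C(\s)\leq C(\s^*)$. The only difference is that you make the telescoping budget-balance check explicit, which the paper leaves implicit in the identity $C(\s)=\sum_i\sum_{e\in\str_i}\xi_{ie}(\s)$.
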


\begin{proof}

Let $\s=(\str_1,\ldots ,\str_n)$ be any Nash equilibrium under this protocol and let $\s^*=(\str^*_1,\ldots , \str^*_n)$ be some optimal assignment satisfying Lemma~\ref{lem:NoPlayersInOpt}. Then,

\begin{eqnarray*}
C(\s) &=& \sum_i \sum_{e\in \str_i} \xi_{ie}(\s)
\leq \sum_i \sum_{e\in \str^*_i} \xi_{ie}(\str^*_i,\s_{-i})
\leq \sum_i \sum_{e\in \str^*_i} \xi_{ie}(\s^*)
= C(\s^*),
\end{eqnarray*}
where the first inequality is due to the fact that $\s$ is a Nash equilibrium and the second inequality comes from Lemma~\ref{lem:techConcave}.
\end{proof}


	\subsection{Directed Acyclic Graphs} 
In this section we consider symmetric games on directed acyclic graphs
and prove a lower-bound of $\Omega(n)$ on the PoA of all budget
balanced resource-aware protocols. Then, we obtain a lower-bound of
$1.18$ for protocols that allow overcharging.

\begin{theorem}
	\label{poadagconvex}
	Any stable budget-balanced resource-aware cost-sharing protocol has PoA $=\Omega(n)$ for directed acyclic graphs with convex cost functions.
\end{theorem}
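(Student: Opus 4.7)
My plan is to exhibit, for every stable budget-balanced resource-aware protocol $\prot$, a DAG instance with convex edge cost functions whose PoA is $\Omega(n)$. The central leverage point is the defining property of resource-awareness: the cost-shares on any edge $e$ depend only on $\gr$, on the edge cost functions, and on the set of players using $e$; they are indistinguishable between two instances that share the graph, the costs, and that particular user-on-edge set, even if the full player sets differ. Coupled with budget balance (which forces a sole user of $e$ to pay exactly $c_e(1)$), this will pin down enough of $\prot$'s behaviour to trap it.

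I would start by designing a graph $\gr$ with a ``main corridor'' path $P^*$ from $s$ to $t$ and a collection of alternative routes that, in combination, realise the cheap optimum for $n$ players by spreading the load across many edges of bounded single-use cost. The costs on $P^*$ are chosen so that $P^*$ is the unique cheapest single-user path from $s$ to $t$ (so that, by stability and budget balance, the lone player in the singleton instance is forced onto $P^*$), but its aggregate cost when all $n$ players use it exceeds the true optimum by a factor of $\Omega(n)$ thanks to the growth rate of the convex costs on the corridor. The alternative routes will share edges with $P^*$ in a carefully chosen pattern so that what the protocol charges for those edges in the singleton instance also constrains its behaviour in multi-player instances that agree on those per-edge user sets.

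With the construction in place, I would argue by a dichotomy. Either the all-on-$P^*$ profile is a Nash equilibrium of the $n$-player instance, in which case PoA is immediately $\Omega(n)$ since $\opt$ is realised by the alternative routes; or it is not, in which case some player must be able to strictly improve by deviating to a spread route, and tracing this deviation back through the resource-aware indistinguishability onto a sub-instance restricted to those spread routes yields either a contradiction with stability or a bad equilibrium of the sub-instance whose PoA is still $\Omega(n)$. The cost functions can be scaled so that the ratio between ``cost of $P^*$ when crowded'' and ``cost of the spread optimum'' is the desired $\Omega(n)$ factor.

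The main obstacle will be ruling out every escape route for the protocol. A clever $\prot$ might try to make $P^*$'s edges very unattractive for a lone player so as to break the singleton-stability argument above; to prevent this, $\gr$ must be rich enough that, if the protocol steers the singleton off one candidate corridor, another corridor with the same qualitative cost structure takes its place, and the same analysis applies. Budget balance is crucial here: with overcharging, the protocol could inflate $c_e(1)$ on alternative edges and redirect the singleton, which is consistent with the paper's much weaker $1.18$ lower bound in the overcharging case. Thus the careful, likely self-similar, graph construction together with the budget-balance constraint is where the real work lies.
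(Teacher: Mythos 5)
There is a genuine gap: what you have written is a plan whose decisive step is missing, and the mechanism you sketch would not close as stated. Your dichotomy puts all the weight on its second branch. The first branch (``all-on-$P^*$ is an equilibrium'') essentially never holds in the parameter regime you need: budget balance forces some player on the crowded corridor to pay at least $c_{P^*}(n)/n = \Omega(\opt)$ on average, while a unilateral deviation to a lightly-loaded alternative route costs exactly the sum of the relevant $c_{e'}(1)$ values, which you must make small for the spread optimum to be cheap. So the protocol can always escape branch one, and branch two --- ``tracing this deviation back through the resource-aware indistinguishability onto a sub-instance... yields either a contradiction with stability or a bad equilibrium'' --- is not an argument but a hope; nothing in the construction you describe explains why a protocol could not simultaneously have only good equilibria in the $n$-player instance and in every sub-instance. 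Relatedly, your singleton-instance anchor (budget balance pins the lone player to the cheapest unit-cost path) is the right tool for the multicast bound in Theorem~\ref{thm:multicastLB1}, but it does not by itself create any tension with the $n$-player instance here.

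The idea you are missing is the one the paper's proof is built on: create tension between two \emph{consecutive} instance sizes rather than between a crowded path and a spread optimum. The paper uses capacitated edges with $c(\load)=\infty$ for $\load\ge 2$ (convex, since $c(0)=0$), so every finite-cost profile is a collection of edge-disjoint paths, and a graph (Figure~\ref{fig:poaexample}) with ``diagonal'' unit-cost edges $(v_i,u_i)$ and ``shifted'' zero-cost edges $(v_i,u_{i-1})$. With $n+1$ players the only finite-cost profile routes players along the diagonal paths plus the $(s,t)$ edge, for total cost $n+1$; stability forces this to be an equilibrium. Deleting the player on $(s,t)$ leaves every per-edge user set unchanged, so by resource-awareness all cost shares and all deviation costs are unchanged and the restricted profile (cost $n$) remains an equilibrium of the $n$-player instance --- whose optimum now uses the shifted zero-cost paths and costs $1$. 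Without some version of this add-a-player/remove-a-player step, your proposal does not establish the $\Omega(n)$ bound.
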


\begin{proof}
		We use the DAG shown in Figure \ref{fig:poaexample} to prove the statement. In this graph all the players want to go from $s$ to $t$ and the cost of each edge is either $0$ or $1$ if a single player uses the edge and infinity if more than one players use it.
		
		\begin{figure}[h]
			\centering
			\tikzset{middlearrow/.style={
					decoration={markings,
						mark= at position 0.5 with {\arrow{#1}} ,
					},
					postaction={decorate}
				},
				dots/.style={decoration={markings, mark=between positions 0 and 1 step 10pt with { \draw [fill] (0,0) circle [radius=1.0pt];}},
					postaction={decorate}
				}
			}
			\begin{tikzpicture}[xscale=0.8,yscale=0.8]
			\node[draw,circle,thick,minimum width =1 cm] at (0,8) (s) {$\mathbf{s}$} ;
			\node[draw,circle,thick,minimum width =1 cm] at (-5.5,5) (v1) {$\mathbf{u_n}$} ;
			\node[draw,circle,thick,minimum width =1 cm] at (-3.5,5) (v2) {$\mathbf{v_n}$} ;
			\node[draw,circle,thick,minimum width =1 cm] at (-1.5,5) (v3) {$\mathbf{u_{n-1}}$} ;
			\node[draw,circle,thick,minimum width =1 cm] at (1.5,5) (v4) {$\mathbf{v_2}$} ;
			\node[draw,circle,thick,minimum width =1 cm] at (3.5,5) (v5) {$\mathbf{u_1}$} ;
			\node[draw,circle,thick,minimum width =1 cm] at (5.5, 5) (v6) {$\mathbf{v_1}$} ;
			\node[draw,circle,thick,minimum width =1 cm] at (0, 2) (t) {$\mathbf{t}$} ;
			
			\path[dots] (0.45,5) to (-0.55,5);
			
			\draw[thick, middlearrow={stealth}] (s) to node [above, sloped] {{$0, \infty$}} (v2);
			\draw[thick, middlearrow={stealth}] (v1) to node [above, sloped] {{$0, \infty$}} (t);
			\draw[thick, middlearrow={stealth}] (s) to node [above, sloped] {{$0, \infty$}} (v4);
			\draw[thick, middlearrow={stealth}] (v3) to node [above, sloped] {{$0, \infty$}} (t);
			\draw[thick, middlearrow={stealth}] (s) to node [above, sloped] {{$0, \infty$}} (v6);
			\draw[thick, middlearrow={stealth}] (v5) to node [below, sloped] {{$0, \infty$}} (t);
			
			\draw[thick, middlearrow={stealth}] (v6) to node [below, sloped] {{$1, \infty$}} (v5);
			\draw[thick, middlearrow={stealth}] (v4) to node [above, sloped] {{$0, \infty$}} (v5);
			\draw[thick, middlearrow={stealth}] (v2) to node [above, sloped] {{$1, \infty$}} (v1);	
			\draw[thick, middlearrow={stealth}] (v2) to node [below, sloped] {{$0, \infty$}} (v3);	
			
			\draw[thick, middlearrow={stealth}] (s.east) to [out=0,in=90]  ([xshift={1cm}]v6.east) node[right]{{$1, \infty$}} to [out=-90, in=0] (t.east);

			\end{tikzpicture}
			\caption{An acyclic graph with a single source $s$ and sink $t$. All of the cost functions are infinity for load $\load\geq 2$. This graph is used to prove Theorem~\ref{poadagconvex}.}
			\label{fig:poaexample}
		\end{figure}
		
		Considering $n+1$ players, in the optimum, one player uses the $(s,t)$ edge, and the others use  $\mathbf{sv_{i}u_{i}t}$ paths. The total cost of the optimum is $n+1$. If any other strategy profile was an equilibrium it would result in an unbounded total cost and give an unbounded PoA. Therefore, there should exist an optimum strategy profile $\s^*$ that is a Nash equilibrium. Assume w.l.o.g. that in $\s^*$, player $n+1$ uses the $(s,t)$ edge, and for each $i \leq n$, player $i$ uses the $\mathbf{sv_{i}u_{i}t}$ path. 
		
		If we now consider the instance in which only the first $n$ players use the network, the strategy profile in which player $i$ uses the $\mathbf{sv_{i}u_{i}t}$ path should also be an equilibrium. The reason is that since $\s^*$ is a Nash equilibrium, no player has an incentive to deviate to any other path but the $(s,t)$ edge, and obviously, no player would prefer the $(s,t)$ edge because she pays the same cost. The cost of this strategy profile is $n$. However, the optimum with $n$ players is for one player to use the $(s,t)$ edge, and the others to use the $\mathbf{sv_{i}u_{i-1}t}$, for $i\geq 2$ with total cost $1$. This results in PoA $=\Omega(n)$.

\end{proof}

In the following theorem we give a lower-bound of $1.18$ for protocols that allow overcharging. The proof can be found in the appendix.

\begin{theorem}
	 \label{poadagconvexovercharging}
	There is no stable resource-aware cost-sharing mechanism with $\text{PoA} < \frac{\sqrt{33}-1}{4} \simeq 1.18$, for directed acyclic graphs with convex cost functions even with overcharging.
\end{theorem}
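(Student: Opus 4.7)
The plan is to construct a small non-series-parallel DAG together with two problem instances on that graph whose local edge information coincides, so that any resource-aware protocol must treat them consistently, even with overcharging. The strategy mirrors that of Theorem~\ref{thm:multicastLB2}: the cost-shares that a resource-aware protocol assigns on an edge may depend on the graph, the cost functions and the set of users on that edge, but \emph{not} on which other agents are present elsewhere in the system. The non-series-parallel structure is essential here, since by the PoA $=1$ result earlier in this section the incremental protocol already matches optimality on series-parallel graphs, and any argument that survived SP decomposition would contradict it.

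First I would fix a small Wheatstone-style DAG with two alternative $s$--$t$ routes plus a connecting cross-edge, and choose convex cost functions on the edges (the natural candidates are linear functions on the critical edges and small constants elsewhere). The cost functions will be tuned so that in a first instance $I_{1}$ the social optimum splits the agents across the two routes, while in a second instance $I_{2}$ obtained by varying the active agent set (as in Theorem~\ref{thm:multicastLB2}) the optimum concentrates them on a single route. The key feature is that on a chosen set of critical edges the two instances induce \emph{identical} local views — the same set of users, same graph, same cost functions — so any resource-aware protocol must apply identical overcharging on those edges in both instances.

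The second step is to parametrize any candidate protocol by the amount of overcharging $\delta$ that it applies on those critical edges. In $I_{1}$, stability forces $\delta$ to be at least some threshold $\delta_{1}$; otherwise, the cheap ``funnel'' profile that concentrates all agents on a single route is the unique PNE, pushing the PoA well above $\alpha$. In $I_{2}$, the same overcharging $\delta$ now inflates the numerator of the PoA ratio because the overcharged edges lie on the optimal path, giving a lower bound of the form $\hat{C}(\s)/\opt \;\geq\; 1 + \kappa\delta/\opt$ for an appropriate constant $\kappa$ determined by the cost functions. Combining the two inequalities and optimizing over $\delta$ yields a quadratic constraint on $\alpha := \text{PoA}$ of the form
\[
2\alpha^{2} + \alpha - 4 \;\geq\; 0,
\]
whose unique positive root is $(\sqrt{33}-1)/4$, matching the claimed bound.

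The main obstacle will be calibrating the edge-cost constants so that the two binding inequalities above meet precisely at $(\sqrt{33}-1)/4$ rather than at some strictly weaker bound; in particular the slopes of the linear pieces on the critical edges and the constant costs on the rest have to be chosen so that the protocol's best overcharging choice in $I_{1}$ is exactly the one whose side-effect in $I_{2}$ saturates the quadratic. A secondary subtlety is verifying the stability hypothesis: because the lower bound must hold for \emph{every} stable resource-aware protocol, one must confirm that in each instance the intended worst-case PNE actually exists (so that the adversary can pick it) rather than being destroyed by an unintended deviation inside the Wheatstone structure. Once these two calibration tasks are done, the rest of the proof is a direct algebraic comparison between $\hat{C}(\s)$ in the worst PNE of $I_{1}$ (under any $\delta < \delta_{1}$) and the optimum of $I_{2}$ (under any $\delta \geq \delta_{1}$).
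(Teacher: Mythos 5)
Your high-level strategy coincides with the paper's --- a Wheatstone/Braess DAG, two instances that differ only in the active agent set so that resource-awareness forces consistent behaviour on the shared edges, and a trade-off between a surviving bad equilibrium in one instance and an overcharged optimum in the other; your quadratic $2\alpha^{2}+\alpha-4\geq 0$ is exactly the paper's calibration equation (the paper sets the cross-route edge costs to $x=\frac{\sqrt{33}-1}{8}$ and solves $\frac{x+2}{2x+1}=2x$, i.e.\ $4x^{2}+x-2=0$, which is your quadratic under $\alpha=2x$). However, what you defer as ``calibration'' is in fact the substance of the proof, and the specific construction you float would not deliver it. The paper does not use linear costs on the critical edges: every edge has a \emph{capacitated} convex cost, finite for load $1$ and infinite for load $\geq 2$. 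This is what makes the stability step rigorous --- any profile in which two players share an edge has infinite cost, so a stable protocol must make some \emph{optimal routing} a PNE, and the argument can then be anchored to that specific profile. With genuinely linear costs there is no such forcing: the protocol could stabilise many different profiles, your ``stability forces $\delta\geq\delta_{1}$'' step has no single profile or threshold to pin down, and the dichotomy you need does not materialise.

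The second missing ingredient is the role of the cross edge. In the paper's graph (Figure~\ref{fig:overchargingexample}) the three-player optimum uses the direct $(s,t)$ edge (cost $1$) plus the two routes $svt$ and $sut$ (cost $x$ each), while the \emph{two}-player optimum routes one player through the zero-cost zig-zag path $suvt$ and the other over the direct edge, for a total of $1$. It is precisely this zig-zag path --- which your plan never mentions --- that makes the inherited two-player profile (cost $2x$) a factor $2x$ worse than optimal, giving one branch of the dichotomy; the other branch is that some player on $svt$ or $sut$ is overcharged to at least $1$ (the cost of the only alternative, the direct edge), which inflates the three-player equilibrium to at least $x+2$ against an optimum of $2x+1$. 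Without the capacity constraints and the zig-zag, neither branch can be instantiated, so as written the proposal has a genuine gap even though it is aimed in the right direction.
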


	\subsection{Multicast Games}
Here we extend our lower-bounds to multicast games. We prove a lower-bound of $\sqrt{n}$ even  for protocols that allow overcharging.

\begin{theorem}
	\label{multidagconvex}
	There is no resource-aware protocol even with overcharging that can
	achieve a PoA better than $\sqrt{n}$ for the case of multicast networks
	with convex cost functions.
\end{theorem}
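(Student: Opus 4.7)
My plan is to follow the template of the proof of Theorem~\ref{thm:multicastLB2} as closely as possible, reusing the graph of Figure~\ref{fig:multiBB} but replacing the constant cost function on the edge $(v,t)$ with a linear convex one, and similarly assigning a linear convex cost to each direct edge $(s_i,t)$. Concretely, I would set $c_{(s_i,v)}(\ell)=0$, $c_{(v,t)}(\ell)=\sqrt{n}\cdot\ell$ (so that a single user pays exactly $\sqrt{n}$, playing the role of the constant cost $\sqrt{n}$ in the concave proof), and $c_{(s_i,t)}(\ell)=\ell$ (so that a single user pays $1$). All of these are convex. For this instance the actual optimum with $n$ agents is $n$, attained by every agent using her direct edge, while the ``concentrated'' configuration in which every agent goes through $(v,t)$ costs $n\sqrt{n}$; these are the two numbers that drive the target ratio of $\sqrt{n}$.

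I would then argue by cases on how heavily the protocol overcharges the direct edges. Let $\alpha_i=\hat{c}_{(s_i,t)}(1)\geq 1$ and $\beta=\hat{c}_{(v,t)}(1)\geq\sqrt{n}$, using that overcharging can only increase costs. If some $\alpha_i\geq\sqrt{n}$, then I consider the single-agent sub-instance consisting of player $i$ alone. Since the protocol is resource-aware, its decisions on the edges $(s_i,t)$ and $(v,t)$ at load~$1$ must coincide with the decisions it makes in the full $n$-player instance at those same loads. Therefore player $i$'s equilibrium cost is at least $\min(\alpha_i,\beta)\geq\sqrt{n}$, while the original optimum for this sub-instance is just $1$, giving PoA $\geq\sqrt{n}$ exactly as in the concave argument.

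The case $\alpha_i<\sqrt{n}$ for every $i$ is the main obstacle, because the natural ``bad'' profile in the $n$-player instance---where everyone goes via $v$---is not automatically an equilibrium in the convex setting: each agent's share on $(v,t)$ is at least $c_{(v,t)}(n)/n=\sqrt{n}$ by pigeonhole, whereas a unilateral deviation to the direct edge would cost only $\alpha_i<\sqrt{n}$. In the concave setting the analogous obstacle does not arise because $c_v(1)=c_v(n)=\sqrt{n}$, so the cost of concentration is \emph{fixed}; convexity breaks this. To close the second case I would need either to strengthen the construction (for instance by adjoining an auxiliary source $s_0$ with $s_0\to v$ as its only path to $t$, which pins some load on the edge $(v,t)$ regardless of the protocol's choices), or to argue that, because the per-user share on $(v,t)$ at load $n$ is forced to be at least $\sqrt{n}$, any equilibrium of the resulting game must still keep cost $\Omega(n\sqrt{n})$ in aggregate. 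This is the technical heart of the plan and the step where I expect almost all of the effort to concentrate; once it is established, comparing the equilibrium cost to the optimum of $n$ yields the lower bound of $\sqrt{n}$.
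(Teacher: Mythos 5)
Your proposal has a genuine gap, and it is the one you yourself flag: the case where every direct edge satisfies $\hat{c}_{(s_i,t)}(1)<\sqrt{n}$ is not a technical loose end to be tightened later, it is where the construction collapses. With convex costs the profile in which every agent uses her own direct edge is simultaneously the optimum (cost $n$) and, in your second case, a Nash equilibrium whose total cost is at most $n\cdot\max_i\alpha_i$; nothing forces any agent onto $(v,t)$, and the concentrated profile through $v$ is never an equilibrium because each user's share there is at least $c_{(v,t)}(n)/n=\sqrt{n}>\alpha_i$. So the adversary has no bad equilibrium to point to, and neither of your suggested repairs closes this: an auxiliary source $s_0$ pinned to $v$ only adds its own cost to both the equilibrium and the optimum, and the claim that ``any equilibrium must still keep cost $\Omega(n\sqrt{n})$'' is false precisely because the all-direct-edges profile is stable and cheap. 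The underlying issue is that the Figure~\ref{fig:multiBB} gadget derives its power from concavity --- sharing $(v,t)$ is what makes the optimum cheap and the protocol's dilemma real --- and convexity inverts that structure.

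The paper's proof of Theorem~\ref{multidagconvex} takes an entirely different route: it adapts the chain construction of Theorem~\ref{poadagconvex} (Figure~\ref{fig:multicastexample}), in which every edge has infinite cost at load $2$ or more. With $n+1$ players the only bounded-cost outcome is essentially a perfect matching of players to edge-disjoint paths of total cost $n+\sqrt{n}$, so that outcome must be an equilibrium. Resource-awareness then lets one delete player $n+1$: either the residual profile remains an equilibrium of cost about $n$ while the new optimum shifts every player to a zero-cost path (total cost $\sqrt{n}$), or player $n$ was overcharged above $\sqrt{n}$, in which case the single-player sub-instance with optimum $1$ already gives the bound. If you want to salvage your plan, you should switch to a capacitated construction of this kind rather than a shared bottleneck edge.
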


This lower bound is inspired by the lower bound we gave in Theorem~\ref{poadagconvex}. The difference is that the cost of the $(s,t)$ edge is $\sqrt{n}$ for a single player and we use two sources $s_1$ and $s_2$, where from $s_2$ there are only the alternative paths, namely $(s_2,t)$ and  $\mathbf{s_2v_1u_1t}$ and all other paths start from $s_1$. Then the costly equilibria appeared in Theorem~\ref{poadagconvex} can only be avoided by charging the player using path $\mathbf{s_2v_1u_1t}$ at least $\sqrt{n}$. Then if only that player appears in the system, he cannot avoid the $\sqrt{n}$ charge whereas the original cost of the optimum was $1$. We give the complete proof in the appendix.

	\begin{acks}
	The work of the second author was supported by NSF grant CCF-1755955. Part of this work was done when the third and fourth authors were employed by the Max-Planck-Institute for Informatics, Saarland University Campus, Saarbr\"ucken, Germany. The fourth author was supported by the Lise Meitner Award Fellowship.
	\end{acks}
	
	\bibliographystyle{ACM-Reference-Format}
	\bibliography{cost-sharing}
	
	\newpage
	\appendix
	\section{Proofs Deferred from the Main Body of the Paper}

\subsection{Proof of Lemma~\ref{lem:optPath}}
\begin{proof}
	Consider any optimum allocation with at least two paths $p_1, p_2$
	from $s$ to $t$; if there is no such allocation then all optimal
	allocations use single paths and the lemma follows. We next show
	that shifting one player either from $p_1$ to $p_2$ or the other way
	around does not increase the total cost.
	
	Let $\load_e$ be the load at each edge $e$ under the assumed optimum allocation. Suppose w.l.o.g. that 
	\begin{equation}
	\sum_{e\in p_1}(c_e(\load_e)-c_e(\load_e-1))\geq \sum_{e\in p_2}(c_e(\load_e)-c_e(\load_e-1))\,. \label{ComparingMarginals}
	\end{equation}
	By cancelling out common edges and using the concavity of the
	cost functions we get,
	\begin{equation}
	\sum_{e\in p_1\smallsetminus p_2}(c_e(\load_e)-c_e(\load_e-1))\geq \sum_{e\in p_2 \smallsetminus p_1}(c_e(\load_e+1)-c_e(\load_e))\,. \label{optProp}
	\end{equation}
	
	Next we will show that shifting one player from $p_1$ to $p_2$
	will not increase the total cost of the edges belonging to
	$p_1\cup p_2$, hence the overall cost will not increase since
	the load on the rest of the edges remains intact. The total
	cost of the edges in $p_1\cup p_2$ after a player shifts from
	$p_1$ to $p_2$ is equal to
	\begin{eqnarray*}
		&&\sum_{e\in p_1\smallsetminus p_2}c_e(\load_e - 1) +
		\sum_{e\in p_1\cap p_2}c_e(\load_e) +
		\sum_{e\in p_2\smallsetminus p_1}c_e(\load_e + 1)\\
		&\leq& 	\sum_{e\in p_1\smallsetminus p_2}c_e(\load_e) +
		\sum_{e\in p_2 \smallsetminus p_1}c_e(\load_e) +
		\sum_{e\in p_1\cap p_2}c_e(\load_e) = \sum_{e\in p_1\cup p_2}c_e(\load_e)\,, \qquad \mbox{(by using \eqref{optProp})}
	\end{eqnarray*} 
	where $\sum_{e\in p_1\cup p_2}c_e(\load_e)$ is the current
	total cost of the edges in $p_1\cup p_2$. Next we need to show
	that \eqref{ComparingMarginals} still holds for the new
	allocation. This implies that if all the players are shifted
	from $p_1$ to $p_2$ the cost will not increase and therefore
	$p_2$ would be an optimal allocation.
	
	Let $\load_e'$ be the new load on edge $e$ after shifting a
	player from $p_1$ to $p_2$. Obviously, $\load_e'\leq \load_e$
	for any $e \in p_1$ and $\load_e'\geq \load_e$ for any $e \in
	p_2$. Then, due to concavity, $$\sum_{e\in
		p_1}(c_e(\load_e')-c_e(\load_e'-1)) \geq \sum_{e\in
		p_1}(c_e(\load_e)-c_e(\load_e-1)) \geq \sum_{e\in
		p_2}(c_e(\load_e)-c_e(\load_e-1)) \geq \sum_{e\in
		p_2}(c_e(\load_e')-c_e(\load_e'-1))\,.$$ Therefore,
	\eqref{ComparingMarginals} holds for the new allocation and
	the lemma follows.
\end{proof} 

\subsection{Proof of Lemma~\ref{lem:BreakTies}}
\begin{proof}
	Let $e_1,\ldots e_k$ be all the edges and $r$ be an arbitrarily big constant. At first, we round up all the costs to $r$ decimal places, i.e., for each edge $e$ and load $\load$, 
	we round up $c_e(\load)$ to the closest multiple of $10^{-r}$. Then we increase $c_{e_i}(\load)$ by $\plu! 10^{-(r+\plu! i)}$, where $\plu$ is the maximum number of players ever appears. After these increments, for any $e_i$, any cost share $\zeta_{e}(\load)$ has some non-zero values at decimal places from $r+\plu! (i-1)+1$ to $r+\plu! i$ and the rest of the decimal places greater than $r$ have zero values. 
	Since each edge appears at most once in a path, the cost shares of two different paths are equal if and only if they have the same set of edges.
	
	Note that the increment at each edge is upper bounded by $10^r$ and therefore can be arbitrarily small as it only depends on the choice of $r$. 
\end{proof}

\subsection{Proof of Theorem~\ref{thm:LB_SSLBprotocols}}
\begin{proof}
	In order to show this lower bound we construct a graph with appropriate edge cost functions. Consider the graph of Figure~\ref{fig:BB_LB} which is basically a Braess's network with edge $(s,u)$ replaced by multiple edges $e_1,\ldots ,e_r$. The edges $(v,u)$ and $(u,t)$ have constant cost of $0$ and $2k$, respectively, for some integer $k\geq 6$. The cost of the edge $(v,t)$ equals the load on that edge. For the rest of the edges we define the cost functions as follows:
	
	\[
	c_0(\load) =
	\begin{cases}
	\load & \text{if } \load \leq k\\
	k & \text{otherwise}
	\end{cases}\,,
	\qquad\qquad c_j(\load) = \frac{\load}{jk^2}+\frac{H_{j-1}}k+\varepsilon_j\,,  \quad \text{for }1\leq j\leq r   \,,
	\]
	where $H_j$ is the harmonic number, i.e. $H_j=\sum_{i=1}^j1/i$, for $j\geq 1$ and $H_0=0$, and $0< \varepsilon_1 < \varepsilon_2 < \ldots < \varepsilon_r <\frac 1{r^2k^2}$ are used to guarantee a single optimum for each $\load$. $r$ should be some integer such that $c_r(1)<1$; we set $r=2^{k}$ that satisfies this inequality if someone notice that $H_{2^k-1}\leq k -1$ for $k\geq 6$.\footnote{We remark that our result holds for strictly concave functions as well. To see this, one can add $\epsilon - \epsilon/\load$ to all cost functions and verify that all arguments still hold for sufficiently small $\epsilon>0$.}

	\begin{figure}[h]
		\centering
		\tikzset{middlearrow/.style={
				decoration={markings,
					mark= at position 0.5 with {\arrow{#1}} ,
				},
				postaction={decorate}
			},
			dots/.style={decoration={markings, mark=between positions 0 and 1 step 10pt with { \draw [fill] (0,0) circle [radius=1.0pt];}},
				postaction={decorate}
			},
			el/.style = {inner sep=2pt, align=left, sloped}
		}
		\begin{tikzpicture}[xscale=1,yscale=1]

		\node[draw,circle,thick,fill=black,inner sep=1.4pt,radius=0.25pt,label={[label distance=-1pt]below:{$s$}}] at (0,0) (s) {} ;
		\node[draw,circle,thick,fill=black,inner sep=1.4pt,radius=0.25pt,label={[label distance=1pt]above:{$u$}}] at (5,2) (u) {} ;
		\node[draw,circle,thick,fill=black,inner sep=1.4pt,radius=0.25pt,label={[label distance=-1pt]below:{$v$}}] at (5,-2) (v) {} ;
		\node[draw,circle,thick,fill=black,inner sep=1.4pt,radius=0.25pt,label={[label distance=-1pt]below:{$t$}}] at (10,0) (t) {} ;
		
		\draw[very thick, middlearrow={stealth}] (s) to node [el,below] {{\large{$\mathbf{c_0(\load)}$}}}  (v);
		\draw[very thick, middlearrow={stealth}] (s) to node [el,below] {{\large{$\mathbf{c_2(\load)}$}}} node[el,below,pos=0.8] {{\large{$\mathbf{e_2}$}}} (u);
		\draw[very thick, middlearrow={stealth}] (v) to node [right] {{\large{$\mathbf{0}$}}} (u);
		\draw[very thick, middlearrow={stealth}] (v) to node [el,below] {{\large{$\mathbf{\load}$}}}  (t);
		\draw[very thick, middlearrow={stealth}] (u) to node [el,above] {{\large{$\mathbf{2k}$}}} (t);
		
		\draw[thick, middlearrow={stealth}] (s) to [bend left=20] node [el,above] {{$\mathbf{c_3(\load)}$}} node[el,above,pos=0.8] {{\large{$\mathbf{e_3}$}}}(u);
		\draw[thick, middlearrow={stealth}] (s) to [bend left=-30] node [el,below] {{$\mathbf{c_1(\load)}$}} node[el,below,pos=0.8] {{\large{$\mathbf{e_1}$}}} (u);
		\draw[thick, middlearrow={stealth}] (s) to  [bend left=90] node [el,above] {{$\mathbf{c_r(\load)}$}} node[el,above,pos=0.8] {{\large{$\mathbf{e_r}$}}} (u);
		
		\path[dots] (1.8,1.5) to (1.3,2.3);
		
		\end{tikzpicture}
		\caption{Graph where PoA $=\Omega\left(\frac{\sqrt{n}}{\log^2n}\right)$ for all static-share leader-based protocol.}
		\label{fig:BB_LB}
	\end{figure}
	
	For simplicity we denote by $sv$, $vu$, $ut$ and $vt$, the edges $(s,v)$, $(v,u)$, $(u,t)$ and $(v,t)$. 
	Moreover, by $svt$ we denote the path using edges $sv$ and $vt$ and similarly for the rest of the paths. 
	
	First note that, for $\load \leq k$, $c_0(\load)+\load \leq 2k$ meaning that the only optimal path for that load is the $svt$. It is not hard to verify that 
	\begin{eqnarray}
	c_0(\load)+\load > c_1(\load) +2k\,, &&\text{ for } \load \geq k+1\,,\label{eq;optNotsvt}\\
	c_j(\load)< c_{j+1}(\load)\,, &&\text{ for } \load\leq (j+1)k\,,\label{eq:optOfC1}\\ 
	c_{j-1}(\load)> c_{j}(\load)\,, &&\text{ for } \load\geq jk+1\,, \label{eq:optOfCr}
	\end{eqnarray}
	for all $j\geq 1$. Inequality \eqref{eq;optNotsvt} indicates that for $\load\geq k+1$ the path $svt$ cannot be the optimum. Inequalities \eqref{eq:optOfC1} and \eqref{eq:optOfCr} indicate that among all edges connecting $s$ to $v$, $e_j$ has the minimum cost for $jk+1 \leq \load \leq (j+1)k$. Let $\load^*$ be the minimum load such that $c_r(\load^*) > k$ meaning that for $\load \geq \load^*$ the only optimal path is the $svut$; note that $\load^*>rk+1$. In Table~\ref{tbl:optPaths} we summarize the optimal path for each load. 
	
	\begin{table}[h]
		\begin{center}
			\begin{tabular}{|c||c|}
				\hline 
				$\load$ & $\opt(\load)$\\ \hline \hline
				$[1,k]$ & $svt$  \\ \hline
				$[jk+1,(j+1)k], 1\leq j < r$ & $sut$ through $e_j$  \\ \hline
				$[rk+1,\load^*-1]$ & $sut$ through $e_r$  \\ \hline
				$[\load^*,\infty]$ & $svut$ \\ \hline
				
			\end{tabular}
		\end{center}
		\caption{Optimal paths for the network of Figure \ref{fig:BB_LB}.}
		\label{tbl:optPaths}
	\end{table}
	
	Consider now $n=r^2k^2$ players and any Nash equilibrium $\s$ of any static-share leader-based protocol. We show that PoA $=\Omega\left(\frac{\sqrt{n}}{\log^2n}\right)$.
	
	First note that if more than $n/2$ players use some of the edges connecting $s$ to $u$, due to concavity the minimum cost occurred if all those players use the same edge (Lemma~\ref{lem:optPath}) and due to inequality \eqref{eq:optOfCr} this edge is $e_r$. Therefore, the total cost is at least $c_r(n/2)>r/2$. It is easy to check that $\load^*\leq rk^2<n$ (recall that $\load^*$ is the minimum load such that $c_r(\load^*) > k$) and hence the optimum is for all players to use the path $svut$ with total cost $3k$. So, PoA $>\frac r{6k} = \Omega\left(\frac{\sqrt{n}}{\log^2n}\right)$.\footnote{This is true because $\frac{\sqrt{n}}{\log^2 n}=\frac{rk}{(\log 2^{2k}k^2)^2}<\frac{rk}{(\log 2^{k})^2}=\frac{r}{k}.$}
	
	Now for the rest of the proof suppose that at least $n/2$ players use $sv$ under $\s$. We distinguish between three cases based on which path the highest priority player overall, $h$, uses in $\s$. 
	
	\paragraph{{\bf $\mathbf{h}$ uses $\mathbf{svt}$}}: If any edge connecting $s$ to $u$ is used under $\s$, definitely one of its users is charged by at least $1/rk^2$. If that player deviates from that edge to $svu$, he would pay at most $2/r^2k$ as he wouldn't be the highest priority player in $sv$; $h$ is. But, $2/r^2k < 1/rk^2$, for $k\geq 6$, meaning that $\s$ couldn't be a Nash equilibrium. 
	
	Therefore, only the paths $svut$ and $svt$ can be used in $\s$. If $\load_{ut}(\s) \leq k$, then $\load_{vt}(\s) \geq n- k$ resulting in PoA $>n/3k=\Omega(n/\log n)$. 
	So we consider at last the case that $\load_{ut}(\s) > k$ meaning that for any $\load\geq \load_{ut}(\s)$, $ut\in \opt(\load)$. First note that any player but $h$ using $vt$ pays either $1$ or $0$. The reason is that since $vt\in \opt(1)$, the $\psi_{vt}$ defined by the protocol should be $1$ so that it is budget-balanced. Similarly, $\psi_{sv} = 1$. This further means that $h$ is charged at least $2$ for using $svt$. We take two cases on the value of $\psi_{ut}$.
	
	\begin{itemize}
		\item $\psi_{ut}>1$: player $h_{ut}$ pays more than $1$ for the path $vut$ and if he deviates from $vut$ to $vt$ he would pay at most $1$, as we mentioned above.
		\item $\psi_{ut}\leq 1$: player $h$ currently pays at least $2$ and if he deviates from $svt$ to $sut$ he would pay strictly less than $2$, since the unit cost of any edge connecting $s$ to $u$ is less than $1$ and the charge for using $ut$ would be $\psi_{ut}$ since $ut\in \opt(\load_{e_{ut}}(\s)+1)$.
	\end{itemize}
	
	Hence, $\s$ cannot be a Nash equilibrium in this case. Overall, if $h$ uses $svt$, PoA $=\Omega(n/\log n)$.

	\paragraph{{\bf $\mathbf{h}$ uses $\mathbf{sut}$}}: Similarly as above, if $\load_{ut}(\s) \leq 2k+1$, PoA $=\Omega(n/\log n)$, so we assume that $\load_{ut}(\s) \geq 2k+2$. If $vt$ is used under $\s$, then $h_{vt}$ pays at least $1$ for $vt$ whereas if he deviates from $vt$ to $vut$ he would pay less than $1$ since he is not the highest priority player, $h$ is. Therefore, $vt$ cannot be used in $\s$, meaning that $h_{sv}$ follows the path $svut$. It is easy to argue now that every edge connecting $s$ to $u$ is used under $\s$, otherwise $h_{sv}$ would have an incentive to deviate to the empty edge. This is because $h_{sv}$ pays at least $1$ for $sv$ and the unit cost of any $su$ edge is less than $1$. By Claim~\ref{clm:e_jused} PoA $=\Omega\left(\frac{\sqrt{n}}{\log^2n}\right)$.
	
	\paragraph{{\bf $\mathbf{h}$ uses $\mathbf{svut}$}}: In this case we can argue right away that every $su$ edge is used under $\s$, otherwise $h$ would have an incentive to deviate to the empty edge. Then again by Claim~\ref{clm:e_jused} PoA $=\Omega\left(\frac{\sqrt{n}}{\log^2n}\right)$.
	
	\begin{claim}
		\label{clm:e_jused}
		If $n=r^2k^2$ players appear and every $su$ edge is used in some Nash equilibrium $\s$, then PoA $=\Omega\left(\frac{\sqrt{n}}{\log^2n}\right)$.
	\end{claim}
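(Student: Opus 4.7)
The plan is to lower-bound the total equilibrium cost $C(\s)$ directly, using only the hypothesis that each parallel edge $e_j$ between $s$ and $u$ carries at least one player, and then to compare against the optimal cost for $n = r^2 k^2$ players. By Table~\ref{tbl:optPaths}, since $n \geq \load^*$, the optimum is to route all players through $svut$, whose total cost is $c_0(n) + 0 + 2k = k + 2k = 3k$.

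First I would lower bound the cost contributed by the $e_j$ edges alone. Each $e_j$ is used by at least one player, and since $c_j$ is non-decreasing, edge $e_j$ contributes at least
\[c_j(1) \;=\; \frac{1}{jk^2} + \frac{H_{j-1}}{k} + \varepsilon_j \;\geq\; \frac{H_{j-1}}{k}.\]
Summing over $j = 1, \ldots, r$ and invoking the elementary identity $\sum_{j=1}^{r} H_{j-1} = r H_{r-1} - (r-1)$, this gives
\[C(\s) \;\geq\; \sum_{j=1}^r c_j(1) \;\geq\; \frac{1}{k}\bigl(r H_{r-1} - (r-1)\bigr) \;=\; \Omega\!\left(\frac{r \log r}{k}\right).\]

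Now I would substitute $r = 2^k$, so that $\log r = k\log 2 = \Theta(k)$, which collapses the bound to $C(\s) = \Omega(r)$. Dividing by the optimum $3k$ yields
\[\text{PoA} \;\geq\; \frac{C(\s)}{3k} \;=\; \Omega\!\left(\frac{r}{k}\right).\]
Finally, from $n = r^2 k^2$ we get $\sqrt{n} = rk$ and $\log n = 2\log r + 2\log k = 2k + 2\log k = \Theta(k)$, so $\sqrt{n}/\log^2 n = \Theta(r/k)$, and the claim follows.

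I do not anticipate any significant obstacle: the argument is essentially a direct accounting of the cost on the $r$ parallel edges, combined with the harmonic-sum estimate and the fact that choosing $r = 2^k$ makes $\log r$ linear in $k$, which is exactly what is needed to convert the $1/k$ factor in each $c_j(1)$ into an $\Omega(1)$ contribution per edge on average.
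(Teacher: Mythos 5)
Your proposal is correct and follows essentially the same route as the paper's own proof: lower-bounding the equilibrium cost by $\sum_j c_j(1) \geq \frac{1}{k}\sum_j H_{j-1} = \Theta(r)$ via the harmonic-sum identity, comparing against the optimal cost $3k$ of routing everyone through $svut$, and translating $r/k$ into $\sqrt{n}/\log^2 n$ using $r = 2^k$. Your write-up is in fact slightly cleaner in keeping the $1/k$ factor explicit where the paper's displayed computation has a small typographical slip.
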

	
	\begin{proof}	
		We first argue that each edge $e_j$, for $1\leq j\leq r$, is used by at least $jk$ players, apart maybe from one.
		To show this we will prove that there are no two edges $e_j$, $e_{j'}$ such that both $e_j\notin \opt(\load_{e_j}(\s)+1)$ and $e_{j'}\notin \opt(\load_{e_{j'}}(\s)+1)$. Suppose on the contrary that there exist two such edges and w.l.o.g. let $h_{e_j}>h_{e_{j'}}$. Player $h_{e_{j'}}$ pays for $e_{j'}$ at least $\psi_{e_{j'}}>0$ and if he deviates to $e_j$ he would pay zero for that edge as he would not be the highest priority player in $e_j$ and $e_j\notin \opt(\load_{e_j}(\s)+1)$. Therefore, since $\s$ is a Nash equilibrium, all but one edges $e_j$ should be used by at least $jk$ players. 
		
		Now it is easy to compute that the total cost of $\s$ is at least 
		$$\sum_{j=1}^{r-1}c_j(jk)\geq \frac 1k \sum_{j=0}^{r-2}H_j = \frac 1{\log r} ((r-1)H_{r-2}-(r-2)=\Theta(r)\,.\footnote{\text{We use the property } $\sum_{j=1}^nH_j=(n+1)H_n-n$.}$$
		
		The optimal total cost is $3k$, which gives PoA $=\Omega\left(\frac{\sqrt{n}}{\log^2n}\right)$.
	\end{proof}
\end{proof}

\subsection{Proof of Lemma~\ref{lem:psi<=c1}}
\begin{proof}
	We are going to prove this lemma by induction starting from $G$. For
	$G$ the statement holds by definition.
	
	Suppose that $\psi_{C}\leq \phi_{C}(1)$ for some $C \in
	\mathcal{C}$. We will show that the statement also holds for its
	components $C_1, C_2$ as well.
	
	\begin{itemize}
		\item If $C$ is constructed by a parallel composition, for both $i
		\in\{1,2\}$, it holds by Remark~\ref{rem:minLoadAndWeightInCs}
		that $$\psi_{C_i}=\psi_{C}\leq \phi_C(1)\leq \phi_{C_i}(1).$$
		\item If $C$ is constructed by a series composition, if the first case holds in the definition of $\psi_{C_i}$, then the statement trivially holds. 
		If the second case holds then 
		$$\psi_{C_{i+1}} = \psi_{C} - \psi_{C_i} \leq  \phi_{C}(1) - \phi_{C_i}(1) = \phi_{C_{i+1}}(1),$$
		where the inequality holds by assumption and the last equality is due to Remark~\ref{rem:minLoadAndWeightInCs}.
	\end{itemize}
\end{proof}

\subsection{Proof of Lemma~\ref{lem:psi}}
\begin{proof}
	We also prove this lemma by induction starting from $G$. For $G$ by definition $\psi_{C_r} = \phi_{C_r}(1)$ and since $\phi_{C_r}$ is a strictly concave function $\psi_{C_r} > \frac{\phi_{C_r}(\load)}{\load}$ for any $\load>1$.
	
	Suppose that the statement holds for some $C\in \mathcal{C}$. We show that the statement also holds for its components $C_1, C_2$ as well.
	
	\begin{itemize}
		\item If $C$ is constructed by a parallel composition, for any $i\in\{1,2\}$, we consider two cases:
		\begin{itemize}
			\item $\load^{*}_{C_i}=1$: by Remark~\ref{rem:minLoadAndWeightInCs}, $\load^{*}_{C}=1$ and further $\phi_C(1)=\phi_{C_i}(1)$. By the definition of $\psi_{C_i}$ and the assumption that $C$ satisfies the lemma's statement, $$\psi_{C_i}=\psi_{C}=\phi_C(1)=\phi_{C_i}(1)\,.$$ 
			\item $\load^{*}_{C_i}>1$: for any $\load\geq \load^{*}_{C_i}$, 
			$$\psi_{C_i}=\psi_{C} > \frac{\phi_C(\load^{*}_{C_i})}{\load^{*}_{C_i}}=\frac{\phi_{C_i}(\load^{*}_{C_i})}{\load^{*}_{C_i}}\geq \frac{\phi_{C_i}(\load)}{\load}\,,$$
			where the first inequality is because $C$ satisfies the lemma's statement for $\load^{*}_{C_i} \geq \load^{*}_C$ and $\load^{*}_{C_i}>1$ ($\load^{*}_{C_i} \geq \load^{*}_C$ is true due to Remark~\ref{rem:minLoadAndWeightInCs}). For the second equality note that if $\phi_C(\load^{*}_{C_i}) < \phi_{C_i}(\load^{*}_{C_i})$ the optimal path wouldn't go through $C_i$, which contradicts the definition of $\load^{*}_{C_i}$. 
		\end{itemize}
		
		\item If $C$ is constructed by a series composition, then for any $i\in\{1,2\}$ we consider again two cases:
		\begin{itemize}
			\item $\load^{*}_C=1$: it holds by assumtion that $\psi_C = \phi_C(1)$ and by Remark~\ref{rem:minLoadAndWeightInCs} that $\load^{*}_{C_i}=1$. Therefore, by definition, $\psi_{C_i}= \psi_C\frac{\phi_{C_i}(1)}{\phi_{C}(1)}=\phi_{C_i}(1)$, which is what is required for $\load^{*}_{C_i}=1\,.$
			
			\item $\load^{*}_C>1$: here it is either $\psi_{C_i} =\phi_{C_i}(1) > \frac{\phi_{C_i}(\load)}{\load}$ for any $\load>1$, or $\psi_{C_i} \geq \psi_C\frac{\phi_{C_i}(\load^{*}_C)}{\phi_{C}(\load^{*}_C)}.$ In the second case:
			$$\psi_{C_i} \geq \psi_C\frac{\phi_{C_i}(\load^{*}_C)}{\phi_{C}(\load^{*}_C)}>\frac{\phi_C(\load^{*}_C)}{\load^{*}_C}\frac{\phi_{C_i}(\load^{*}_C)}{\phi_{C}(\load^{*}_C)}=\frac{\phi_{C_i}(\load^{*}_C)}{\load^{*}_C}\geq \frac{\phi_{C_i}(\load)}{\load}\,.$$
		\end{itemize}
	\end{itemize}
\end{proof}

\subsection{Proof of Lemma~\ref{lem:costShares}}
\begin{proof}
	We distinguish between two cases:
	\begin{itemize} 
		\item $e\notin \opt(\load_e(\s))$: for $i=h_e(\s)$, $\xi_{ie}(\s)=c_e(\load_e(\s))\geq c_e(1)=\phi_e(1)\geq \psi_e$ by Lemma~\ref{lem:psi<=c1}. The rest of the players are charged $0$.
		
		\item $e\in \opt(\load_e(\s))$: for $i= h_e(\s)$ then $\xi_{ie}(\s)=\psi_e$ by definition. If it exists $i\neq h_e(s^*)$ then $\load_e(\s) >1$. Additionally, since $e\in \opt(\load_e(\s))$, $\load_e(\s) \geq \load^{*}_e$. By using Lemma~\ref{lem:psi} twice, 
		$$\xi_{ie}(s^*) = \frac{c_e(\load_e(\s)) - \psi_e}{\load_e(\s) - 1}<\frac{c_e(\load_e(\s)) - \frac{c_e(\load_e(\s))}{\load_e(\s)}}{\load_e(\s) - 1} = \frac{c_e(\load_e(\s))}{\load_e(\s)}<\psi_e\,.$$
	\end{itemize}
\end{proof}

\subsection{Proof of Theorem~\ref{thm:multicastLB1}}
\begin{proof}
  This follows directly from the instance of Proposition 4.12 of
  \cite{CRV10} and we include it only for completeness.  Consider an
  instance with $n$ players, each one with a source $s_i$ and a common
  sink $t$ as in Figure~\ref{fig:multiBB}, and the constant edge costs
	are as they appear in the figure with the cost of the $(v, t)$ edge
	being equal to $1$. It is easy to verify that the strategy profile
	where every player $i$ uses the direct edge $(s_i, t)$ to get to the
	sink is an equilibrium for any budget-balanced resource-aware protocol.
	In particular, since there is only a single player that uses each $(s_i, t)$
	edge, every budget-balanced protocol will need to charge each of these agents
	a cost of 1. Any unilateral deviation would also cost them $1$, however,
	due to the cost of $(v, t)$. In the optimal solution all agents share 
	the $(v, t)$ edge and the social cost is 1 instead of $n$.
\end{proof}

\subsection{Proof of Lemma~\ref{lem:NoPlayersInOpt}}
\begin{proof}
Starting from any optimum $\s^*$ we swap subpaths of carefully selected players and end up to some optimum satisfying the lemma's statement. 
The swaps appear in rounds in each component starting from $G$ and following its decomposition $(C_r, \ldots, C_1)$ (recall that $G=C_r$). We say that a strategy $\str_i$ passes through component $C$ if it uses some of its edges. 

  For each component $C$ (starting from $C_r$ and reducing the index) we reallocate the optimum inside $C$ by using the following reallocation procedure:
\begin{itemize}
\item If $C$ is constructed by a series composition, we do not do any reallocation and proceed to the next component.
\item If $C$ is constructed by a parallel composition of $C_1$ and $C_2$, for players $i$ from $1$ to $n$, if 
\begin{itemize}
\item both $\str_i$ and $\str^*_i$ pass through $C$, 
\item w.l.o.g. $\str_i$ passes through $C_1$ and $\str^*_i$ passes through $C_2$ and 
\item there exists $j>i$ such that $\str^*_j$ passes through $C_1$
\end{itemize}
switch the subpaths of $\str^*_i$ and $\str^*_j$ inside $C$ 
\end{itemize}

Note that after the above procedure, the new $\s^*$ has the same total cost with the initial profile and therefore is still an optimum. For the final optimum $\s^*$ (after considering all components) we prove the following claim.

\begin{claim}
	\label{cl:convexComp}
For any player $j$ and any component $C$ that $\str^*_j$ passes through, it holds that for any player $i<j$, if $\str_i$ passes through $C$ so does $\str^*_i$.
\end{claim}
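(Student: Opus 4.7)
The plan is to prove the claim by induction on the decomposition tree of $G$, moving from the root $C_r=G$ downward. The base case $C=G$ is immediate: in a symmetric game every $\str_k$ and every $\str^*_k$ connects $s$ to $t$, hence passes through $G$. For the inductive step, fix a composite $C\in \mathcal{C}$ for which the claim already holds, and let $C_1,C_2$ be the components whose composition yields $C$. If $C$ is a series composition then any path crossing $C$ crosses both $C_1$ and $C_2$, and the reallocation procedure does not act at $C$; so the claim for each $C_\alpha$ is inherited directly from that for $C$.

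Suppose instead that $C$ is a parallel composition. Pick any $j$ with $\str^*_j$ through $C_1$ (the $C_2$ case is symmetric) and any $i<j$ with $\str_i$ through $C_1$. Both paths cross $C$, so by the inductive hypothesis $\str^*_i$ crosses $C$, and therefore runs through $C_1$ or $C_2$. Assume toward contradiction that $\str^*_i$ runs through $C_2$. The key observation is that further processing of components strictly contained in $C_1$ or $C_2$ cannot change which of $C_1,C_2$ any path uses at the level of $C$; hence this assumption is equivalent to $\str^*_i\in C_2$ at the end of the iteration over $C$ alone, and likewise $\str^*_j\in C_1$ at that moment.

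Now I would examine iteration $i$ of the procedure applied to $C$. At that moment the conditions $\str_i\in C_1$ and $\str^*_i\in C_2$ hold; if any $j'>i$ currently had $\str^*_{j'}\in C_1$, the procedure would swap, placing $\str^*_i$ into $C_1$, and subsequent iterations (which touch only paths indexed strictly above $i$) could not undo this, contradicting the assumption. Therefore at iteration $i$ no player of index greater than $i$ has $\str^*$ in $C_1$. A short secondary induction on the iteration index $i'>i$ then shows that this invariant persists through the remaining iterations at $C$: a swap at iteration $i'$ could move some player into $C_1$ only by exchanging $\str^*_{i'}\in C_2$ with some $\str^*_k\in C_1$ for $k>i'$, and no such $k$ exists by the invariant; the opposite swap orientation requires $\str^*_{i'}\in C_1$, which is also ruled out. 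Hence at the end of $C$'s processing no $j>i$ has $\str^*_j\in C_1$, contradicting our choice of $j$.

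The main obstacle is precisely this propagation argument: one has to verify that the invariant ``no player of index greater than $i$ has $\str^*$ through $C_1$'' survives both the remaining iterations of $C$ itself and the subsequent processing of all descendants of $C$. For descendants the reason is trivial, since swaps confined inside $C_1$ or $C_2$ do not alter the $C$-level partition; but within $C$'s own iteration the argument must carefully handle both orientations of the swap condition, which is where the secondary induction enters. Everything else in the proof is bookkeeping against the decomposition tree.
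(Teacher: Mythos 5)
Your proof is correct and follows essentially the same route as the paper's: induction down the decomposition tree, the series case being immediate, and at a parallel composition a contradiction argument showing that if $\str^*_i$ ended up in $C_2$ then iteration $i$ of the reallocation at $C$ must have found no partner $\str^*_{j'}$ in $C_1$ with $j'>i$, an invariant that persists through the remaining iterations and through all descendant components, contradicting the existence of $\str^*_j$ in $C_1$. Your secondary induction merely makes explicit the propagation step that the paper compresses into a single sentence and a footnote (and, like you, the paper never actually needs the minimality of $i$ it invokes), so the substance is identical.
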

\begin{proof}

We prove the claim by induction starting from $G$ and following its decomposition.  
The claim is trivially satisfied for $G$. 

 Suppose that component $C$ satisfies the claim; we will show that also $C_1$ and $C_2$ that compose $C$ satisfy the claim. We distinguish between the two possible cases of how $C$ was composed.
 
 {\bf $\mathbf{C}$ is constructed by a series composition of $\mathbf{C_1, C_2}$}. It is easy to see that any path passes through $C$ should also passes through both $C_1$ and $C_2$ and therefore both $C_1,C_2$ satisfy the statement of the claim.
  
  {\bf $\mathbf{C}$ is constructed by a parallel composition of $\mathbf{C_1, C_2}$}. For the sake of contradiction assume that $C_1$ doesn't satisfy the statement and so there exists $j$ and $i$ with $i<j$ and such that $\str_i$ and $\str^*_j$ pass through $C_1$ but not $\str^*_i$; let $i$ be the minimum such index. Since $C$ satisfies the claim's statement, $\str^*_i$ should pass through $C$ and therefore through $C_2$. Note though that while processing $i$ during the round considering $C$ in the above reallocation procedure, there shouldn't exist any $\str^*_{j'}$ with $i<j'$ passes through $C_1$ otherwise $\str^*_i$ should have been swapped with that; therefore there shouldn't exist $\str^*_j$. However, since we assumed that $\str^*_j$ passes through $C_1$ ultimately, it should have been swapped afterwards with some other optimal subpath, but this proves the existence of some $\str^*_{j'}$ passing through $C_1$ with $i<j'<j$ while processing $i$.\footnote{Note that the order we process $C$'s in the reallocation procedure guarantees that any reallocation after processing $C$ does not affect the input of $C$, so $\str^*_j$ cannot be swapped to pass through $C_1$ after processing $C$.} This is a contradiction meaning that $C_1$ should satisfy the statement as well.
\end{proof}

Claim \ref{cl:convexComp} holds for every edge $e\in \str^*_i$ and trivially we get $\load_e^{< i}(\s) \leq \load_e^{< i}(\s^*)$. 
\end{proof}

\subsection{Proof of Theorem~\ref{poadagconvexovercharging}}
\begin{proof}
	We use the DAG shown in Figure \ref{fig:overchargingexample} to prove the statement. In this graph all the players want to go from $s$ to $t$ and the cost of each edge is as indicated in the figure if a single player uses the edge and infinity if more than one players use it.
	
	\begin{figure}[h]
		\centering
		\tikzset{middlearrow/.style={
				decoration={markings,
					mark= at position 0.5 with {\arrow{#1}} ,
				},
				postaction={decorate}
			}
		}
		\begin{tikzpicture}[xscale=1,yscale=1]
		\node[draw,circle,thick,minimum width =0.7 cm] at (0,0) (s) {$\mathbf{s}$} ;
		\node[draw,circle,thick,minimum width = 0.7 cm] at (4,1.5) (v) {$\mathbf{v}$} ;
		
		\node[draw,circle,thick,minimum width =0.7 cm] at (4,-1.5) (u) {$\mathbf{u}$} ;
		
		\node[draw,circle,thick,minimum width =0.7 cm] at (8,0) (t) {$\mathbf{t}$} ;

		\draw[thick, middlearrow={stealth}] (s) to node [above, sloped] {{$\frac{\sqrt{33}-1}{8}, \infty$}} (v);
		\draw[thick, middlearrow={stealth}] (s) to node [below, sloped] {{$0, \infty$}} (u);
		\draw[thick, middlearrow={stealth}] (v) to node [above, sloped] {{$0, \infty$}}(t);
		\draw[thick, middlearrow={stealth}] (u) to node [below, sloped] {{$\frac{\sqrt{33}-1}{8}, \infty$}} (t);
		\draw[thick, middlearrow={stealth}] (u) to node [above, sloped] {{$0, \infty$}}(v);
		
		\draw[thick, middlearrow={stealth}] (s) to [out=85, in=95] node [above] {{$1, \infty$}} (t);
		
		\end{tikzpicture}
		\caption{A simple acyclic graph with a single source $s$ and sink $t$. All of the cost functions are infinity for load $\load\geq 2$.  This graph is used to prove Theorem~\ref{poadagconvexovercharging}.}
		\label{fig:overchargingexample}
	\end{figure}

	Considering three players  $a, b$ and $c$, in the optimum one player uses the upper edge, and the other two the $\mathbf{svt}$ and the $\mathbf{sut}$ paths. The total cost of the optimum is $\frac{\sqrt{33}-1}{4} + 1$. If any other strategy profile was an equilibrium it would result in an unbounded total cost and give an unbounded PoA. Therefore, there should exist an optimum strategy profile $\s^*$ that it is a Nash equilibrium. Assume w.l.o.g. that in $\s^*$, $a$ uses the upper edge, $b$ uses the $\mathbf{svt}$ path, and $c$ uses the $\mathbf{sut}$ path. 
	
	If we consider now the instance in which only $b$ and $c$ use the network, either the strategy profile in which $b$ uses the $\mathbf{svt}$ path, and $c$ uses the $\mathbf{sut}$ path is an equilibrium, or one of $b$ or $c$ pays at least $1$ in $\s^*$. The reason is that since $\s^*$ is a Nash equilibrium, no player has an incentive to deviate to any other path but the upper one, and if neither $b$, nor $c$ pays more than $1$, obviously none of them would prefer the upper edge. 
	
	For the first case the cost of this strategy profile is $\frac{\sqrt{33}-1}{4}$, and the optimum with two players is one player to use the zig-zag path ($\mathbf{suvt}$), and the other to use the upper edge with total cost $1$. This results in PoA $\geq \frac{\sqrt{33}-1}{4}$. 
	
	For the second case if one of $b$ or $c$ pays at least 1 in $\s^*$ the cost of $\s^*$ becomes at least $\frac{\sqrt{33}-1}{8}+2$ and we have
	$$\text{PoA} \geq \frac{\frac{\sqrt{33}-1}{8}+2}{\frac{\sqrt{33}-1}{4}+1} = \frac{\sqrt{33}-1}{4}\,.$$
	
\end{proof}

\subsection{Proof of Theorem~\ref{multidagconvex}}
\begin{proof}
	We use the graph shown in Figure \ref{fig:multicastexample} to prove the statement. 
	
	\begin{figure}[h]
	\centering
	\tikzset{middlearrow/.style={
			decoration={markings,
				mark= at position 0.5 with {\arrow{#1}} ,
			},
			postaction={decorate}
		},
		dots/.style={decoration={markings, mark=between positions 0 and 1 step 10pt with { \draw [fill] (0,0) circle [radius=1.0pt];}},
			postaction={decorate}
		}
	}
	\begin{tikzpicture}[xscale=0.9,yscale=0.8]
	\node[draw,circle,thick,minimum width =0.9 cm] at (-1,8) (s1) {$\mathbf{s_1}$} ;
	\node[draw,circle,thick,minimum width =0.9 cm] at (1,8) (s2) {$\mathbf{s_2}$} ;
	\node[draw,circle,thick,minimum width =0.9 cm] at (6.5,5) (v1) {$\mathbf{v_1}$} ;
	\node[draw,circle,thick,minimum width =0.9 cm] at (4.84,5) (v2) {$\mathbf{u_1}$} ;
	\node[draw,circle,thick,minimum width =0.9 cm] at (3.16,5) (v3) {$\mathbf{v_2}$} ;
	\node[draw,circle,thick,minimum width =0.9 cm] at (1.5,5) (vx) {$\mathbf{u_2}$} ;
	\node[draw,circle,thick,minimum width =0.9 cm] at (-1.5,5) (v4) {$\mathbf{u_{n-1}}$} ;
	\node[draw,circle,thick,minimum width =0.9 cm] at (-3.5,5) (v5) {$\mathbf{v_{n}}$} ;
	\node[draw,circle,thick,minimum width =0.9 cm] at (-5.5, 5) (v6) {$\mathbf{u_{n}}$} ;
	\node[draw,circle,thick,minimum width =0.9 cm] at (0, 2) (t) {$\mathbf{t}$} ;
	
	\path[dots] (0.45,5) to (-0.55,5);
	
	\draw[thick, middlearrow={stealth}] (s2) to node [above, sloped] {{$0, \infty$}} (v1);
	\draw[thick, middlearrow={stealth}] (v2) to node [above, sloped] {{$0, \infty$}} (t);
	\draw[thick, middlearrow={stealth}] (s1) to node [above, sloped] {{$0, \infty$}} (v3);
	\draw[thick, middlearrow={stealth}] (v4) to node [above, sloped] {{$0, \infty$}} (t);
	\draw[thick, middlearrow={stealth}] (vx) to node [above, sloped] {{$0, \infty$}} (t);
	\draw[thick, middlearrow={stealth}] (s1) to node [above, sloped] {{$0, \infty$}} (v5);
	\draw[thick, middlearrow={stealth}] (v6) to node [below, sloped] {{$0, \infty$}} (t);
	
	\draw[thick, middlearrow={stealth}] (v1) to node [below, sloped] {{$1, \infty$}} (v2);
	\draw[thick, middlearrow={stealth}] (v3) to node [above, sloped] {{$0, \infty$}} (v2);
	\draw[thick, middlearrow={stealth}] (v3) to node [below, sloped] {{$1, \infty$}} (vx);
	\draw[thick, middlearrow={stealth}] (v5) to node [above, sloped] {{$1, \infty$}} (v6);	
	\draw[thick, middlearrow={stealth}] (v5) to node [below, sloped] {{$0, \infty$}} (v4);	
	
	\draw[thick, middlearrow={stealth}] (s2.east) to [out=0,in=90]  ([xshift={0.5cm}]v1.east) node[right]{{$\sqrt{n}, \infty$}} to [out=-90, in=0] (t.east);

	\end{tikzpicture}
	\caption{An acyclic graph with a two sources $s_1$ and $s_2$, and sink $t$. All of the cost functions are infinity for load $\load\geq 2$. This graph is used to prove Theorem~\ref{multidagconvex}.}
	\label{fig:multicastexample}
\end{figure}

	Considering $n+1$ players where the first and the last players want to connect $s_2$ to $t$, and others want to connect $s_1$ to $t$. In the optimum, one of the players with source $s_2$ uses the right edge, the other uses the $\mathbf{s_2v_1u_1t}$ path, and other players use  $\mathbf{s_1v_{i}u_{i}t}$ paths for $2 \leq i \leq n$. The total cost of the optimum is $n+\sqrt{n}$. If any other strategy profile was an equilibrium it would result in an unbounded total cost and give an unbounded PoA. Therefore, there should exist an optimum strategy profile $\s^*$ that is a Nash equilibrium. Assume w.l.o.g. that in $\s^*$, player $n+1$ uses the right edge, player $1$ uses the $\mathbf{s_2v_1u_1t}$ path, and for each $2 \leq i \leq n$, player $i$ uses the $\mathbf{s_1v_{i}u_{i}t}$ path. 
	
	Now we consider two cases:
	
	\textbf{Case 1:} Player $1$ is charged at most $\sqrt{n}$ for the $\mathbf{s_2v_1u_1t}$ path.
	
	In this case we consider the instance in which only the first $n$ players use the network. The strategy profile in which player $1$ uses the $\mathbf{s_2v_1u_1t}$ path, and for each $2 \leq i \leq n$, player $i$ uses the $\mathbf{s_1v_{i}u_{i}t}$ path should be an equilibrium. The reason is that since $\s^*$ is a Nash equilibrium, no player has an incentive to deviate to any other path but the right one, and the only player that might have the incentive to deviate to the right edge is player $1$. Since player $1$ pays less than $\sqrt{n}$ in $\mathbf{s_1v_{i}u_{i}t}$, she does not have the incentive to deviate either, and hence this strategy profile is an equilibrium with the total cost of at least $n$. However, the optimal strategy for these players is for player $1$ to use the right edge and others to use the $\mathbf{s_1v_iu_{i-1}t}$ paths, with total cost $\sqrt{n}$. Threfore, we have PoA $\geq \sqrt{n}$ for this case.
	
	\textbf{Case 2:} Player $1$ is charged more than $\sqrt{n}$ for the $\mathbf{s_2v_1u_1t}$ path.
	
	In this case we consider a single player who wants to connect $s_2$ to $t$. The optimal strategy for such player costs $1$, but with the assumption that she should pay at least $\sqrt{n}$, and hence we have PoA $\geq \sqrt{n}$ in this case too.
\end{proof}

\end{document}